\documentclass{article}

\usepackage{arxiv}

\usepackage[utf8]{inputenc} 
\usepackage[T1]{fontenc}    
\usepackage{hyperref}       
\usepackage{url}            
\usepackage{booktabs}       
\usepackage{amsfonts}       
\usepackage{nicefrac}       
\usepackage{microtype}      
\usepackage{lipsum}         
\usepackage{graphicx}
\usepackage{natbib}
\usepackage{doi}
\usepackage{hyperref}
\usepackage{float}
\usepackage{url}            
\usepackage{booktabs}       
\usepackage{amsfonts}       
\usepackage{amssymb}        
\usepackage{nicefrac}       
\usepackage{microtype}      
\usepackage{bm}				
\usepackage{mathrsfs}		
\usepackage{cite}			
\usepackage{graphicx}		
\usepackage{mathtools}		
\usepackage{algpseudocode}  	
\usepackage{algorithm}
\usepackage{caption}  

\usepackage{amsthm}			
\usepackage{enumitem}		
\usepackage{multirow}       
\usepackage{tabularx}	    
\usepackage{hhline}
\usepackage{arydshln}		
\usepackage{wrapfig}
\usepackage{lipsum}
\usepackage{diagbox}
\usepackage{soul}
\usepackage{amsthm}

\usepackage{algpseudocode}

\usepackage{amssymb}

\theoremstyle{plain} 
\newtheorem{proposition}{Proposition}

\newtheorem{theorem}{Theorem}
\newtheorem{lemma}{Lemma}

\newtheorem{assumption}{Assumption}
\newtheorem{remark}{Remark}

\title{Arbitrage-Free Multi-Maturity Risk-Neutral Marginals}

\date{}

\author{
{\large Hao Qin$^{\dagger}$, Ruozhong Yang$^{\dagger}$,  \href{https://orcid.org/0009-0009-4556-8664}{\includegraphics[scale=0.06]{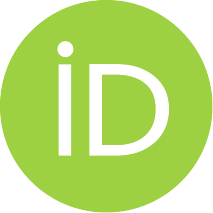}\hspace{1mm}Charlie Che$^{\dagger\ddagger}$}, and Liming Feng$^{\dagger}$\thanks{Corresponding author. Email: fenglm@illinois.edu.}}\\[1ex]
$^{\dagger}$Department of Industrial and Enterprise Systems Engineering, University of Illinois Urbana-Champaign, Illinois, United States\\
$^{\ddagger}$Quantitative Trading \& Research, JPMorgan Chase \& Co., New York, United States
}
\begin{document}
\maketitle

\begin{abstract}
Many quantitative finance methods and applications are formulated in terms of option-implied risk-neutral marginals rather than directly in terms of option prices. Representative examples include martingale optimal transport, Bass local-volatility calibration, scenario analysis, and option-implied tail-risk measurement. The desired risk-neutral marginals should define a genuine probability law on the entire support, reproduce the input arbitrage-free option prices exactly, be free of butterfly and calendar arbitrage, and admit efficient evaluation of the density, distribution function, and quantiles, as well as Monte Carlo sampling. Existing methods typically optimize only a subset of these properties, depending on their intended purpose. This leaves a gap between upstream arbitrage-free option prices and the readily usable risk-neutral marginals required by downstream applications. We propose an explicit construction of risk-neutral marginals from discrete arbitrage-free option prices. On the observed strike range, probability mass is assigned interval by interval to exactly reproduce the input option prices. Outside the observed range, closed-form power-law tails complete the distribution by satisfying price and slope boundary conditions and allocating the remaining probability mass. Butterfly- and calendar-arbitrage-freeness are guaranteed by construction. The construction is feasible by design and computationally efficient. The resulting marginal laws admit closed-form densities, distribution functions, quantiles, and efficient Monte Carlo sampling. Numerical experiments on synthetic SSVI data and S\&P~500 market data demonstrate that the proposed construction efficiently and robustly produces marginals satisfying all of these properties in practice.
\end{abstract}
\vspace{1em} 
\noindent\textbf{Keywords:} Risk-neutral marginals, 
Arbitrage-free construction,
Butterfly arbitrage,
Calendar arbitrage,
Exact fitting,
Power-law tail extrapolation

\section{Introduction}

A growing class of methods in quantitative finance is framed directly in terms of the option-implied risk-neutral marginal laws of the underlying asset. In these approaches, the primitive input is a prescribed family of marginal laws, rather than the option quotes from which those laws are recovered.

The main examples fall into three groups. The first group consists of methods induced by martingale optimal transport (MOT). In the static MOT problem, the prescribed marginals define the feasible set of martingale measures over which model-independent price bounds are computed \citep{beiglboeck2013model, henry2017model}. Dynamic MOT-induced methods use the same marginal inputs constructively.  Martingale Benamou--Brenier (MBB) problems build martingales connecting prescribed marginals \citep{backhoff2020mbb}, while Bass local-volatility calibration uses them in a fixed-point procedure to construct calibrated dynamics \citep{conzehenry2021bass, acciaio2024calibration}. The second group is scenario analysis, where probabilities of stress regions are read directly from an option-implied marginal \citep{gemmill2000useful}. The third group is option-implied tail-risk measurement, where tail payoffs or loss functionals are integrated against the marginal law \citep{bollerslev2011tails, andersen2017short}. In each case, the working object is a marginal law, and the downstream output is only as good as that input. 

This observation also clarifies where marginal construction enters the usual market pipeline. A common workflow starts from raw option quotes, builds a de-arbitraged implied volatility surface, converts that surface back into call prices, and then differentiates the price curve to obtain a density. For the applications above, the downstream object is the marginal law, not the volatility parametrization. It is an intermediate representation used to organize and smooth option information. Once the upstream step has produced an arbitrage-free call-price curve, or across maturities an arbitrage-free call-price grid, the information needed for marginal construction is already present in the prices.

Existing methods differ in the upstream representation through which option observables are organized before marginal laws are obtained. In one line of work, a parametric or structured surface is built first, such as an SVI-type parametrization \citep{gatheral2014arbitrage}. Once the surface is fixed, the associated marginal is obtained from the strike curvature of the call-price curve. This gives a continuous object, but feasibility and exact price matching are limited by the chosen surface family. A second line of work constructs arbitrage-free option-price, implied-volatility, or state-price-density (SPD) representations from discrete quotes.  Recent examples include \citep{sanos2025, deschatres2026CVI, lu2021sieve}. These representations are useful upstream objects, but they are not yet the marginal laws required downstream. Turning them into well-defined probability laws still requires a numerical curvature step, tail completion, and a normalization or adjustment step.

This leaves a gap between upstream surface construction and downstream marginal-based methods. A fitted surface can provide arbitrage-free prices, but a downstream solver requires a marginal family with its own set of properties: each slice must be a probability law, each observed quote must be repriced exactly, the tails must carry the correct residual mass, and the family must be ordered across maturities with no arbitrage. These are distribution-level requirements on marginals, and are not automatically delivered by upstream representations.

We supply this missing layer through an explicit construction that is anchored at the level of call-price slopes and realized through non-negative curvature. For a single maturity, we first estimate admissible slopes at the observed strikes. The endpoint slope conditions encode total probability mass, while the slope increment across each strike interval determines the marginal mass assigned to that interval. Conditional on these slopes, we then construct the call-price curvature $C''$ as a non-negative piecewise-constant function. Integrating this curvature recovers the prescribed slopes and then the call-price curve, so each observed quote is matched exactly.

The construction also completes the unquoted regions of the marginal law. Beyond the observed strike range, the boundary slopes determine the residual probability mass that must be placed in the tails. We attach closed-form power-law tails whose parameters match the boundary price and slope, allocate the residual mass, preserve monotone decay, and keep the required moments finite. This tail layer is flexible and can be adjusted to the practitioner's preference, because it couples to the interior only through boundary prices and slopes. Its power-law form also gives wing behavior consistent with Lee's moment formula.

Across maturities, the main additional constraint is calendar no-arbitrage. We enforce it sequentially. After one maturity is constructed, its call-price curve provides the lower bound for the next maturity under forward moneyness alignment. The next maturity is then constructed interval by interval. If the single-maturity construction already stays above this lower bound, we keep it unchanged. If not, we modify only the problematic interval by using a three-piece curvature construction. This adds the flexibility needed to restore calendar no-arbitrage while keeping the change local. Tail consistency across maturities is handled by a small global optimization over boundary slopes, with only maturity-level variables rather than a full strike--maturity grid. Overall, the multi-maturity construction mostly remains local and sequential, rather than becoming a large-scale global surface optimization.

Following this construction, our output marginal family delivers five main properties.

\begin{enumerate}
\item \textbf{Valid probability laws.} Each maturity defines a non-negative unit-mass density on the full price domain.

\item \textbf{Arbitrage freedom across strikes and maturities.} Non-negative curvature rules out butterfly arbitrage. The sequential multi-maturity construction enforces calendar no-arbitrage pointwise across maturities.

\item \textbf{Exact quote consistency.} The constructed call-price curve passes through every observed quote. The method does not alter, refit, or smooth the input prices.

\item \textbf{Robust feasibility.} For arbitrage-free inputs satisfying a mild input calendar condition, we prove that the required construction exists. The method therefore avoids numerical feasibility failure and post-hoc repair.

\item \textbf{Practical efficiency.} The body construction is explicit and decomposes across strike intervals, so each maturity can be built by solving small independent subproblems. Across maturities, the procedure is sequential. The only global step is a small optimization over tail boundary slopes, with variables indexed by maturity rather than by the full strike--maturity grid.

\end{enumerate}
\label{sec:introduction-desiderata-anchor}

Because the construction is closed-form on every piece, the resulting marginals are directly usable by the applications above: the density, the cumulative distribution, and the quantile function are available by direct evaluation, and Monte Carlo samples are straightforward to draw.

The remainder of the paper is organized as follows.
Section~\ref{sec:litreview} reviews the related literature.
Section~\ref{sec:prelim} sets up the problem and our construction strategy.  Sections~\ref{sec:single} and~\ref{sec:multi} develop the construction under a single maturity and then across maturities.  Section~\ref{sec:numerical} reports the numerical
experiments on a synthetic SSVI surface and on S\&P~500 quotes.
Section~\ref{sec:conclusion} concludes.  Proofs are collected in the Appendix.	

\section{Literature Review}\label{sec:litreview}

We organize the literature by the role each body of work plays relative to our construction.  Upstream methods turn raw option quotes into arbitrage-free prices or surfaces, which serve as the input to our procedure (Section~\ref{sec:lit-vol-surface}).  Downstream methods take risk-neutral marginals as primitive inputs, and their input requirements determine the properties our construction must deliver (Section~\ref{sec:lit-mot}).  A third line of work is closest to ours in object but differs in target: it estimates or parametrizes risk-neutral densities from option prices (Section~\ref{sec:lit-density}).  Our construction sits between the first two and is different from the third.  It is a marginal-construction layer that turns arbitrage-free prices into marginal laws suitable as direct inputs to downstream methods.

\subsection{Upstream Arbitrage-Free Price and Surface Construction}\label{sec:lit-vol-surface}

A large literature builds arbitrage-free option-price or implied-volatility representations from market quotes, and these methods are organized by the object they fit.  Parametric surface models specify a volatility surface family and impose sufficient conditions for static arbitrage-freeness within it, with SSVI as the standard example \citep{gatheral2014arbitrage}.  Interpolation-based methods build an arbitrage-free call-price or local-volatility surface directly from observed prices \citep{kahale2004, andreasen2011volatility, lefloch2023qlvg}.  Constrained formulations enforce no-arbitrage through optimization. Convex volatility interpolation \citep{deschatres2026CVI} fits an implied-volatility surface using a variance-space parametrization and quadratic programming, while SANOS \citep{sanos2025} represents call prices as convex combinations of Black--Scholes payoffs calibrated by linear programming.  Neural-operator smoothers with no-arbitrage regularization have also been proposed for implied-volatility surfaces \citep{wiedemann2025ods}. These methods provide upstream arbitrage-free price or volatility inputs. In our framework, they can serve as tools for producing the price inputs on which the marginal construction starts. They do not, by themselves, complete the distribution-level object required by downstream marginal-based methods. 

\subsection{Downstream Marginal-Consuming Methods}\label{sec:lit-mot}

Several methods take the marginal family itself as the primitive input.  Martingale optimal transport is the clearest case.  It computes model-free price bounds over all martingale measures consistent with prescribed marginals \citep{hobson2011, beiglboeck2013model, henry2017model}.  Dynamic constructions in the same family, including martingale Benamou--Brenier formulations and numerical transport schemes, also start from fixed marginal laws \citep{backhoff2020mbb, guo2020}.  Bass local-volatility calibration is a closely related case, in which the calibrated dynamics solve a fixed-point system built around the prescribed marginals \citep{conzehenry2021bass, acciaio2024calibration,hao-basslv}.  These applications impose strong requirements: the marginals must reprice the quoted instruments exactly, and the stability of the iteration depends on their arbitrage-freeness and tail behavior.

Other applications use the same object less stringently.  Implied-distribution studies use option-implied marginals to infer market expectations around stress or event scenarios \citep{gemmill2000useful}.  Option-implied tail-risk studies use short-maturity option information to measure jump tails, rare-event compensation, and short-horizon downside risk \citep{bollerslev2011tails, andersen2017short}. These examples define the downstream role of our output.  The required strength of exact repricing, tail control, and cross-maturity consistency varies by application, but these distribution-level properties determine whether the marginals are usable as downstream inputs.

\subsection{Risk-Neutral Density Estimation from Option Prices}\label{sec:lit-density}

The literature closest to ours estimates or parametrizes a risk-neutral density from option prices.  Its target differs from ours: it recovers a density or a smooth state-price representation, whereas we construct a family of admissible marginals for downstream solvers.  We group it by how the density is obtained.

One route follows the Breeden--Litzenberger identity \citep{breeden1978prices}: a smooth price or implied-volatility curve is fitted first, and the density is then obtained by differentiation.  Examples include quadratic strike or delta-space fits \citep{shimko1993bounds, malz1997}, SVI-type specifications \citep{gatheral2014arbitrage}, and nonparametric or shape-constrained smoothing methods \citep{ait1998nonparametric, bliss2002testing, fengler2009arbitrage}.  Another route specifies or estimates the density directly and calibrates option prices by integration, using mixtures, expansion methods, or sieve estimators \citep{melick1997, gemmill2000useful, jarrow1982approximate, xiu2014hermite, lu2021sieve}.  Tail extrapolation is a recurring component of these density-estimation procedures, with arbitrage-free extrapolation and moment-based wing restrictions studied in \citet{brunner2003arbitrage}, \citet{lee2004moment}, \citet{benaim2009regvar}, and \citet{jacquierraval2023logmoment}.  These methods provide important tools for density or state-price estimation.  Our focus is different: we construct marginal laws directly from arbitrage-free prices, with exact repricing and local feasibility built into the construction.

Table~\ref{tab:comparison} summarizes how representative method families align with the marginal-construction target considered here. The comparison is target-specific: it does not evaluate these methods for the purposes for which they were designed, but indicates whether they directly provide the admissible marginal inputs required by downstream marginal-based applications.

\begin{table}[ht]
\centering
\small
\begin{tabular}{lcccc}
    \toprule
    Method family 
    & Valid marginal 
    & Exact fit 
    & Calendar consistency 
    & Feasible construction \\
    \midrule
    Parametric arbitrage-free surfaces
    & $\checkmark$ & $\triangle$ & $\checkmark$ & $\triangle$ \\
    Arbitrage-free price/surface interpolation
    & $\triangle$  & $\checkmark$ & $\triangle$  & $\triangle$ \\
    Differentiation-based density extraction
    & $\triangle$  & $\triangle$  & $\times$     & $\triangle$ \\
    Direct density parametrizations
    & $\checkmark$ & $\triangle$  & $\triangle$     & $\triangle$ \\
    \textbf{Our paper}
    & $\checkmark$ & $\checkmark$ & $\checkmark$ & $\checkmark$ \\
    \bottomrule
\end{tabular}
\caption{Representative method families assessed for the marginal-construction target studied in this paper.  The comparison is not a judgment of these methods in their intended settings; it only indicates whether they directly provide the marginal inputs required by downstream marginal-based applications.  $\checkmark$: delivered by construction; $\triangle$: method-dependent or requiring additional post-processing; $\times$: not generally guaranteed.}

\label{tab:comparison}
\end{table}

\section{Preliminaries and Construction Strategy}\label{sec:prelim}

The construction starts from the derivative structure of the call curve. The option-price slope $C'$ records discounted tail probabilities, and the curvature $C''$ records discounted probability density. Thus, once the endpoint slopes are fixed, assigning non-negative curvature across strike intervals is the same as assigning non-negative marginal mass. This is the mathematical reason why an arbitrage-free call-price grid is the natural input for marginal construction.

At each maturity, the object we construct is a risk-neutral marginal law, represented throughout by its call-price curve and the implied density $f=e^{rT}C''$. The construction proceeds in two stages: a single maturity (Section~\ref{sec:single}), then the coupling across maturities (Section~\ref{sec:multi}). Throughout, the market is arbitrage-free under the risk-neutral measure $\mathbb{Q}$, the risk-free rate $r$ and the continuous dividend yield $q$ are constant, and $S_T>0$ almost surely. The input is a set of discrete arbitrage-free European call prices; the put-based case is analogous. Any upstream method of Section~\ref{sec:lit-vol-surface}, or raw quotes validated directly against no-arbitrage inequalities, can supply this input.

Before constructing the intervals and tails, we put the market inputs into a common normalized form and record the boundary values implied by no-arbitrage. This initialization makes the construction independent of the scale of the underlying, while retaining the effects of interest rates and dividends through the boundary conditions.

We write $k_1<k_2<\cdots<k_n$ for the observed strikes at a given maturity, $C_{k_i}$ for the corresponding call prices, and $T$ for the time to maturity; in the multi-maturity case the maturities are $T_j$, $1\leq j\leq m$. The point $k_0=0$ is introduced as the left boundary node rather than an additional market quote; it fixes the left endpoint of the call-price curve and the initial slope from which probability mass is accumulated.

\paragraph{Dividends, normalization, and boundary values.}
Under $\mathbb{Q}$, the expected stock price at maturity is
\[
\mathbb{E}^{\mathbb{Q}}\left[S_T\right] = S_0\, e^{(r-q) T},
\]
and the call price at strike $K$ is
\[
C(K) = e^{-r T}\, \mathbb{E}^{\mathbb{Q}}\left[\left(S_T-K\right)^{+}\right] = e^{-r T} \int_K^{\infty}\left(x-K\right) f\left(x\right)\, dx,
\]
where $f$ is the risk-neutral density of $S_T$.  Differentiating twice in $K$ gives the Breeden--Litzenberger identity
$
\frac{d^2C}{dK^2}(K) = e^{-rT} f(K).$
The dividend yield does not alter this identity; it enters through the boundary values of $C$ and $C'$.  Setting $K = 0$ in the two displays above and using $S_T > 0$ gives
\[
C(0) = e^{-rT}\,\mathbb{E}^{\mathbb{Q}}[S_T] = S_0\, e^{-qT}, \qquad C'(0) = -e^{-rT}\,\mathbb{Q}(S_T > 0) = -e^{-rT}.
\]

We normalize all prices and strikes by the spot.  Writing $\tilde{S}_T := S_T/S_0$, $\tilde{k} := K/S_0$, and $\tilde{C}(\tilde{k}) := C(S_0 \tilde{k})/S_0$, the Breeden--Litzenberger identity keeps its form, with $\tilde{f}(\tilde{k}) := S_0\, f(S_0 \tilde{k})$ the density of $\tilde{S}_T$, and the boundary values become
\[
\tilde{C}(0) = e^{-qT}, \qquad \tilde{C}'(0) = -e^{-rT}.
\]
From here on we work in normalized units and drop the tildes.  All outputs convert back to the original units through
\[
C(K) = S_0\, \tilde{C}(K/S_0), \qquad f(K) = \tilde{f}(K/S_0)/S_0,
\]
so the normalization is invertible and loses no information.

\paragraph{Reduction to boundary conditions on the slope.}
Our construction should guarantee a well-defined density $f$ on $[0,\infty)$: a non-negative function with $\int_0^{\infty} f(k)\,dk = 1$ that is consistent with the observed call prices via $C''(k) = e^{-rT} f(k)$.  Fitting $C(k)$ first and differentiating twice is numerically fragile, and the result need not be non-negative or have unit mass. We instead anchor the construction at the level of the call-price slope.  Given valid slopes at the strike nodes, the curvature on each interval is chosen as a non-negative piecewise function that integrates to the required slope increment, and the call-price curve is then recovered by integration. Continuity of $C'$ and convexity of $C$ are then automatic.

Under this strategy the entire requirement that $f$ is a valid probability density reduces to a pair of boundary conditions on $C'(\cdot)$.  By the fundamental theorem of calculus,
\[
\int_{0}^{\infty} f(k)\, dk
= e^{rT}\!\int_{0}^{\infty} C''(k)\, dk
= e^{rT}\!\left(\lim_{k\to\infty} C'(k) - \lim_{k\to 0^+} C'(k)\right),
\]
so imposing
\[
\lim_{k\to 0^+} C'(k) = -e^{-rT}, \qquad \lim_{k\to\infty} C'(k) = 0
\]
is exactly equivalent to $\int_0^{\infty} f(k)\,dk = 1$.  The construction problem is thus reduced to choosing non-negative interval curvatures whose integrals match the prescribed slope increments and whose recovered call-price curve matches the observed prices.

\section{Construction under a Single Maturity}\label{sec:single}

A single maturity carries $n-1$ interior intervals and two tails.  In principle one could fit them all at once, with a global parameter vector of size $O(n)$.  We instead proceed interval by interval and use a two-piece curvature construction for two reasons.  First, locality: each interval matches only its own endpoint prices and slopes, so the subproblems decouple and solve independently, in parallel.  This suits single-maturity data, where the quotes at one expiry already pin down the local mass.  Second, solvability: the endpoint slopes already fix the total increase in slope across the interval, while the endpoint price is a further constraint.  A single constant curvature construction lacks the freedom to meet both; a two-piece curvature, with a free split point, supplies exactly what is needed to match the price while keeping curvature non-negative.  This idea extends naturally to the multi-maturity case (Section~\ref{sec:three-piece}): one further constraint, the calendar floor, calls for one further piece.

Fix a maturity $T$.  The data are the observed strikes $k_1 < \cdots < k_n$ with call prices $C_{k_1}, \ldots, C_{k_n}$, supplemented at the boundary node $k_0 = 0$ by the exact values $C(0) = e^{-qT}$ and $C'(0) = -e^{-rT}$ derived above.  The construction proceeds in three steps: estimating the slope $C'$ at every observed strike (Section~\ref{sec:Ai-Lagrange}), building $C''$ on each interval between adjacent observed strikes (Sections~\ref{sec:second-order-construction} and~\ref{sec:two-piece-opt}), and completing the two tail regions $[0, k_1]$ and $[k_n, \infty)$ (Sections~\ref{sec:powerlaw_tail} and~\ref{sec:left_tail_pow}).  Given the reduction above, we impose $C'(0)=-e^{-rT}$ and $C'(\infty)=0$ as boundary conditions; together with $C''(k)\ge 0$, they make $f = e^{rT} C''$ a valid probability density.  For adjacent strikes $k_i, k_{i+1}$ ($1\leq i\leq n-1$), we set $A_i := C'(k_{i+1}) - C'(k_i)$, which is non-negative by convexity.

\subsection{Estimating $C'(k_i)$ and $A_i$ at Observed Strikes}\label{sec:Ai-Lagrange}

The slope $C'(k_i)$ is not observed directly, and finitely many call prices do not uniquely determine it. We therefore select one estimate $\widehat{C'}(k_i)$ at each observed strike and use these estimated slopes as boundary data for the interval curvature construction. This subsection describes how these estimates are obtained.

We use $C'(k)$ for the true, unobserved slope and $\widehat{C'}(k_i)$ for its estimate at an observed strike $k_i$.  The estimated slopes define the interval slope increments $A_i := \widehat{C'}(k_{i+1}) - \widehat{C'}(k_i)$ for $i = 1, \ldots, n-1$.  In later interval-construction sections, these estimates are used as fixed boundary data and written $C'_{k_i} := \widehat{C'}(k_i)$.

We estimate $C'(k_i)$ by applying Lagrange finite-difference formulas to the observed call prices. Recall that the Lagrange basis polynomial through nodes $x_0, \ldots, x_k$ is $\ell_j(\xi) = \prod_{i=0,\, i \neq j}^k \frac{\xi - x_i}{x_j - x_i}$, so the polynomial through the prices is $p(\xi) = \sum_j C(x_j)\,\ell_j(\xi)$, and differentiating it at a node $x_m$ gives the slope estimate $p'(x_m) = \sum_j C(x_j)\,\ell_j'(x_m)$.

Interior strikes pose no difficulty: for $k_i$ with $i = 3, \ldots, n-2$, we use the five-point estimation on the window $\{k_{i-2}, k_{i-1}, k_i, k_{i+1}, k_{i+2}\}$.  The two boundaries require one-sided treatment, and the two sides are handled differently.  We treat them in turn.

At the right boundary, no observed strikes lie beyond $k_n$, so the estimates are one-sided: a four-point estimation on $\{k_{n-3}, k_{n-2}, k_{n-1}, k_n\}$ for $\widehat{C'}(k_{n-1})$, and a backward three-point estimation on $\{k_{n-2}, k_{n-1}, k_n\}$ for $\widehat{C'}(k_n)$.  Every node in these windows is an observed strike.  No information about $C$ is available beyond $k_n$, and none is needed.

The left boundary differs in one essential respect: exact information is available to the left of $k_1$, namely the boundary values $C(0) = e^{-qT}$ and $C'(0) = -e^{-rT}$.  How to use it depends on how far the boundary node $0$ lies from $k_1$.  A purely forward three-point estimate on the observed strikes $\{k_1, k_2, k_3\}$ ignores the boundary entirely.  In the typical case the gap $k_1 - 0 = k_1$ is one to two orders of magnitude larger than the first body spacing $k_2 - k_1$, so $0$ is distant, and a quadratic fit through the asymmetric window $\{0, k_1, k_2\}$ that includes it imposes artificial curvature on the nearly linear region $[0, k_1]$.

We therefore route the boundary information through the left-tail extrapolation of Section~\ref{sec:left_tail_pow} instead of through a single distant node.  The bootstrap carries the boundary data to a node close to $k_1$.  A one-step left tail, built from the boundary data and the observed prices, already encodes the boundary values $C(0)$ and $C'(0)$ that the left-tail construction matches; it is used only to estimate the slope, not as the final left tail.  We evaluate it at an auxiliary node $k_{\rm aux} \in (0, k_1)$ placed one body spacing below $k_1$, and re-estimate the slope on the resulting equally spaced window $\{k_{\rm aux}, k_1, k_2\}$.  Placing $k_{\rm aux} = k_1 - h_1$ inside $(0, k_1)$ requires the body spacing $h_1 = k_2 - k_1$ to be smaller than the boundary gap $k_1$; when instead $h_1 \ge k_1$, the boundary node $0$ is close enough to include directly, and Algorithm~\ref{alg:c-k-boundary} estimates $C'(k_1)$ from the four-point window $\{0, k_1, k_2, k_3\}$.  This keeps the boundary information while avoiding the asymmetry bias of the previous paragraph.  Algorithm~\ref{alg:c-k-boundary} makes this precise.  The slope at the next strike $k_2$ then uses the four-point estimation on $\{k_{\rm aux}, k_1, k_2, k_3\}$, as in the four-point treatment of $k_{n-1}$ at the right boundary.

\begin{algorithm}[ht]
    \caption{Boundary slope estimate $\widehat{C'}(k_1)$ at the smallest observed strike}\label{alg:c-k-boundary}
    \begin{algorithmic}[1]
        \Require Observed strikes $k_1 < k_2 < k_3$ with spacings $h_1 := k_2 - k_1$ and $h_2 := k_3 - k_2$, call prices $C_{k_1}, C_{k_2}, C_{k_3}$, market data $r, q, T$, and the boundary values $C(0) = e^{-qT}$ and $C'(0) = -e^{-rT}$.
        \If{$h_1 \geq k_1$}  \Comment{the boundary node $0$ is within one body spacing of $k_1$}
            \State \Return the four-point Lagrange-derivative estimate of $C'(k_1)$ on $\{0, k_1, k_2, k_3\}$.
        \EndIf
        \State Bootstrap.  Compute the forward three-point Lagrange-derivative estimate on $\{k_1, k_2, k_3\}$,
        \begin{equation}\label{eq:c-k-bootstrap}
            \widehat{C'}(k_1)^{(0)} := -\frac{2h_1 + h_2}{h_1\,(h_1 + h_2)}\,C_{k_1} + \frac{h_1 + h_2}{h_1\,h_2}\,C_{k_2} - \frac{h_1}{h_2\,(h_1 + h_2)}\,C_{k_3}.
        \end{equation}
        \State Build a temporary left segment.  To create a price point just left of $k_1$, build a temporary segment $C_{\rm tail}$ on $[0, k_1]$ from the boundary data $C(0), C'(0)$ together with the values $C(k_1) = C_{k_1}$ and $\widehat{C'}(k_1)^{(0)}$ at $k_1$.  This is the power-law tail of Proposition~\ref{prop:left_pow}, given in closed form by~\eqref{eq:left_pow_compact}.
        \State Synthesize the auxiliary node.  Set
        \begin{equation}\label{eq:k-aux-def}
            k_{\rm aux} := k_1 - h_1 \in (0, k_1),
        \end{equation}
        and write $C_{k_{\rm aux}} := C_{\rm tail}(k_{\rm aux})$.
        \State Recompute.  Compute the centered three-point Lagrange-derivative estimate on the equally spaced window $\{k_{\rm aux}, k_1, k_2\}$,
        \begin{equation}\label{eq:c-k-refined}
            \widehat{C'}(k_1) := \frac{C_{k_2} - C_{k_{\rm aux}}}{2\,h_1}.
        \end{equation}
        \State \Return $\widehat{C'}(k_1)$.
    \end{algorithmic}
\end{algorithm}

\paragraph{Enforcing the no-arbitrage bracket.}
At each observed strike, butterfly-freedom of the input data demands
\begin{equation}\label{eq:no-arb-bracket}
    \frac{C_{k_i}-C_{k_{i-1}}}{k_i-k_{i-1}} \leq \widehat{C'}(k_i) \leq \frac{C_{k_{i+1}}-C_{k_i}}{k_{i+1}-k_{i}},
\end{equation}
with the natural one-sided versions at the boundary nodes.  A centered three-point Lagrange-derivative estimate respects this bracket automatically.  Higher-order formulas, and any estimate evaluated at an endpoint of its window such as the bootstrap value~\eqref{eq:c-k-bootstrap} at $k_1$, can fall outside it when the local convexity is sharp or the grid is strongly non-uniform.

We enforce the no-arbitrage slope constraint by projecting each preliminary estimate onto the interval implied by the neighboring chord slopes.  This correction is needed only when a higher-order or one-sided estimate falls outside the interval.  At $k_1$, the estimate from Algorithm~\ref{alg:c-k-boundary} is projected in the same way; in the tail-anchored case ($h_1 < k_1$), if the refined estimate~\eqref{eq:c-k-refined} violates the one-sided constraint, we use the bootstrap estimate $\widehat{C'}(k_1)^{(0)}$ instead and project it.  Therefore, all final slope estimates are consistent with the discrete butterfly-free input prices.

\subsection{Second-Order Derivative Construction for Main Observations}\label{sec:second-order-construction}
Given the slope estimation, we now build the curvature $C''$ on each interval between adjacent observed strikes.  Butterfly freedom requires $C''(k)\geq 0$ for all $k\geq 0$.  Without further information about the market, the simplest non-negative shape on $[k_i, k_{i+1}]$ with enough freedom to match the interval data is piecewise constant with one interior breakpoint:
\begin{equation}\label{eq:two-piece-Cpp}
C^{\prime \prime}(k)= \begin{cases}h_i, & \text { for } k \in\left[k_i,\, k_i+\beta_i \Delta k_i\right], \\[3pt] \dfrac{A_i-h_i \beta_i \Delta k_i}{\left(1-\beta_i\right) \Delta k_i}, & \text { for } k \in\left(k_i+\beta_i \Delta k_i,\, k_{i+1}\right],\end{cases}
\end{equation}
where $\beta_{i} \in[0,1]$, $\Delta k_i := k_{i+1}-k_i$, and $h_i\geq 0$.  The interval data are the endpoint slopes $C'(k_i)=C'_{k_i}$ and $C'(k_{i+1})=C'_{k_{i+1}}$, summarized by the slope increment $A_i=C'_{k_{i+1}}-C'_{k_i}$, together with the endpoint prices $C(k_i)=C_{k_i}$ and $C(k_{i+1})=C_{k_{i+1}}$; the right endpoint price is imposed as a constraint in the next subsection.  Integrating~\eqref{eq:two-piece-Cpp} once from $C'(k_i) = C'_{k_i}$ gives the first derivative,
$$
C^{\prime}(k)= \begin{cases}C_{k_i}^{\prime}+h_i\left(k-k_i\right), & \text { for } k \in\left[k_i, k_i+\beta_i \Delta k_i\right], \\ \left(C_{k_i}^{\prime}+h_i \beta_i \Delta k_i\right)+\frac{A_i-h_i \beta_i \Delta k_i}{\left(1-\beta_i\right) \Delta k_i}\left(k-k_i-\beta_i \Delta k_i\right), & \text { for } k \in\left(k_i+\beta_i \Delta k_i, k_{i+1}\right],\end{cases}
$$
and integrating once more from $C(k_i) = C_{k_i}$ gives the price curve,
$$
C(k)= \begin{cases}C_{k_i}+C_{k_i}^{\prime}\left(k-k_i\right)+\frac{1}{2} h_i\left(k-k_i\right)^2, & \text { for } k \in\left[k_i, k_i+\beta_i \Delta k_i\right], \\ C_{k_i}+C_{k_i}^{\prime}\left(k-k_i\right)+\frac{1}{2} h_i\left(\beta_i \Delta k_i\right)^2+h_i \beta_i \Delta k_i\left(k-k_i-\beta_i \Delta k_i\right) & \\ +\frac{1}{2}\left(\frac{A_i-h_i \beta_i \Delta k_i}{\left(1-\beta_i\right) \Delta k_i}\right)\left(k-k_i-\beta_i \Delta k_i\right)^2, & \text { for } k \in\left(k_i+\beta_i \Delta k_i, k_{i+1}\right].\end{cases}
$$

\begin{remark}[Curvature design]\label{rmk:step-density}
    As discussed above, the construction is anchored at the level of call-price slopes.  Conditional on the estimated endpoint slopes, we choose the most direct and computationally efficient curvature representation that can match the interval data.  Consequently, the implied body density $f = e^{rT} C''$ is a step function on $[k_1, k_n]$.  This is aligned with the marginal-construction target of the paper.  The downstream applications considered in Section~\ref{sec:lit-mot} take distribution-level inputs that do not require the density itself to be continuously differentiable.  However, density-level shape regularity can still be encouraged as a secondary criterion through the objective in the per-interval optimization below, and through the optional density-anchored tail upgrade in Section~\ref{sec:density_anchored_tail}.
\end{remark}

\subsection{Optimization Formulation}\label{sec:two-piece-opt}

By imposing the boundary condition $C(k_{i+1})=C_{k_{i+1}}$, we obtain the price-matching constraint:
\begin{equation}\label{eq:price-match}
    h\left(\beta_i, h_i\right)=C_{k_i}-C_{k_{i+1}}+C_{k_i}^{\prime} \Delta k_i+\frac{1}{2} h_i \beta_i (\Delta k_i)^2+\frac{A_i \Delta k_i}{2}\left(1-\beta_i\right)=0.
\end{equation}

Denote the second-piece curvature by $c_i = \frac{A_i - h_i \beta_i \Delta k_i}{(1-\beta_i)\Delta k_i}$. As a secondary shape criterion, we minimize the curvature jump between the two pieces while enforcing non-negative curvature on each piece.  The choice of objective $F$ is left to the practitioner (any other smoothness or shape preference can be substituted) and does not affect the existence of a feasible solution: the feasible set is determined by the price-matching equality together with $h_i\ge 0$, $c_i\ge 0$, and is non-empty whenever the input data are arbitrage-free (Theorem~\ref{thm:two-piece} below).  The complete formulation for each interval $[k_i, k_{i+1}]$ is:
\begin{equation}\label{eq:two-piece-opt}
    \begin{array}{ll}
        \displaystyle\min_{\beta_i,\, h_i} & F(\beta_i, h_i) = (c_i - h_i)^2 \\[6pt]
        \text{s.t.} & h(\beta_i, h_i) = 0, \\[3pt]
        & h_i \geq 0, \\[3pt]
        & c_i \geq 0, \\[3pt]
        & \beta_i \in (0, 1].
    \end{array}
\end{equation}

The constraints $h_i \geq 0$ and $c_i \geq 0$ guarantee $C''(k) \geq 0$ on the entire interval, which is equivalent to the absence of butterfly arbitrage. The objective $F = (c_i - h_i)^2$ penalizes the curvature discontinuity at the internal knot $k_i + \beta_i \Delta k_i$.

\begin{theorem}[Two-piece interval construction]\label{thm:two-piece}
    Fix an interval $[k_i, k_{i+1}]$, $1 \leq i \leq n-1$, whose boundary data $C_{k_i}, C'_{k_i}, C_{k_{i+1}}, C'_{k_{i+1}}$ are free of butterfly arbitrage.  Then:
    \begin{enumerate}[label=\textup{(\alph*)}]
        \item the feasible set of~\eqref{eq:two-piece-opt} is non-empty; 
        \item problem~\eqref{eq:two-piece-opt} admits an explicit closed-form solution, computable in $O(1)$ arithmetic operations with no numerical solver.
    \end{enumerate}
\end{theorem}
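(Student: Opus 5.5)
The plan is to reduce the problem to one scalar variable and then read off both the feasibility and the minimiser in closed form.

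\textbf{Reduction.} Write
\[
s_i := \frac{C_{k_{i+1}}-C_{k_i}}{\Delta k_i}, \qquad \delta_i := s_i - C'_{k_i}.
\]
Butterfly-freedom of the interval data, i.e.\ the bracket~\eqref{eq:no-arb-bracket} at both endpoints, gives
\[
C'_{k_i}\le s_i\le C'_{k_{i+1}}, \qquad\text{hence}\qquad 0\le \delta_i\le A_i.
\]
Introduce the first-piece mass $x_i := h_i\beta_i\Delta k_i$.

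The price-matching constraint~\eqref{eq:price-match}, after dividing by $\Delta k_i/2$, becomes
\[
x_i + A_i(1-\beta_i) = 2\delta_i .
\]
So $x_i = 2\delta_i - A_i(1-\beta_i)$ is determined by $\beta_i$. In terms of $x_i$ the curvatures are
\[
h_i = \frac{x_i}{\beta_i\Delta k_i}, \qquad c_i = \frac{A_i-x_i}{(1-\beta_i)\Delta k_i}.
\]
Therefore $h_i\ge0 \iff x_i\ge 0$ and $c_i\ge 0\iff x_i\le A_i$. The whole feasible set is thus the set of $\beta_i$ satisfying
\[
\max\!\Big(0,\,1-\tfrac{2\delta_i}{A_i}\Big)\;\le\;\beta_i\;\le\;\min\!\Big(1,\,2-\tfrac{2\delta_i}{A_i}\Big).
\]

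\textbf{Part (a).} For $A_i>0$ and $0<\delta_i<A_i$, the interval above has nonempty interior inside $(0,1)$. An explicit witness is
\[
\beta_i = 1-\frac{\delta_i}{A_i}\in(0,1), \qquad x_i=\delta_i .
\]
The degenerate cases need separate treatment.
\begin{itemize}
\item If $A_i=0$, then $\delta_i=0$, and any $\beta_i$ with $h_i=c_i=0$ is feasible.
\item If $\delta_i\in\{0,A_i\}$ with $A_i>0$, the admissible interval collapses to an endpoint where one piece has zero length and must carry positive mass.
\end{itemize}
The second case is the main obstacle: formally it requires $\beta_i\to0$ with $h_i\to\infty$, or a zero-length second piece. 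My first attempt would be to argue that strict butterfly-freedom of the inputs, together with the projection step of Section~\ref{sec:Ai-Lagrange} (projecting slightly into the interior of the bracket), gives $0<\delta_i<A_i$. If that is not available, the statement would need to be read with the convention that a slope estimate sitting exactly on a chord slope forces a linear price piece on the corresponding subinterval.

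\textbf{Part (b).} Substitute $x_i$ into the objective:
\[
c_i-h_i=\frac{(A_i-x_i)\beta_i-x_i(1-\beta_i)}{\beta_i(1-\beta_i)\Delta k_i}=\frac{A_i\beta_i-x_i}{\beta_i(1-\beta_i)\Delta k_i}=\frac{A_i-2\delta_i}{\beta_i(1-\beta_i)\Delta k_i}.
\]
The numerator does not depend on $\beta_i$, so two cases arise.
\begin{itemize}
\item If $\delta_i=A_i/2$, every feasible $\beta_i$ attains $F=0$. In particular $\beta_i=1/2$ does, giving constant curvature $h_i=c_i=A_i/\Delta k_i$.
\item Otherwise, minimising $F$ is the same as maximising $\beta_i(1-\beta_i)$ over the feasible interval. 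That concave problem is solved by projecting $1/2$ onto the interval:
\[
\beta_i^\star=\min\!\Big(\min(1,2-\tfrac{2\delta_i}{A_i}),\;\max\!\big(\tfrac12,\,\max(0,1-\tfrac{2\delta_i}{A_i})\big)\Big).
\]
\end{itemize}
Then set $h_i^\star = (2\delta_i-A_i(1-\beta_i^\star))/(\beta_i^\star\Delta k_i)$.

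This is a fixed number of arithmetic operations with no iterative solver, which gives the $O(1)$ closed form claimed in (b). The degenerate endpoint cases inherit whatever convention is adopted in (a).
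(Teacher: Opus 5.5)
Your argument for the main case is the paper's own proof. The paper's $r_i=\Delta C/(A_i\Delta k_i)$ is your $\delta_i/A_i$, and it derives:
\begin{itemize}
\item the same bracket $[\max(0,1-2r_i),\min(1,2-2r_i)]$ for $\beta_i$;
\item the same identity $c_i-h_i=A_i(1-2r_i)/(\beta_i(1-\beta_i)\Delta k_i)$;
\item a case split (i)--(iii) that is exactly your projection of $\tfrac12$ onto the bracket.
\end{itemize}
Your witness $\beta_i=1-\delta_i/A_i$ is just a concrete point of that bracket. For $A_i=0$ and for $0<\delta_i<A_i$ your proof is complete.

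The remaining gap is the one you flagged: $\delta_i\in\{0,A_i\}$ with $A_i>0$. Following the paper will not close it.
\begin{itemize}
\item For $\delta_i=0$, the paper asserts without argument that $A_i>0$ forces $\Delta C>0$, which silently excludes this case.
\item For $\delta_i=A_i$, it takes $\beta_i=1$ with $h_i=2A_i/\Delta k_i$. This matches the price but gives slope increment $h_i\Delta k_i=2A_i\neq A_i$ (formally $c_i=-A_i/0$), so it is not a feasible point. Do not adopt that convention.
\end{itemize}

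Under the closed-bracket reading of butterfly-freedom the statement is simply false at these endpoints. The closed projection of Section~\ref{sec:Ai-Lagrange} can genuinely produce such data. The bracket collapses to $\beta_i=0$ (resp.\ $\beta_i=1$), and the mass $A_i$ would have to be an atom at $k_i$ (resp.\ $k_{i+1}$), outside the finite-curvature class.

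The clean repair is to read ``boundary data free of butterfly arbitrage'' as ``realizable by a convex $C^1$ curve with these endpoint values and slopes.'' Under that reading:
\begin{itemize}
\item If $\delta_i=0$, the tangent at $k_i$ passes through $(k_{i+1},C_{k_{i+1}})$. By convexity the curve is affine on $[k_i,k_{i+1}]$, and continuity of $C'$ at $k_{i+1}$ gives $A_i=0$.
\item Symmetrically, $\delta_i=A_i$ makes the curve meet its tangent at $k_{i+1}$ at the point $k_i$, which again forces $A_i=0$.
\end{itemize}
Hence $A_i>0$ implies $0<\delta_i<A_i$, and your witness and projection formula finish (a) and (b).

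Your alternative, projecting slope estimates into the open bracket, also works. It requires strictly increasing chord slopes, though, and it is a change to the pipeline rather than a proof of the theorem as stated.
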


Proof can be found in Appendix~\ref{sec:app_lemma3}.

\begin{remark}[Computational complexity]\label{rmk:O1}
    Each interval is solved in $O(1)$ operations by the closed-form solution of Theorem~\ref{thm:two-piece}(b), so the body construction over all $n-1$ intervals costs $O(n)$, with no iterative solver, line search, or convergence criterion.  This is the principal computational advantage of the single-maturity construction.
\end{remark}

\subsection{Right-Tail Estimation}\label{sec:powerlaw_tail}

The body construction ends at the last quote $k_n$, but the marginal extends beyond it.  On $[k_n,\infty)$ we complete the law with a tail density $f$ satisfying two conditions: matching the boundary data $C_{k_n}$ and $C'(k_n)$, and preserving non-negativity with monotone decay.  The boundary slope fixes the probability mass beyond $k_n$, while the boundary call price fixes the discounted average excess payoff carried by that mass.  Thus the tail is not unconstrained: the last observed price and slope determine both how much mass remains in the right tail and how far, on average, that mass must extend beyond the last strike.

We use a power-law family as a direct way to encode this information.  Its scale matches the residual mass, while its index controls the rate at which the remaining mass decays away from $k_n$.  This gives a closed-form tail whose mass, boundary price, and boundary slope can be matched exactly.  The same index also determines the tail moment structure, making the construction compatible with Lee's moment formula, which links implied-volatility wing behavior to the moments of the underlying distribution \citep{lee2004moment}.  The resulting wing behavior is therefore inferred from the boundary data rather than imposed as a fixed tail-heaviness assumption.

Define $\lambda = -C'(k_n)\, e^{rT} \in (0,1]$, the tail probability mass beyond $k_n$.  The power-law tail density on $[k_n, \infty)$ is
\begin{align}\label{eq:powerlaw_simple}
    f(x) = \lambda \cdot \frac{\alpha - 1}{k_n} \left(\frac{k_n}{x}\right)^{\alpha}, \quad x \geq k_n,
\end{align}
where the tail index $\alpha > 2$ is uniquely determined by the price-matching condition:
\begin{align}\label{eq:alpha_match}
    \alpha = 2 + \frac{\lambda\, k_n \, e^{-rT}}{C_{k_n}}.
\end{align}

\begin{proposition}[Power-law tail properties]\label{prop:powerlaw}
    The density~\eqref{eq:powerlaw_simple} with $\alpha$ given by~\eqref{eq:alpha_match} satisfies:
    \begin{enumerate}
        \item \textbf{Price matching:} $e^{-rT}\int_{k_n}^{\infty}(x - k_n)f(x)\,dx = C_{k_n}$.
        \item \textbf{Derivative matching:} $-e^{-rT}\int_{k_n}^{\infty} f(x)\,dx = C'(k_n)$.
        \item \textbf{Tail mass:} $\int_{k_n}^{\infty} f(x)\,dx = \lambda < 1$, the inequality strict since $k_n > 0$.
        \item \textbf{Non-negativity and monotonicity:} $f(x) > 0$ and $f'(x) < 0$ for all $x > k_n$.
        \item \textbf{Finite first moment:} $\int_{k_n}^{\infty} x\, f(x)\,dx < \infty$, since $\alpha > 2$.
        \item \textbf{Asymptotic boundary slope:} $\lim_{k\to\infty} C'(k) = 0$, the right-side
        boundary condition required for the global density to integrate to one.
    \end{enumerate}
\end{proposition}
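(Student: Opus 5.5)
The proof is a direct computation with the explicit density~\eqref{eq:powerlaw_simple}. The order of steps is: mass (items 2 and 3), then price (item 1), then the qualitative properties (items 4--6). The only nontrivial point is checking that~\eqref{eq:alpha_match} is exactly the value of $\alpha$ that makes the price integral equal $C_{k_n}$. It also needs $\alpha>2$ so that all integrals converge. That inequality is immediate, since $\lambda>0$, $k_n>0$, and $C_{k_n}>0$ for arbitrage-free input; if $\lambda=0$ the tail is trivial. Positivity of $\lambda$ itself comes from $C'(k_n)<0$.

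\textbf{Mass.} For $\alpha>1$,
\[
\int_{k_n}^\infty \frac{\alpha-1}{k_n}\Bigl(\frac{k_n}{x}\Bigr)^{\alpha}dx=\frac{\alpha-1}{k_n}\cdot\frac{k_n}{\alpha-1}=1,
\]
so the tail mass is $\lambda$. The definition $\lambda=-C'(k_n)e^{rT}$ then gives item 2 directly. For item 3, I will use the bracket~\eqref{eq:no-arb-bracket} together with convexity on $[0,k_n]$. The slope is nondecreasing with $C'(0)=-e^{-rT}$, so $\lambda\le 1$. Strictness holds because a strictly positive strike $k_n$ leaves room for the chord from $0$, with slope $(C_{k_n}-e^{-qT})/k_n>-e^{-rT}$ under arbitrage-free inputs, to force $C'(k_n)>-e^{-rT}$. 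I expect this strictness claim to be the most delicate step. It relies on the input being strictly arbitrage-free (positive put value at $k_n$), so I would state that assumption explicitly and invoke the one-sided bracket at $k_n$.

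\textbf{Price.} I will substitute $x-k_n$ and use $\int_{k_n}^\infty x^{1-\alpha}dx=k_n^{2-\alpha}/(\alpha-2)$. This gives
\[
\int_{k_n}^\infty (x-k_n)f(x)\,dx=\lambda(\alpha-1)k_n^{\alpha-1}\Bigl(\frac{k_n^{2-\alpha}}{\alpha-2}-\frac{k_n^{2-\alpha}}{\alpha-1}\Bigr)=\frac{\lambda k_n}{\alpha-2}.
\]
Multiplying by $e^{-rT}$ and inserting~\eqref{eq:alpha_match} yields $C_{k_n}$. The same computation shows uniqueness, since $\alpha\mapsto \lambda k_n/(\alpha-2)$ is strictly monotone on $(2,\infty)$.

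\textbf{Remaining items.} Item 4 follows from $f(x)>0$ and $f'(x)=-\alpha f(x)/x<0$. Item 5 follows because $x f(x)\sim x^{1-\alpha}$ with $\alpha>2$, so it is integrable. For item 6, on $[k_n,\infty)$ the tail slope is $C'(k)=-e^{-rT}\int_k^\infty f=-e^{-rT}\lambda (k_n/k)^{\alpha-1}$, which tends to $0$ as $k\to\infty$.
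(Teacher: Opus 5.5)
Your proposal is correct and follows the paper's proof essentially step for step. The same elementary power integrals give the tail mass $\lambda$, the excess-payoff integral $\lambda k_n/(\alpha-2)$ that is inverted to obtain \eqref{eq:alpha_match}, and the tail slope $C'(k)=-e^{-rT}\lambda(k_n/k)^{\alpha-1}\to 0$.

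The only real difference is the strict inequality $\lambda<1$. The paper writes $\lambda=\mathbb{Q}(S_T>k_n)$ and asserts strictness from $k_n>0$ alone. Your convexity-plus-bracket argument instead isolates the hypothesis that is actually needed: a strictly positive put value at $k_n$, i.e.\ $C_{k_n}>e^{-qT}-e^{-rT}k_n$. That is the more careful treatment.
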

Proof can be found in Appendix~\ref{sec:app_powerlaw}.

\subsection{Left-Tail Estimation}\label{sec:left_tail_pow}

The region $[0,k_1]$ to the left of the first observed strike requires a different construction from the right tail.  The right tail is anchored at a single endpoint and is completed on an unbounded domain.  The left region is instead a bounded boundary segment with information at both ends: the exact boundary values $C(0)=e^{-qT}$ and $C'(0)=-e^{-rT}$ at the origin, and the continuity conditions $C(k_1)=C_{k_1}$ and $C'(k_1)=C'_{k_1}$ at the first observed strike.  Thus the left construction must connect two endpoint anchors while preserving non-negative curvature.

The boundary values at the origin determine the boundary tangent line $\ell(k) := e^{-qT} - e^{-rT}k$, and the data at $k_1$ enter only through their departure from it, measured by two residuals,
\begin{align}\label{eq:left_tail_params}
\gamma_L := C_{k_1} - e^{-qT} + k_1\,e^{-rT}, \qquad
\delta_L := C'_{k_1} + e^{-rT},
\end{align}
the price excess $\gamma_L$ of $C_{k_1}$ above $\ell(k_1)$ and the slope increment $\delta_L$ from $C'(0) = -e^{-rT}$ to $C'_{k_1}$.  Both residuals vanish exactly when the left segment is the line $\ell$ itself; we call the left data non-degenerate in the opposite case, when $C''$ does not vanish identically on $(0, k_1)$ and the region carries genuine mass.  The following lemma gives the range of these residuals: in the non-degenerate case, convexity makes them positive and gives the ordering needed for an integrable left curvature.

\begin{lemma}[Strict positivity of the boundary residuals]\label{lem:left_data_pos}
If the input is butterfly-free and the left data are non-degenerate, then $\gamma_L > 0$, $\delta_L > 0$, and $\delta_L\,k_1 > \gamma_L$.
\end{lemma}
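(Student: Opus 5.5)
The plan is to read both residuals as integrals of the curvature over $[0,k_1]$ and then use non-degeneracy to make every inequality strict. Butterfly-freeness gives a convex call curve $C$ on $[0,k_1]$ with $C(0)=e^{-qT}$, $C'(0)=-e^{-rT}$, $C(k_1)=C_{k_1}$ and $C'(k_1)=C'_{k_1}$. Integrating the curvature (as a non-negative measure $\mu=C''$ on $(0,k_1]$, or as a function if it exists) gives
\[
\delta_L = C'_{k_1}-C'(0)=\int_{(0,k_1]} d\mu(s).
\]
Taylor's formula with integral remainder gives
\[
\gamma_L = C(k_1)-\ell(k_1)=\int_0^{k_1}\bigl(C'(t)-C'(0)\bigr)\,dt=\int_{(0,k_1]}(k_1-s)\,d\mu(s).
\]
Both expressions are manifestly non-negative. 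Subtracting them gives the third claim in the same form:
\[
\delta_L k_1-\gamma_L=\int_{(0,k_1]} s\,d\mu(s),
\]
which can also be read as $\int_0^{k_1}(C'_{k_1}-C'(t))\,dt \ge 0$.

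Strictness comes from non-degeneracy, which says $\mu$ is not the zero measure on $(0,k_1)$. Since $\mu$ charges some subset of the open interval $(0,k_1)$, and there both weights $k_1-s$ and $s$ are strictly positive, all three integrals are strictly positive. That yields $\delta_L>0$, $\gamma_L>0$ and $\delta_L k_1>\gamma_L$ at once.

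The main obstacle is that $C'_{k_1}$ is the estimated slope $\widehat{C'}(k_1)$, not an actual derivative of a known curve. So I would first fix the framework: I interpret ``butterfly-free input'' as the existence of a convex function through the boundary data at $0$ and the pair $(C_{k_1},C'_{k_1})$ at $k_1$, which is guaranteed by the projection onto the bracket~\eqref{eq:no-arb-bracket}.

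Equivalently, and more elementarily, I would phrase everything through the two conditions that convex interpolation requires:
\[
\frac{C_{k_1}-e^{-qT}}{k_1}\ \ge\ -e^{-rT}, \qquad \frac{C_{k_1}-e^{-qT}}{k_1}\ \le\ C'_{k_1}.
\]
These are exactly $\gamma_L\ge 0$ and $\delta_L k_1-\gamma_L\ge 0$, and adding them gives $\delta_L k_1\ge 0$. Non-degeneracy rules out equality in each: equality in the first forces the segment to coincide with the tangent line $\ell$ on $[0,k_1]$, and equality in the second forces the segment to be the chord, which is tangent at $k_1$ and hence again linear. The strict inequalities then follow, with $\delta_L>0$ because $\delta_L k_1=\gamma_L+(\delta_L k_1-\gamma_L)>0$.
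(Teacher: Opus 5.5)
Your proposal is correct and follows essentially the paper's proof. Like the paper, you write $\delta_L=\int_0^{k_1}C''$, $\gamma_L=\int_0^{k_1}(k_1-t)\,C''(t)\,dt$ and $\delta_L k_1-\gamma_L=\int_0^{k_1}t\,C''(t)\,dt$, and you get strictness from $C''\not\equiv 0$ on $(0,k_1)$. Using a curvature measure only relaxes the paper's tacit $C^2$ assumption, and your data-level check through the two convex-interpolation inequalities is a valid equivalent route.
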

Proof can be found in Appendix~\ref{sec:app_left_pow}.

This suggests a minimal two-parameter curvature form for connecting the origin to the first observed strike: the two origin conditions are absorbed as integration constants, and the two remaining conditions at $k_1$ determine the two free parameters.  We take a power-law curvature on the segment,
\[
C''(k) = a_L\,k^{\beta_L}, \qquad k \in [0, k_1], \quad C''(k) \geq 0.
\]

Solving the two matching conditions at $k_1$ gives the exponent and amplitude:
\begin{align}\label{eq:left_pow_beta}
\beta_L = \frac{k_1\,\delta_L}{\gamma_L} - 2, \qquad
a_L = \frac{(\beta_L + 1)\,\delta_L}{k_1^{\beta_L+1}}.
\end{align}
The inequality $\delta_L k_1>\gamma_L$ is exactly the condition $\beta_L>-1$, so the resulting curvature is integrable near the origin.

The resulting density and call price on $[0,k_1]$ are
\begin{align}
f(k) &= e^{rT}\,a_L\,k^{\beta_L}, \quad k \in [0, k_1], \label{eq:left_pow}\\
C(k) &= e^{-qT} - e^{-rT}k + \gamma_L\left(\frac{k}{k_1}\right)^{\!\beta_L+2}, \quad k \in [0, k_1]. \label{eq:left_pow_compact}
\end{align}

\begin{proposition}[Power-law left tail properties]\label{prop:left_pow}
For non-degenerate, butterfly-free left data, the density~\eqref{eq:left_pow} with $\beta_L$ and $a_L$ given by~\eqref{eq:left_pow_beta} satisfies:
\begin{enumerate}
    \item \textbf{Price and derivative matching:} $C(0) = e^{-qT}$, $C'(0) = -e^{-rT}$, $C(k_1) = C_{k_1}$, $C'(k_1) = C'_{k_1}$.
    \item \textbf{Non-negative density:} $f(k) \geq 0$ for all $k \in [0, k_1]$, with $f(0^+) = 0$ when $\beta_L > 0$.
    \item \textbf{Left tail mass:} $\int_0^{k_1} f(x)\,dx = \delta_L\,e^{rT}$.
    \item \textbf{Uniqueness:} The parameters $(a_L, \beta_L)$ are uniquely determined by $(\gamma_L, \delta_L, k_1)$.
    \item \textbf{Finite first moment:} $\int_0^{k_1} x\,f(x)\,dx < \infty$ since $\beta_L > -1$.
    \item \textbf{Boundary slope coherence:} the matched value $C'(0) = -e^{-rT}$ in Property~1 is the left-side boundary condition required for the global density to integrate to one; together with Property~6 of Proposition~\ref{prop:powerlaw}, the construction closes the unit-mass bookkeeping.
\end{enumerate}
\end{proposition}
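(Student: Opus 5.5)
The plan is to verify each of the six properties by direct computation from the closed form~\eqref{eq:left_pow_compact}, using Lemma~\ref{lem:left_data_pos} to control signs. Set $p := \beta_L + 2$. By Lemma~\ref{lem:left_data_pos} and~\eqref{eq:left_pow_beta}, $p = k_1\delta_L/\gamma_L > 1$, since $\delta_L k_1 > \gamma_L > 0$. Equivalently, $\beta_L > -1$.

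\textbf{Matching.} Differentiate $C(k) = e^{-qT} - e^{-rT}k + \gamma_L (k/k_1)^{p}$ to get
\[
C'(k) = -e^{-rT} + \frac{\gamma_L p}{k_1}\left(\frac{k}{k_1}\right)^{p-1},
\qquad
C''(k) = \frac{\gamma_L p(p-1)}{k_1^{2}}\left(\frac{k}{k_1}\right)^{p-2}.
\]
Since $p>1$, the power terms vanish at $k=0$. This gives $C(0)=e^{-qT}$ and $C'(0^+)=-e^{-rT}$.

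At $k_1$, we get $C(k_1)=e^{-qT}-e^{-rT}k_1+\gamma_L = C_{k_1}$ by the definition~\eqref{eq:left_tail_params}. We also get $C'(k_1) = -e^{-rT}+\gamma_L p/k_1 = -e^{-rT}+\delta_L = C'_{k_1}$, using $\gamma_L p = k_1\delta_L$.

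Next, check that this $C''$ equals $a_L k^{\beta_L}$ with $a_L$ from~\eqref{eq:left_pow_beta}. We have $\gamma_L p(p-1)/k_1^{p} = \delta_L(\beta_L+1)/k_1^{\beta_L+1}$, again using $\gamma_L p = k_1\delta_L$. This is the computation that ties the stated parameters to the compact formula, so the density~\eqref{eq:left_pow} is consistent with the claimed call price.

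\textbf{Positivity, mass, and moment.} Non-negativity follows since $a_L>0$: we have $\delta_L>0$ and $\beta_L+1>0$. If $\beta_L>0$, then $f(0^+)=0$ follows immediately.

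The mass is $\int_0^{k_1} f = e^{rT}(C'(k_1)-C'(0^+)) = e^{rT}\delta_L$. This integral is finite because $\beta_L>-1$ makes $k^{\beta_L}$ integrable at $0$. The first moment involves $k^{\beta_L+1}$, which is integrable on $[0,k_1]$ since $\beta_L+1>0$.

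\textbf{Uniqueness.} Integrate $a k^{\beta}$ twice on $[0,k_1]$ with the origin conditions as constants, valid for $\beta>-1$. The two conditions at $k_1$ become
\[
\frac{a k_1^{\beta+1}}{\beta+1}=\delta_L,
\qquad
\frac{a k_1^{\beta+2}}{(\beta+1)(\beta+2)}=\gamma_L.
\]
Dividing the first by the second gives $(\beta+2)/k_1=\delta_L/\gamma_L$. This determines $\beta$ uniquely, and then $a$ is determined uniquely.

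We should also note that no admissible $\beta\le-1$ exists. In that case $C''$ is non-integrable at $0$, which is incompatible with a finite $C'(0^+)$.

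\textbf{Unit-mass bookkeeping.} Property~6 combines the matched value $C'(0)=-e^{-rT}$ with Proposition~\ref{prop:powerlaw}(6). By the reduction in Section~\ref{sec:prelim}, the total mass equals $e^{rT}(C'(\infty)-C'(0^+))=1$. This also requires continuity of $C'$ at every junction, which holds because each body interval matches its endpoint slopes.

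\textbf{Main obstacle.} The delicate step is the sign and ordering claim $p>1$, which is needed both for vanishing at the origin and for integrability. It rests entirely on Lemma~\ref{lem:left_data_pos}; everything else is algebra.
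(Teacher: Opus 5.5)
Your proof is correct and is essentially the paper's direct computation run in the opposite direction: the paper integrates the posited curvature $a_L k^{\beta_L}$ twice from the origin conditions, solves the two matching equations at $k_1$ (which gives uniqueness), and only derives the compact form~\eqref{eq:left_pow_compact} at the end, whereas you start from the compact form, verify the matching and $C''=a_Lk^{\beta_L}$, and then use the same division argument for uniqueness. As you note, both versions depend on Lemma~\ref{lem:left_data_pos} to obtain $\beta_L>-1$ and $a_L>0$.
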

Proof can be found in Appendix~\ref{sec:app_left_pow}.

When the left segment falls into the degenerate case $\delta_L = 0$, the boundary slope already equals the first observed slope and $C''$ vanishes on $(0, k_1)$, and there is no curvature to fit.  The region $[0, k_1]$ is then the straight line through the boundary data, which is butterfly-free and matches the endpoints, so the construction uses it directly in place of the power-law tail.

\begin{theorem}[Single-maturity construction]\label{thm:single-maturity}
Fix a maturity $T$ with butterfly-free input prices.  Combining the
interior construction of Theorem~\ref{thm:two-piece}, the right-tail
completion of Proposition~\ref{prop:powerlaw}, and the left-tail completion of
Proposition~\ref{prop:left_pow} yields a call-price curve that
induces a risk-neutral marginal law on $[0,\infty)$.  This marginal law has
unit mass, reprices every observed quote exactly, and is free of butterfly
arbitrage at every strike.
\end{theorem}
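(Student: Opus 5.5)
The plan is to assemble the global curve piece by piece and then verify the three claimed properties (unit mass, exact repricing, butterfly freedom) from the local properties already proved. Define $C$ on $[0,\infty)$ as follows:
\begin{itemize}
\item on $[0,k_1]$, use the left power-law segment~\eqref{eq:left_pow_compact}, or the tangent line $\ell$ in the degenerate case $\delta_L=0$;
\item on each $[k_i,k_{i+1}]$, use the two-piece curve of Section~\ref{sec:second-order-construction} with the parameters $(\beta_i,h_i)$ given by Theorem~\ref{thm:two-piece};
\item on $[k_n,\infty)$, set $C(k)=e^{-rT}\int_k^\infty (x-k)f(x)\,dx$ with $f$ from~\eqref{eq:powerlaw_simple}.
\end{itemize}
All pieces use the same projected slope estimates $C'_{k_i}$ from Section~\ref{sec:Ai-Lagrange}. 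Because of the projection onto the bracket~\eqref{eq:no-arb-bracket}, these estimates together with the observed prices form butterfly-free boundary data on every interval, including at $k_1$ and $k_n$. This is the hypothesis needed to invoke Theorem~\ref{thm:two-piece}, Lemma~\ref{lem:left_data_pos} and Propositions~\ref{prop:left_pow}, \ref{prop:powerlaw}.

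\textbf{$C^1$ gluing and exact repricing.} At each node $k_i$, the adjacent pieces agree in value and slope:
\begin{itemize}
\item The interior pieces start from $(C_{k_i},C'_{k_i})$ by construction. They end at $C_{k_{i+1}}$ by the price constraint~\eqref{eq:price-match}, and at $C'_{k_{i+1}}$ because the curvature integrates to $A_i$.
\item The left segment matches $(C_{k_1},C'_{k_1})$ by Proposition~\ref{prop:left_pow}(1).
\item The right tail matches $(C_{k_n},C'(k_n))$ by Proposition~\ref{prop:powerlaw}(1)--(2).
\end{itemize}
Hence $C\in C^1[0,\infty)$, and $C$ reprices every $C_{k_i}$ exactly.

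\textbf{Butterfly freedom.} $C''$ exists piecewise and is nonnegative everywhere:
\begin{itemize}
\item $h_i\ge0$ and $c_i\ge0$ on the interior intervals, by feasibility in Theorem~\ref{thm:two-piece}(a);
\item $a_Lk^{\beta_L}\ge0$ on the left segment, since $a_L>0$ from $\beta_L>-1$ and $\delta_L>0$ (Lemma~\ref{lem:left_data_pos});
\item $f>0$ on the right tail, by Proposition~\ref{prop:powerlaw}(4).
\end{itemize}
A $C^1$ function with a nonnegative piecewise-defined second derivative has a nondecreasing derivative, so it is convex. This means there is no butterfly arbitrage at any strike, not only at the quoted ones.

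\textbf{Unit mass.} Define $f:=e^{rT}C''$ almost everywhere. Integrating $C''$ piece by piece and telescoping through the continuous slopes gives
\[
\int_0^\infty f=e^{rT}\Big(\lim_{k\to\infty}C'(k)-C'(0^+)\Big).
\]
Proposition~\ref{prop:left_pow}(1) gives $C'(0^+)=-e^{-rT}$ (trivially so in the degenerate linear case). Proposition~\ref{prop:powerlaw}(6) gives $\lim_{k\to\infty}C'(k)=0$. Hence $\int_0^\infty f=1$ by the reduction in Section~\ref{sec:prelim}. Together with $f\ge0$, this makes $f$ a probability density.

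\textbf{Identifying $C$ as the call-price curve of its own law.} It remains to check that $C(K)=e^{-rT}\mathbb{E}[(S_T-K)^+]$ under this density. The approach is to show that $C$ and $G(K):=e^{-rT}\int_K^\infty(x-K)f(x)\,dx$ have the same second derivative, and that their difference has zero slope and zero value at infinity. On the right tail, the two agree by definition. Since $C-G$ has zero second derivative, it is affine, and the affine difference vanishes identically because both functions coincide on $[k_n,\infty)$.

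The expected obstacle is precisely this final identification, along with confirming that the right-tail curve has $C(k)\to0$. Both follow from the finite first moment (Proposition~\ref{prop:powerlaw}(5)). A by-product is the consistency check $C(0)=e^{-rT}\int x f=e^{-qT}$, which matches the matched left value and so confirms the martingale/forward condition.
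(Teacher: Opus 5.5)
Your proposal is correct and follows the paper's own argument. You glue the two-piece interior curves and the two power-law tails with $C^1$ matching at every node, deduce convexity (hence butterfly freedom) from the non-negative piecewise curvature, get exact repricing from endpoint matching, and obtain unit mass from $C'(0)=-e^{-rT}$ and $C'(\infty)=0$. Your extra final step, showing that $C$ coincides with $e^{-rT}\int_K^\infty (x-K)f(x)\,dx$ so that $C(0)=e^{-qT}$ recovers the forward, is a detail the paper leaves implicit in the phrase ``induces a non-negative marginal measure''.
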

\begin{proof}
Theorem~\ref{thm:two-piece} gives a convex interpolation on every interior
strike interval and matches the endpoint prices and slopes.  Propositions
\ref{prop:powerlaw} and~\ref{prop:left_pow} complete the two boundary
regions while preserving convexity and matching the required boundary data.  Hence the assembled call-price curve is convex on
$[0,\infty)$ and therefore induces a non-negative marginal measure.

The total mass of this measure is fixed by the endpoint slopes.  The left
boundary condition gives $C'(0)=-e^{-rT}$, and the right-tail construction
gives $C'(\infty)=0$.  By the slope-mass identity in Section~\ref{sec:prelim},
these two conditions imply unit total mass.  Exact repricing follows from
the endpoint matching on every observed strike, and butterfly freedom follows
from convexity of the assembled call-price curve.
\end{proof}

\section{Construction under Multiple Maturities}\label{sec:multi}
The single-maturity construction of Section~\ref{sec:single} applies independently to each maturity $T_j$.  The only additional issue is calendar arbitrage: the construction at $T_j$ must be consistent with the previously constructed curves at $T_1, \ldots, T_{j-1}$.  We process maturities sequentially; under forward moneyness alignment, the calendar-arbitrage-free condition takes the following form.

\paragraph{Calendar-arbitrage-free condition.}  
We use $T_{j-1} < T_j$ ($j \ge 2$) for a consecutive maturity pair.  Write $\Delta T_j := T_j - T_{j-1}$ and $\phi_j := e^{-(r-q)\Delta T_j}$, and for a strike $k$ of $T_j$ call $\phi_j k$ its forward-aligned strike at $T_{j-1}$.  The absence of calendar arbitrage between the two maturities requires
\[
C(k, T_j) \geq e^{-q\,\Delta T_j}\, C(\phi_j k,\, T_{j-1}) \qquad \text{for all } k \geq 0,
\]
where $C(k, T)$ denotes the European call price at strike $k$ and maturity $T$.  This is the standard forward-moneyness calendar-arbitrage condition; see \citet[Prop.~2.1]{fengler2009arbitrage}.

At the level of the observed strike grid, this ordering is treated as an input property.  Writing $C^{T_j}_{k_l}$ for the input price at an observed strike $k_l$ of $T_j$, one has $
C^{T_j}_{k_l} \geq e^{-q\,\Delta T_j}\, C\!\bigl(\phi_j k_l,\, T_{j-1}\bigr),$
with the right-hand side read from the input prices at $T_{j-1}$. 

In practice, such a discrete calendar ordering is typically supplied by an upstream de-arbitrage or surface-fitting step: after forward alignment, the quoted prices at the later maturity lie above the corresponding prices at the earlier maturity on the observed grid. However, this is only a grid-level guarantee. For the marginal-based applications targeted in this paper, the output is not merely a finite set of arbitrage-free quotes, but a family of globally defined marginal laws, equivalently call-price curves defined on the full strike domain. Therefore, calendar consistency must hold pointwise for every strike, not only at the observed grid points. 
The discrete ordering also does not by itself specify how the earlier-maturity price should be evaluated when the forward-aligned strike falls between two quoted strikes.

We therefore take the grid-level calendar ordering as an input property and focus on upgrading it to the distribution-level object required downstream: starting from discretely aligned prices, our construction builds globally defined marginal laws whose associated call-price curves satisfy the calendar inequality pointwise over the full strike domain. 
To make this upgrade well-defined when the forward-aligned strike falls between two observed strikes at $T_{j-1}$, we impose a mildly strengthened discrete condition in which the earlier-maturity price is read through a specified chord.

\begin{assumption}[Input calendar condition]\label{assump:input-calendar}
For each consecutive maturity pair $T_{j-1}<T_j$, $j\ge 2$, let
$\Delta T_j:=T_j-T_{j-1}$ and $\phi_j:=e^{-(r-q)\Delta T_j}$.
For maturity $T_{j-1}$, write its observed strikes locally as
$k_1<\cdots<k_N$. On each interval
$k\in[k_a,k_{a+1}]$, define the chord
$I_C(k):=\frac{k_{a+1}-k}{k_{a+1}-k_a}\,C^{T_{j-1}}_{k_a}
+\frac{k-k_a}{k_{a+1}-k_a}\,C^{T_{j-1}}_{k_{a+1}}$. We assume the following:
\begin{enumerate}[label=\textup{(\roman*)}]
    \item At the origin, the calendar relation holds with equality:
    $C^{T_j}_0=e^{-q\Delta T_j}C^{T_{j-1}}_0$.

    \item For every observed strike $k_l>0$ of $T_j$ whose forward-aligned
    strike $\phi_j k_l$ lies within the observed strike range of
    $T_{j-1}$,
    \begin{equation}\label{eq:S-input}
        C^{T_j}_{k_l} > e^{-q\,\Delta T_j}\, I_C(\phi_j k_l).
    \end{equation}
\end{enumerate}
\end{assumption}

The first assumption is trivial and has been stated before. For the second assumption, using the chord $I_C^{T_{j-1}}$ makes the assumption well-defined even when $\phi_j k_l$ falls between two observed strikes of $T_{j-1}$, and the strict inequality is what the construction will consume.  We treat Eq~\eqref{eq:S-input} as an upstream responsibility of the surface-fitting layer that supplies our input data.

The role of Eq~\eqref{eq:S-input} is to make the input calendar condition independent of the upstream interpolation. Different de-arbitrage or surface-fitting methods may assign different prices between quoted strikes. They may therefore imply different calendar slacks when $\phi_j k_l$ does not coincide with a quoted strike at $T_{j-1}$. The chord provides a common conservative reference. Indeed, any convex call curve matching the neighboring input prices at $T_{j-1}$ lies below $I_C^{T_{j-1}}$. 
Thus, positive slack against the chord implies positive slack against any such upstream convex curve.

On a coarse upstream grid, the chord-based condition may be stricter than the underlying pointwise calendar condition. 
This happens because the chord-to-curve gap can absorb the available calendar slack. 
If the upstream representation is already pointwise calendar-free with strictly positive slack, this is only a discretization issue. 
Sampling additional forward-aligned strikes from the same representation reduces the chord-to-curve gap.  On a sufficiently refined grid, condition~\eqref{eq:S-input} is then recovered. 
This refinement preserves the upstream no-arbitrage structure, and it also gives our construction a common discrete interface to different upstream methods.

Eq~\eqref{eq:S-input} is imposed only for strikes whose
forward-aligned locations lie inside the observed strike range of
$T_{j-1}$. Aligned strikes outside this range are handled by the tail-specific
calendar conditions, and the case $j=1$ is trivial because there is no previous
maturity.

Given this assumption, we next establish a lemma that converts the input condition at observed strikes into strict positive calendar slack at those strikes. 
This lemma will be used in the construction below to certify calendar-arbitrage-freeness.

\begin{lemma}[Endpoint slack from the input calendar condition]
    \label{lemma:cross-maturity}
    Suppose Assumption \ref{assump:input-calendar} holds and that the constructed
    call price curves at every prior maturity $T_l$ ($l < j$) (i)~are
    butterfly-free, i.e., $C(\cdot, T_l)$ is convex in $k$ on each
    observed interval, and (ii)~match the input prices at every observed
    strike, i.e., $C(k_a^{T_l}, T_l) = C^{T_l}_{k_a^{T_l}}$ for every
    observed strike $k_a^{T_l}$.  Let
    $L(k) := e^{-q\,\Delta T_j}\,C\!\bigl(\phi_j k,\, T_{j-1}\bigr)$
    be the calendar lower bound at $T_j$ obtained from the constructed
    $T_{j-1}$ curve for strike $k$.  Then for every $j \geq 2$ and every
    observed strike $k$ of $T_j$ satisfying the forward-alignment
    hypothesis of Assumption \ref{assump:input-calendar},
    \[
    C^{T_j}_{k} - L(k) > 0.
    \]
    Consequently, for every interior interval $[k_l, k_r]$ of $T_j$ both
    of whose endpoint strikes satisfy Assumption \ref{assump:input-calendar}, the endpoint calendar slacks
    \[
    \delta_l := C^{T_j}_{k_l} - L(k_l), \qquad
    \delta_r := C^{T_j}_{k_r} - L(k_r)
    \]
    satisfy $\delta_{\min} := \min(\delta_l, \delta_r) > 0$.
\end{lemma}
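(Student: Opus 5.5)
The plan is to compare the constructed curve at $T_{j-1}$ with the chord $I_C$ used in Assumption~\ref{assump:input-calendar}, and then use the strict inequality~\eqref{eq:S-input}. Fix $j\ge 2$ and an observed strike $k$ of $T_j$ with $\phi_j k$ inside the observed range $[k_1,k_N]$ of $T_{j-1}$. Choose the observed interval $[k_a,k_{a+1}]$ of $T_{j-1}$ that contains $\phi_j k$.

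The key step is the pointwise bound
\[
C(\phi_j k, T_{j-1}) \le I_C(\phi_j k).
\]
This follows from hypotheses (i) and (ii). By (ii), the constructed curve agrees with the input prices at $k_a$ and $k_{a+1}$. By (i), it is convex on $[k_a,k_{a+1}]$, which Theorem~\ref{thm:two-piece} guarantees for the two-piece interior construction. A convex function on an interval lies below the chord joining its endpoint values. Writing $\phi_j k = (1-t)k_a + t k_{a+1}$ with $t\in[0,1]$, convexity gives
\[
C(\phi_j k,T_{j-1}) \le (1-t)C^{T_{j-1}}_{k_a} + tC^{T_{j-1}}_{k_{a+1}} = I_C(\phi_j k).
\]
When $\phi_j k$ coincides with an observed strike, the bound holds with equality, and there is no ambiguity in choosing the interval, since both neighbouring chords give the same value there.

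Multiplying by $e^{-q\Delta T_j}>0$ gives $L(k)\le e^{-q\Delta T_j} I_C(\phi_j k)$. Combining this with~\eqref{eq:S-input},
\[
C^{T_j}_k - L(k) \ \ge\ C^{T_j}_k - e^{-q\Delta T_j} I_C(\phi_j k) \ >\ 0.
\]
For an interior interval $[k_l,k_r]$ of $T_j$ where both endpoints satisfy the alignment hypothesis, we apply this at $k_l$ and at $k_r$ separately. That gives $\delta_l>0$ and $\delta_r>0$, so $\delta_{\min}=\min(\delta_l,\delta_r)>0$.

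The main obstacle is bookkeeping rather than anything analytic. First, the convexity in hypothesis (i) must hold on the specific observed interval that contains $\phi_j k$. Second, the chord in Assumption~\ref{assump:input-calendar} must be built from exactly the values the constructed curve takes at the nodes. Hypothesis (ii) secures the second point. For the first, one should check that $\phi_j k$ lies in the body $[k_1,k_N]$ and not in a tail region, where the chord would not be defined. The alignment hypothesis gives exactly this. The origin case from Assumption~\ref{assump:input-calendar}(i) is excluded, because the lemma only concerns observed strikes $k>0$.
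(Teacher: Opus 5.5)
Your proposal is correct and follows essentially the same argument as the paper. Convexity on the observed interval containing $\phi_j k$, together with price matching at its nodes, gives $C(\phi_j k,T_{j-1})\le I_C(\phi_j k)$; scaling by $e^{-q\Delta T_j}$ and using the strict inequality~\eqref{eq:S-input} then gives the slack, which you apply at both endpoints. (Your aside crediting convexity to Theorem~\ref{thm:two-piece} is incomplete when $j-1\ge 2$, since some intervals there may be three-piece, but this is harmless because convexity is hypothesis (i).)
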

Proof can be found in Appendix~\ref{sec:app_cross_maturity}.

Lemma~\ref{lemma:cross-maturity} supplies strict calendar slack at the observed strikes. The next step is to extend this endpoint slack across each interior interval. This is done by the three-piece construction below, which upgrades only those intervals where the two-piece construction would not satisfy the inherited calendar floor. Together with Proposition~\ref{prop:calendar-reduction} and Lemma~\ref{lemma:three-piece-feasibility}, this gives pointwise calendar freedom throughout the interior.

\subsection{Three-Piece Construction for Calendar Arbitrage Elimination}\label{sec:three-piece}

On every interior interval $[k_l,k_r]$ at maturity $T_j$ with $j\ge 2$, set $\Delta k_l := k_r-k_l$, $A_l := C'_{k_r}-C'_{k_l}$, $P_l := C_{k_r}-C_{k_l}-C'_{k_l}\Delta k_l$, and $\rho := P_l/(A_l\Delta k_l)$. 
Throughout this subsection we work on non-degenerate intervals, where $A_l>0$ and $\rho\in(0,1)$. 
Degenerate endpoint cases, including $A_l=0$, $\rho=0$, and $\rho=1$, are treated separately and are not part of the local three-piece construction.
We also use $C(0,T)=e^{-qT}$, so calendar equality holds at $k=0$.

The two-piece construction of Section~\ref{sec:two-piece-opt} matches endpoint prices and slopes, but it leaves no free control over the first-piece curvature. 
This becomes restrictive when the inherited calendar floor requires a larger initial curvature. 
We therefore use a three-piece piecewise-constant curvature on such intervals. 
The additional degrees of freedom make $h_1$ adjustable, while the remaining parameters preserve endpoint matching and non-negative curvature.

\paragraph{Per-interval invocation.}
The three-piece form is invoked locally, not globally. 
If the two-piece fit already satisfies the calendar condition on an interval, we keep the two-piece construction unchanged. 
Only intervals that violate the inherited calendar floor are replaced by the three-piece form. 
This local replacement keeps the construction computationally efficient. 
At a fixed maturity $T_j$, the interval problems remain decoupled and can be solved in parallel. 
Across maturities, the algorithm remains sequential because the calendar floor $h_1^{\rm cal}$ discussed later is inherited from the previously constructed curve. 

On an interval $[k_l, k_r]$ where the three-piece form is invoked, with $u := k - k_l \in [0, \Delta k_l]$, $\Delta k_l := k_r - k_l$, $A_l := C'_{k_r} - C'_{k_l}$, and the local breakpoints
\[
a_1 := \beta_1\,\Delta k_l, \qquad a_2 := (1 - \beta_2)\,\Delta k_l,
\]
we define
\begin{equation}\label{eq:three-Cpp}
    C^{\prime\prime}(k_l + u) =
    \begin{cases}
        h_1, & u \in \bigl[0,\; a_1\bigr), \\[3pt]
        h_{\rm mid}, & u \in \bigl[a_1,\; a_2\bigr), \\[3pt]
        h_3, & u \in \bigl[a_2,\; \Delta k_l\bigr],
    \end{cases}
\end{equation}
where the five free parameters are subject to
\[
h_1,\, h_{\rm mid},\, h_3 \geq 0,
\qquad \beta_1,\, \beta_2 \in [0, 1],
\qquad \beta_1 + \beta_2 \leq 1.
\]
Non-negativity of $h_1, h_{\rm mid}, h_3$ enforces butterfly freedom.

The parameters $\beta_1$ and $\beta_2$ specify the left and right piece widths as fractions of $\Delta k_l$. 
Thus the middle piece has width $(1-\beta_1-\beta_2)\Delta k_l$. 
The breakpoints $a_1=\beta_1\Delta k_l$ and $a_2=(1-\beta_2)\Delta k_l$ are measured from the left endpoint $k_l$. 
In particular, $a_2$ is a breakpoint location, not the length of the middle piece. 
When $\beta_2=0$, the third piece degenerates and the form reduces to the two-piece construction of Section~\ref{sec:two-piece-opt}.

Integrating~\eqref{eq:three-Cpp} once with the boundary value $C'(k_l) = C'_{k_l}$ yields the first derivative, piecewise linear on $[0, \Delta k_l]$ and continuous at both interior breakpoints $a_1$ and $a_2$:
\begin{equation}\label{eq:three-Cp}
    C'(k_l + u) =
    \begin{cases}
        C'_{k_l} + h_1\,u, & u \in [0,\; a_1], \\[3pt]
        C'_{k_l} + h_1\,a_1 + h_{\rm mid}\,(u - a_1), & u \in [a_1,\; a_2], \\[3pt]
        C'_{k_l} + h_1\,a_1 + h_{\rm mid}\,(a_2 - a_1) + h_3\,(u - a_2),
        & u \in [a_2,\; \Delta k_l].
    \end{cases}
\end{equation}
Although $C''$ may jump at $a_1$ and $a_2$, integrating the curvature makes $C'$ continuous across both breakpoints. 
Integrating once more with the boundary value $C(k_l)=C_{k_l}$, and writing $g(u):=C(k_l+u)-C_{k_l}-C'_{k_l}u$ with $g(0)=g'(0)=0$, this gives

\begin{equation}\label{eq:three-C}
    g(u) =
    \begin{cases}
        \tfrac{1}{2}\,h_1\,u^2,
        & u \in [0,\; a_1], \\[5pt]
        \tfrac{1}{2}\,h_1\,a_1^2 + h_1\,a_1\,(u - a_1)
        + \tfrac{1}{2}\,h_{\rm mid}\,(u - a_1)^2,
        & u \in [a_1,\; a_2], \\[5pt]
        \tfrac{1}{2}\,h_1\,a_1^2 + h_1\,a_1\,(a_2 - a_1)
        + \tfrac{1}{2}\,h_{\rm mid}\,(a_2 - a_1)^2 \\
        \quad + \bigl[h_1\,a_1 + h_{\rm mid}\,(a_2 - a_1)\bigr]\,(u - a_2)
        + \tfrac{1}{2}\,h_3\,(u - a_2)^2,
        & u \in [a_2,\; \Delta k_l].
    \end{cases}
\end{equation}
The constructed curve $C(\cdot,T_j)$ is $C^1$ on each interior interval. 
At this point, the three-piece form defines a local shape class, but its five parameters have not yet been tied to the interval data. 
We next impose endpoint slope and price matching. 
These two constraints preserve exact interpolation of the input prices and leave the remaining local freedom for the calendar floor and the optional smoothness criterion.

\paragraph{Endpoint matching constraints.}
The five free parameters are tied to the input interval data $(A_l, P_l)$ via two equality constraints, where $P_l := C_{k_r} - C_{k_l} - C'_{k_l}\,\Delta k_l$.  Imposing $C'(k_r) = C'_{k_l} + A_l$ in the third piece of~\eqref{eq:three-Cp} at $u = \Delta k_l$ gives the slope-matching constraint
\begin{equation}\label{eq:three-slope}
    h_1\,\beta_1\,\Delta k_l
    + h_{\rm mid}\,(1 - \beta_1 - \beta_2)\,\Delta k_l
    + h_3\,\beta_2\,\Delta k_l
    = A_l.
\end{equation}
Imposing $C(k_r) = C_{k_r}$, equivalently $g(\Delta k_l) = P_l$, in the third piece of~\eqref{eq:three-C} at $u = \Delta k_l$, and simplifying using~\eqref{eq:three-slope}, gives the price-matching constraint
\begin{equation}\label{eq:three-h}
    h_1\,\beta_1\,\Bigl(1 - \tfrac{\beta_1}{2}\Bigr)\,(\Delta k_l)^2
    + h_{\rm mid}\,
    \tfrac{(1 - \beta_1 - \beta_2)(1 - \beta_1 + \beta_2)}{2}\,(\Delta k_l)^2
    + h_3\,\tfrac{\beta_2^{\,2}}{2}\,(\Delta k_l)^2
    = P_l.
\end{equation}

The price-matching equation has a useful integral interpretation. 
Since $g(0)=g'(0)=0$, the endpoint residual satisfies $P_l=g(\Delta k_l)=\int_0^{\Delta k_l}(\Delta k_l-v)C''(k_l+v)\,dv$. Because $C''$ is constant on each piece, this integral decomposes into three weighted curvature contributions over the intervals $[0,a_1]$, $[a_1,a_2]$, and $[a_2,\Delta k_l]$. The weight $\Delta k_l-v$ is the remaining distance from location $v$ to the right endpoint, so curvature placed farther to the left has a larger effect on the endpoint price. 
Thus~\eqref{eq:three-h} is exactly the endpoint price-matching condition. 
Together with the slope-matching condition~\eqref{eq:three-slope}, it leaves three local degrees of freedom in the five-parameter family.

After endpoint matching, the remaining freedom is used to enforce the calendar condition $C(\cdot,T_j)\ge L$ on $[k_l,k_r]$. This is a pointwise inequality in $k$, hence an infinite-dimensional constraint. The purpose of this subsection is to reduce it to a single lower bound on the first-piece curvature $h_1$.

The reduction compares three curves on the interval: the inherited calendar floor $L$, the chord upper bound $I_C$, and the constructed interval curve $C$. Here $I_C$ is the chord through the endpoint prices, $I_C(k):=\tfrac{k_r-k}{k_r-k_l}\,C_{k_l}+\tfrac{k-k_l}{k_r-k_l}\,C_{k_r}$, and lies above the convex curve $C$ on the interval. With $\delta_l:=C_{k_l}-L(k_l)$, $\delta_r:=C_{k_r}-L(k_r)$, and $\delta_{\min}:=\min\{\delta_l,\delta_r\}$, convexity of $L$ gives $I_C(k)-L(k)\ge \delta_{\min}$ for every $k\in[k_l,k_r]$, and $I_C$ exceeds the calendar floor by at least $\delta_{\min}$.

 A calendar violation can occur only where the constructed interval curve lies below $I_C$: since $C(k)-L(k)=[I_C(k)-L(k)]-[I_C(k)-C(k)]$ and the first term is at least $\delta_{\min}$, the calendar condition holds whenever $\sup_{k\in[k_l,k_r]}\{I_C(k)-C(k)\}\le \delta_{\min}$.

It remains to control the maximum gap below \(I_C\). Rather than bounding this gap for every curve in the local construction class, we give an explicit feasible curve whose gap can be computed in closed form. Fix the first-piece curvature \(h_1\), write \(u=k-k_l\), and define \(D(u):=I_C(k_l+u)-C(k_l+u)\). Since the chord \(I_C\) has slope \(C'_{k_l}+\rho A_l\) and the constructed curve starts with slope \(C'_{k_l}\), the initial gap grows at rate \(D'(0)=\rho A_l\). On the first piece, \(D'(u)=\rho A_l-h_1u\). Hence \(D\) reaches its largest value at \(u^\star=\rho A_l/h_1\), where the slope gap vanishes. This gives \(D(u^\star)=\int_0^{u^\star}(\rho A_l-h_1s)\,ds=\rho^2A_l^2/(2h_1)\). Thus a larger first-piece curvature \(h_1\) decreases \(u^\star\) and reduces the maximum gap below \(I_C\).

The calculation above gives the key quantity needed for the reduction: for a suitable constructed interval curve, the maximum gap below the chord \(I_C\) is \(\rho^2A_l^2/(2h_1)\). The proposition below justifies this calculation by constructing a feasible curve \(C^\star\) with this gap bound. This construction proves existence. The actual per-interval optimization may select a different curve depending on the chosen objective.

The gap formula alone is not yet enough for the calendar reduction. It treats \(h_1\) as a candidate first-piece curvature, but a valid interval construction must also fit inside \([k_l,k_r]\) and stay above the calendar floor. The first requirement is geometric. In the explicit construction used for the existence proof, the left nonzero-curvature piece has length \(\rho A_l/h_1\), and the right nonzero-curvature piece required by endpoint matching has length \(\rho^2A_l/(h_1(1-\rho))\). These two pieces fit inside the interval exactly when \(\rho A_l/h_1+\rho^2A_l/(h_1(1-\rho))\le \Delta k_l\), equivalently \(h_1\ge \rho A_l/((1-\rho)\Delta k_l)\). This defines the width-feasibility bound \(h_1^{\rm geom}\).

The second requirement comes from the calendar slack. Since the chord \(I_C\) lies at least \(\delta_{\min}\) above the calendar floor \(L\), it is enough to keep the maximum gap below \(I_C\) no larger than \(\delta_{\min}\). Using the gap formula gives \(\rho^2A_l^2/(2h_1)\le\delta_{\min}\), equivalently \(h_1\ge \rho^2A_l^2/(2\delta_{\min})\). This defines the calendar-slack bound \(h_1^{\rm slack}\). The proposition below combines the chord bound, the explicit gap bound, and these two requirements into one scalar curvature floor.

\begin{proposition}[Calendar reduction]\label{prop:calendar-reduction}
    Let $[k_l,k_r]$ be an interval at maturity $T_j$ with $j\geq 2$. 
    Define $\Delta k_l:=k_r-k_l$, $A_l:=C'_{k_r}-C'_{k_l}$, $P_l:=C_{k_r}-C_{k_l}-C'_{k_l}\Delta k_l$, and $\rho:=P_l/(A_l\Delta k_l)\in(0,1)$. 
    Let $L(k):=e^{-q\Delta T_j}C(e^{-(r-q)\Delta T_j}k,T_{j-1})$ be the inherited calendar floor. 
    Suppose that the endpoint slacks $\delta_l:=C_{k_l}-L(k_l)$ and $\delta_r:=C_{k_r}-L(k_r)$ are strictly positive, and set $\delta_{\min}:=\min(\delta_l,\delta_r)$. 
    Let $I_C$ be the chord upper bound through the endpoint prices $(k_l,C_{k_l})$ and $(k_r,C_{k_r})$.

    Define
    \begin{equation}\label{eq:h1-geom}
        h_1^{\rm geom}:=\frac{\rho A_l}{(1-\rho)\Delta k_l},
    \end{equation}
    \begin{equation}\label{eq:h1-slack}
        h_1^{\rm slack}:=\frac{\rho^2 A_l^2}{2\delta_{\min}},
    \end{equation}
    and
    \begin{equation}\label{eq:h1-cal}
        h_1^{\rm cal}:=\max\{h_1^{\rm geom},h_1^{\rm slack}\}.
    \end{equation}

    Then the following statements hold.
    \begin{enumerate}[label=\textup{(\alph*)}]
        \item \textbf{Chord domination.}
        For all $k\in[k_l,k_r]$,
        \[
            I_C(k)-L(k)\geq \delta_{\min}.
        \]

        \item \textbf{Gap bound under the geometric lower bound.}
        Fix any first-piece curvature $h_1$ satisfying
        \[
            h_1\geq h_1^{\rm geom}.
        \]
        Then there exists an interval curve $C^\star$ in the local construction class, matching the endpoint prices and slopes, such that
        \[
            \sup_{k\in[k_l,k_r]}\bigl[I_C(k)-C^\star(k)\bigr]
            =
            \frac{\rho^2 A_l^2}{2h_1}.
        \]

        \item \textbf{Sufficient curvature floor for calendar freedom.}
        If $h_1\geq h_1^{\rm cal}$, then the interval curve $C^\star$ in part~\textup{(b)} satisfies
        \[
            C^\star(k)\geq L(k),\qquad k\in[k_l,k_r].
        \]
    \end{enumerate}
\end{proposition}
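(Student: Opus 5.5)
The plan is to prove the three parts in order. Part (a) uses only convexity. Part (b) rests on an explicit three-piece curve with a zero-curvature middle piece. Part (c) then follows by subtracting the two bounds.

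\textbf{Part (a).} Since the curve $C(\cdot,T_{j-1})$ constructed at the previous maturity is convex on $[0,\infty)$ (Theorem~\ref{thm:single-maturity}, together with the sequential construction), the floor $L(k)=e^{-q\Delta T_j}C(\phi_j k,T_{j-1})$ is convex in $k$. It is the composition of a convex function with a linear map, scaled by a positive constant. The chord $I_C$ is affine, so $I_C-L$ is concave on $[k_l,k_r]$. A concave function on an interval attains its minimum at an endpoint. At the endpoints $I_C-L$ equals $\delta_l$ and $\delta_r$, because $I_C$ passes through $C_{k_l}$ and $C_{k_r}$. Hence $I_C-L\ge\delta_{\min}$ on the whole interval. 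The one point to check is that convexity of the earlier curve holds globally, and not only on observed intervals, since $\phi_j k$ may cross strikes or tail junctions. I will invoke the assembled convexity from Theorem~\ref{thm:single-maturity} for this.

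\textbf{Part (b).} I will take the three-piece form~\eqref{eq:three-Cpp} with $h_{\rm mid}=0$ and first-piece length $a_1=\rho A_l/h_1$. The third piece has length $b:=\Delta k_l-a_2=\rho^2A_l/(h_1(1-\rho))$ and curvature $h_3=(1-\rho)^2h_1/\rho$.

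The slope constraint~\eqref{eq:three-slope} holds because $h_1a_1+h_3b=\rho A_l+(1-\rho)A_l=A_l$.

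For the price constraint I will use the integral form $P_l=\int_0^{\Delta k_l}(\Delta k_l-v)C''\,dv$. The first piece contributes $h_1a_1(\Delta k_l-a_1/2)=\rho A_l\Delta k_l-\rho^2A_l^2/(2h_1)$. The third piece contributes $h_3b\cdot b/2=(1-\rho)A_l b/2=\rho^2A_l^2/(2h_1)$. These sum to $\rho A_l\Delta k_l=P_l$.

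Both pieces fit, with a middle piece of nonnegative width, exactly when $a_1+b=\rho A_l/(h_1(1-\rho))\le\Delta k_l$. This is the condition $h_1\ge h_1^{\rm geom}$. All curvatures are nonnegative.

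For the gap $D(u)=I_C(k_l+u)-C^\star(k_l+u)$, we have $D(0)=D(\Delta k_l)=0$ and $D'(u)=\rho A_l-(C^{\star\prime}(k_l+u)-C'_{k_l})$. On the first piece $D'=\rho A_l-h_1u$, which decreases to $0$ at $a_1$. On the middle piece the slope stays at $C'_{k_l}+\rho A_l$, so $D'=0$. On the third piece $D'<0$. Therefore $D$ rises, stays flat, and then falls back to $0$. Its supremum is $D(a_1)=\int_0^{a_1}(\rho A_l-h_1s)\,ds=\rho^2A_l^2/(2h_1)$, and it is attained on the whole middle piece.

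\textbf{Part (c).} For $h_1\ge h_1^{\rm cal}\ge h_1^{\rm geom}$, the curve $C^\star$ of part (b) exists. Writing $C^\star-L=(I_C-L)-D$ and applying parts (a) and (b) gives
\[
C^\star(k)-L(k)\ \ge\ \delta_{\min}-\frac{\rho^2A_l^2}{2h_1}.
\]
The right-hand side is nonnegative precisely when $h_1\ge h_1^{\rm slack}$, which completes the argument.

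The computations in (b) are routine. The main obstacle is the global convexity of $L$ used in (a), in particular when the aligned interval $[\phi_jk_l,\phi_jk_r]$ straddles knots of the previous curve. I expect it to follow from the single-maturity theorem and induction over maturities.
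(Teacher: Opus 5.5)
Your proposal is correct and follows the paper's proof essentially step for step: convexity of $L$ against the affine chord for (a); the explicit curve with $h_{\rm mid}=0$, $a_1=\rho A_l/h_1$ and third-piece width $\rho^2A_l/((1-\rho)h_1)$ for (b); and the subtraction $C^\star-L=(I_C-L)-D$ for (c). Your remark on global convexity of the earlier curve addresses a point the paper simply takes for granted. One slip: the third-piece curvature should read $h_3=(1-\rho)^2h_1/\rho^2$, not $(1-\rho)^2h_1/\rho$, as your own slope check $h_3 b=(1-\rho)A_l$ requires.
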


Proof can be found in Appendix~\ref{sec:app_calendar_reduction}.

Proposition~\ref{prop:calendar-reduction} turns the pointwise calendar constraint into a constructive curvature requirement. The argument does not need to characterize every interval curve satisfying \(C\ge L\). It works with curves whose maximum gap below the chord \(I_C\) is no larger than the endpoint calendar slack. This stronger condition implies \(C\ge L\), and the proposition shows that such a curve can be constructed explicitly.

The role of \(h_1^{\rm geom}\) is to make this explicit curve well-defined on the interval. It ensures that the pieces used in the construction fit inside \([k_l,k_r]\), so the gap formula is attached to an actual feasible curve. The role of \(h_1^{\rm slack}\) is then to keep the certified maximum gap below \(I_C\) within \(\delta_{\min}\). Their maximum \(h_1^{\rm cal}\) gives one scalar lower bound that secures both requirements.

In the per-interval optimization~\eqref{eq:three-piece-opt} below, this scalar bound is imposed on the free first-piece curvature \(h_1\). Lemma~\ref{lemma:three-piece-feasibility} then shows that the constrained interval problem remains feasible. Thus the calendar constraint does not create a feasibility obstruction; the construction retains a certified calendar-free choice on every interval.
The full per-interval optimization problem on $[k_l, k_r]$ is then:
\begin{equation}\label{eq:three-piece-opt}
    \boxed{\;
        \begin{aligned}
            \min_{h_1,\, h_{\rm mid},\, h_3,\, \beta_1,\, \beta_2}\quad &
            \mathcal{F}\bigl(h_1,\, h_{\rm mid},\, h_3,\, \beta_1,\, \beta_2\bigr)
            \\[3pt]
            \text{s.t.}\quad &
            \text{\eqref{eq:three-slope} and \eqref{eq:three-h}}
            \quad \text{(slope and price matching)} \\[2pt]
            & h_1,\, h_{\rm mid},\, h_3 \geq 0
            \quad \text{(butterfly freedom)} \\[2pt]
            & h_1 \geq h_1^{\rm cal}
            \quad \text{(local calendar floor; see~\eqref{eq:h1-cal})} \\[2pt]
            & \beta_1,\, \beta_2 \in [0, 1], \quad \beta_1 + \beta_2 \leq 1.
        \end{aligned}
        \;}
\end{equation}
The objective $\mathcal{F}$ is a per-interval smoothness criterion chosen by the practitioner; reasonable defaults include minimizing the curvature jumps $|h_3 - h_1|$ and $|h_{\rm mid} - h_1|$, or a quadratic functional that penalizes $h_{\rm mid}$ approaching zero (which would create a zero-density region on the middle piece).  The constraints alone guarantee exact price and slope matching, butterfly freedom, and pointwise calendar arbitrage freedom. The choice of $\mathcal{F}$ does not affect this correctness.

\begin{lemma}[Three-piece feasibility]\label{lemma:three-piece-feasibility}
    Fix an interior interval $[k_l, k_r]$ at maturity $T_j$ ($j \ge 2$) with $\Delta k_l > 0$, $A_l > 0$, $\rho := P_l/(A_l\,\Delta k_l) \in (0, 1)$, and the cross-maturity slack $\delta_{\min} > 0$ from Lemma~\ref{lemma:cross-maturity}.  The feasible set of the per-interval optimization~\eqref{eq:three-piece-opt} is non-empty.  Consequently, the three-piece construction always admits a butterfly-free, calendar-free completion that exactly matches the endpoint prices and slopes.
\end{lemma}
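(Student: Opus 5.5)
The plan is to exhibit an explicit feasible point of~\eqref{eq:three-piece-opt}, namely the curve $C^\star$ of Proposition~\ref{prop:calendar-reduction}(b) evaluated at $h_1:=h_1^{\rm cal}$. We then check each constraint directly. The calendar conclusion follows from Proposition~\ref{prop:calendar-reduction}(c), once Lemma~\ref{lemma:cross-maturity} supplies $\delta_{\min}>0$, which makes $h_1^{\rm slack}$ finite.

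Concretely, we fix $h_1=h_1^{\rm cal}\ge h_1^{\rm geom}>0$ (positive because $\rho\in(0,1)$ and $A_l>0$). We take the three-piece parameters with zero middle curvature, as described before the proposition.
\begin{itemize}
\item First piece: curvature $h_1$ on a left piece of width $a_1=\rho A_l/h_1$, i.e.\ $\beta_1=\rho A_l/(h_1\Delta k_l)$.
\item Middle piece: $h_{\rm mid}=0$.
\item Right piece: width $\beta_2\Delta k_l=\rho^2A_l/(h_1(1-\rho))$, with curvature $h_3$ fixed by the slope constraint~\eqref{eq:three-slope}, namely $h_3\beta_2\Delta k_l=A_l-h_1a_1=(1-\rho)A_l$. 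This gives $h_3=h_1(1-\rho)^2/\rho^2\ge 0$.
\end{itemize}
The remaining checks are as follows.
\begin{enumerate}
\item We have $\beta_1,\beta_2\in[0,1]$ and $\beta_1+\beta_2=\rho A_l/((1-\rho)h_1\Delta k_l)\le 1$. This is exactly the inequality $h_1\ge h_1^{\rm geom}$.
\item Slope matching~\eqref{eq:three-slope} holds by the choice of $h_3$.
\item Price matching~\eqref{eq:three-h} must still be verified. With $h_{\rm mid}=0$, we use the integral form $P_l=\int_0^{\Delta k_l}(\Delta k_l-v)C''\,dv$. It is convenient to write $P_l=\Delta k_l A_l-\int_0^{\Delta k_l}v\,C''(v)\,dv$, and then the condition becomes $\int vC''=(1-\rho)A_l\Delta k_l$.
\item Compute the two contributions. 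The left piece contributes $h_1a_1^2/2=\rho^2A_l^2/(2h_1)$. The right piece contributes $(1-\rho)A_l$ times its midpoint $\Delta k_l-\tfrac12\beta_2\Delta k_l$.
\item Substituting $\beta_2\Delta k_l=\rho^2A_l/(h_1(1-\rho))$, the sum is $(1-\rho)A_l\Delta k_l-\tfrac12\rho^2A_l^2/h_1+\tfrac12\rho^2A_l^2/h_1=(1-\rho)A_l\Delta k_l$, as required.
\end{enumerate}
This cancellation is the main computation and the step most likely to need care. It is presumably the same computation that underlies Proposition~\ref{prop:calendar-reduction}(b), so we can cite that proposition for it rather than redo it.

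Finally, butterfly freedom holds because $h_1,h_{\rm mid},h_3\ge0$, and $h_1\ge h_1^{\rm cal}$ holds with equality. Calendar freedom $C^\star\ge L$ on $[k_l,k_r]$ follows from Proposition~\ref{prop:calendar-reduction}(c). That step rests on the chord-domination bound (a) and on the fact that the gap $\rho^2A_l^2/(2h_1)\le\delta_{\min}$ when $h_1\ge h_1^{\rm slack}$. Hence the feasible set is non-empty. The ``consequently'' clause follows because any feasible point, in particular $C^\star$, is a nonnegative piecewise-constant curvature that matches the endpoint prices and slopes and lies above $L$.

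The one point to watch is the degenerate possibility that the middle width $1-\beta_1-\beta_2$ equals zero when $h_1=h_1^{\rm geom}$. This is still admissible, since the constraints allow $\beta_1+\beta_2=1$.
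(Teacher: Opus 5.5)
Your proposal is correct and follows the paper's own proof essentially step for step. It uses the same explicit point \eqref{eq:corner-point}: $h_1=h_1^{\rm cal}$, $h_{\rm mid}=0$, $h_3=(1-\rho)^2h_1^{\rm cal}/\rho^2$. It has the same width check reducing to $h_1\ge h_1^{\rm geom}$, and it delegates calendar freedom to Proposition~\ref{prop:calendar-reduction}(c). The only cosmetic difference is that you verify price matching through the equivalent first-moment form $\int_0^{\Delta k_l} v\,C''\,dv=(1-\rho)A_l\Delta k_l$ rather than substituting directly into \eqref{eq:three-h}.
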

Proof can be found in Appendix~\ref{sec:app_three_piece_feasibility}.

Combining Lemma~\ref{lemma:three-piece-feasibility} with Proposition~\ref{prop:calendar-reduction} and the cross-maturity slack of Lemma~\ref{lemma:cross-maturity}, we obtain the main feasibility result:

\begin{theorem}[Unified interior feasibility]\label{thm:unified}
Let \(T_1<\cdots<T_m\) be the maturity grid. Suppose that the input prices are butterfly-free at each maturity, and that for every \(j\ge 2\) the input calendar assumption \ref{assump:input-calendar} holds at every observed strike of \(T_j\). Then there exists a feasible construction such that the interval curve matches the endpoint prices and slopes, has non-negative curvature, and satisfies the pointwise calendar inequality \(C(k,T_j)\ge e^{-q\Delta T_j}C(e^{-(r-q)\Delta T_j}k,T_{j-1})\) throughout the interval.
\end{theorem}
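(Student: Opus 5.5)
The proof will be by induction on the maturity index $j$, assembling the earlier results interval by interval. For $j=1$ there is no calendar constraint, and Theorem~\ref{thm:two-piece} together with Theorem~\ref{thm:single-maturity} gives a convex curve through every observed quote. This curve also matches the estimated slopes at every node. This base curve already satisfies the two hypotheses that Lemma~\ref{lemma:cross-maturity} needs for later maturities: it is convex on each observed interval, and it reprices the input at every observed strike. The inductive hypothesis at step $j$ is therefore that $C(\cdot,T_{j-1})$ has been built with exactly these two properties. We also assume the property holds at all earlier maturities, as Lemma~\ref{lemma:cross-maturity} requires.

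Fix $j\ge 2$ and an interior interval $[k_l,k_r]$ of $T_j$. Assumption~\ref{assump:input-calendar} holds at both endpoints, so Lemma~\ref{lemma:cross-maturity} gives $\delta_{\min}>0$. We then split into cases.

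\emph{Case (i).} The two-piece solution of Theorem~\ref{thm:two-piece} already satisfies $C\ge L$ on the interval. We keep it; it matches endpoint prices and slopes and has non-negative curvature.

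\emph{Case (ii).} The interval is non-degenerate ($A_l>0$, $\rho\in(0,1)$) and the two-piece fit violates the floor. Lemma~\ref{lemma:three-piece-feasibility} provides a feasible point of~\eqref{eq:three-piece-opt}. Specifically, take $h_1\ge h_1^{\rm cal}$ and the explicit curve $C^\star$ of Proposition~\ref{prop:calendar-reduction}(b). Part~(c) of that proposition gives $C^\star\ge L$ pointwise on $[k_l,k_r]$. The equality constraints \eqref{eq:three-slope}--\eqref{eq:three-h} and $h_1,h_{\rm mid},h_3\ge0$ give exact endpoint matching and convexity.

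\emph{Case (iii).} The interval is degenerate. If $A_l=0$, convexity together with price matching forces $P_l=0$, so the curve is the chord $I_C$. Proposition~\ref{prop:calendar-reduction}(a), whose proof uses only convexity of $L$ and the endpoint slacks, then gives $I_C-L\ge\delta_{\min}>0$. The boundary cases $\rho\in\{0,1\}$ place all curvature at one endpoint. We would handle them by the same chord-domination argument: with $\rho=1$ the curve coincides with the chord except at $k_l$, and with $\rho=0$ it coincides with the chord except at $k_r$, so the gap to $L$ is controlled by $\delta_{\min}$ or by the endpoint slacks. These cases do not arise in the non-degenerate construction and are included only for completeness.

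The floor $L(k)=e^{-q\Delta T_j}C(\phi_j k,T_{j-1})$ is convex in $k$ because the previous curve is convex on all of $[0,\infty)$ by Theorem~\ref{thm:single-maturity}. The inherited curve is then exactly what Proposition~\ref{prop:calendar-reduction}(a) requires.

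To close the induction, we check that the newly assembled $C(\cdot,T_j)$ is again convex on each observed interval and matches every input price, which holds in all three cases. The hypotheses of Lemma~\ref{lemma:cross-maturity} therefore persist at step $j+1$.

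The main obstacle I expect is making this induction airtight on the interface between Lemma~\ref{lemma:cross-maturity} and Proposition~\ref{prop:calendar-reduction}. The slack in the lemma is stated for the \emph{constructed} previous curve. For that, the previous curve must lie below the chord $I_C$ through the $T_{j-1}$ quotes, which requires convexity on each $T_{j-1}$ interval even where a three-piece replacement was used. That convexity is guaranteed by non-negative curvature. A second subtle point is the convexity of $L$ needed in part~(a) of Proposition~\ref{prop:calendar-reduction}. It requires global convexity of the previous curve, including across nodes, and this holds because slopes are matched continuously at every node.
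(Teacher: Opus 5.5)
Your argument is correct and is essentially the paper's proof: induction on $j$, strict endpoint slack from Lemma~\ref{lemma:cross-maturity}, and the explicit witness $C^\star$ of Lemma~\ref{lemma:three-piece-feasibility}, which Proposition~\ref{prop:calendar-reduction}(c) certifies as calendar-free. The paper differs only in using $C^\star$ on every interior interval, with no two-piece case split, and in also carrying the tails through the induction; neither changes the logic. One caveat concerns your completeness remark. For $\rho\in\{0,1\}$ with $A_l>0$, the chord has slope $C'_{k_l}+\rho A_l$, so it misses the prescribed slope at $k_l$ when $\rho=1$ and at $k_r$ when $\rho=0$. In fact no bounded non-negative curvature can produce $\rho\in\{0,1\}$, so these cases must be excluded as standing non-degeneracy hypotheses, as the paper does, rather than handled by chord domination.
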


Proof can be found in Appendix~\ref{sec:app_unified}.

Assumption~\ref{assump:input-calendar} and Lemma~\ref{lemma:cross-maturity} give strictly positive endpoint slack on each interior interval whose endpoint strikes satisfy~\eqref{eq:S-input}. Proposition~\ref{prop:calendar-reduction} converts this slack into the scalar curvature floor $h_1^{\rm cal}$. Lemma~\ref{lemma:three-piece-feasibility} then shows that the interval constraints remain feasible under this floor. Hence each such interval admits a constructed curve that matches the endpoint data, has non-negative curvature, and satisfies the inherited calendar floor.

This completes the calendar-free construction on the observed strike range. It remains to handle the two unquoted regions. The right tail starts from the largest observed strike, while the left tail connects the origin to the smallest observed strike. The next two subsections give sufficient conditions under forward moneyness alignment that make these power-law tail pieces calendar-free across maturities. The boundary slopes used in these tails are then selected globally in Section~\ref{sec:global_slope}.

\subsection{Power-Law Right Tail under Forward Moneyness Alignment}\label{sec:right_tail_calendar}

Recall from Section~\ref{sec:powerlaw_tail} that the power-law tail density at maturity $T_j$ is $f(x) = \lambda_j \frac{\alpha_j-1}{k_n^{T_j}}\left(\frac{k_n^{T_j}}{x}\right)^{\alpha_j}$. Integrating to obtain the call price (see Appendix~\ref{sec:app_powerlaw}) yields:
$$
C(k, T_j) = C(k_n^{T_j}, T_j) \cdot \left(\frac{k_n^{T_j}}{k}\right)^{\alpha_j - 2}, \quad k \geq k_n^{T_j},
$$
where $\alpha_j = 2 + \frac{|C'(k_n^{T_j})| \cdot k_n^{T_j}}{C(k_n^{T_j}, T_j)} > 2$, consistent with~\eqref{eq:alpha_match}.

Under forward moneyness alignment $k_n^{T_j} = e^{-(r-q)(T_{j+1}-T_j)} k_n^{T_{j+1}}$, we obtain the following result:

\begin{lemma}[Calendar arbitrage-free power-law tail]\label{lemma:powerlaw_calendar}
    Let $C(k, T_j) = C(k_n^{T_j}, T_j) \left(\frac{k_n^{T_j}}{k}\right)^{\alpha_j - 2}$ for $k \geq k_n^{T_j}$, with forward moneyness alignment $k_n^{T_j} = e^{-(r-q)(T_{j+1}-T_j)} k_n^{T_{j+1}}$ for all $j \in \{1,\ldots,m-1\}$.
    
    If $\alpha_j$ is non-increasing in $j$ and the anchor-point calendar condition $C(k_n^{T_{j+1}}, T_{j+1}) \geq e^{-q\Delta T_j}\,C(k_n^{T_j}, T_j)$ holds for all $j$, then the power-law right-tail construction is calendar arbitrage-free under forward moneyness alignment.
\end{lemma}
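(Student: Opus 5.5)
The plan is to reduce the calendar inequality on the right tail to a one-variable comparison of two power functions. Fix consecutive maturities $T_j<T_{j+1}$ and write $\phi:=e^{-(r-q)(T_{j+1}-T_j)}$, so the forward-moneyness alignment reads $k_n^{T_j}=\phi\,k_n^{T_{j+1}}$. For $k\ge k_n^{T_{j+1}}$ the aligned strike satisfies $\phi k\ge \phi k_n^{T_{j+1}}=k_n^{T_j}$. Both sides of the calendar inequality $C(k,T_{j+1})\ge e^{-q\Delta T_j}C(\phi k,T_j)$ are therefore evaluated on their respective power-law tails, with no mixing with body intervals.

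Substituting the closed forms, the left side is $C(k_n^{T_{j+1}},T_{j+1})\,(k_n^{T_{j+1}}/k)^{\alpha_{j+1}-2}$. The right side is $e^{-q\Delta T_j}C(k_n^{T_j},T_j)\,(k_n^{T_j}/(\phi k))^{\alpha_j-2}$. By alignment, $k_n^{T_j}/(\phi k)=k_n^{T_{j+1}}/k=:x\in(0,1]$. The inequality then becomes
\[
C(k_n^{T_{j+1}},T_{j+1})\,x^{\alpha_{j+1}-2}\;\ge\; e^{-q\Delta T_j}C(k_n^{T_j},T_j)\,x^{\alpha_j-2},\qquad x\in(0,1].
\]

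Two elementary facts then finish this range. First, since $\alpha_{j+1}\le\alpha_j$ and $x\le1$, we have $x^{\alpha_{j+1}-2}\ge x^{\alpha_j-2}$; both exponents are positive by \eqref{eq:alpha_match}. Second, the anchor-point condition $C(k_n^{T_{j+1}},T_{j+1})\ge e^{-q\Delta T_j}C(k_n^{T_j},T_j)$ holds by hypothesis. Multiplying these two inequalities between nonnegative quantities gives the claim for every $k\ge k_n^{T_{j+1}}$. Since the argument holds for each $j$, it covers all consecutive pairs, and calendar freedom for nonconsecutive pairs follows by chaining.

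The main subtlety is scope rather than algebra. One needs to state that ``calendar arbitrage-free for the right-tail construction'' means the inequality on $k\ge k_n^{T_{j+1}}$, which is exactly the region where both curves are in power-law form. Strikes below $k_n^{T_{j+1}}$ whose aligned image lies in the $T_j$ tail are not covered by this lemma; the interior construction of Theorem~\ref{thm:unified} and the global slope selection of Section~\ref{sec:global_slope} are meant to handle them. I would also remark explicitly that the hypotheses are sufficient, not necessary, and that $\alpha$ being non-increasing is imposed through the boundary-slope selection.
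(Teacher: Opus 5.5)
Your proof is correct and follows the paper's own argument. The paper sets $u=k/k_n^{T_{j+1}}\ge 1$ (your $x=1/u$) and uses the alignment to write both tails as powers of this same ratio. It then concludes because the anchor ratio is at least $1\ge u^{\alpha_{j+1}-\alpha_j}$, which is exactly your product of the two inequalities.

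Two small points about the side remarks, not the proof. Under exact alignment, $\phi k\ge k_n^{T_j}$ holds if and only if $k\ge k_n^{T_{j+1}}$, so no strike below $k_n^{T_{j+1}}$ has its aligned image in the $T_j$ tail; those strikes map into the $T_j$ body. Also, positivity of the exponents is not needed for $x^{\alpha_{j+1}-2}\ge x^{\alpha_j-2}$ on $(0,1]$.
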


Proof can be found in Appendix~\ref{sec:app_powerlaw_calendar}.

The anchor-point calendar condition is ensured by the interior construction at the observed strike~$k_n$.

\subsection{Power-Law Left Tail under Forward Moneyness Alignment}\label{sec:lhs_pow_calendar}

When the power-law left tail~\eqref{eq:left_pow} is used, the calendar condition takes a form that mirrors the right-tail result.  Recall that at maturity~$T_j$, the call price on $[0, k_1^{T_j}]$ is, via~\eqref{eq:left_pow_compact},
$$
C(k, T_j) = e^{-qT_j} - e^{-rT_j}k + \gamma_{L,j}\left(\frac{k}{k_1^{T_j}}\right)^{\!\beta_j + 2}.
$$

Under forward moneyness alignment $k_1^{T_j} = e^{-(r-q)\Delta T_j}\,k_1^{T_{j+1}}$, we obtain:

\begin{lemma}[Calendar arbitrage-free power-law left tail]\label{lemma:lhs_pow_calendar}
    Let $C(k, T_j)$ be given by~\eqref{eq:left_pow_compact} on $[0, k_1^{T_j}]$ with exponent~$\beta_j$, and let $k_1^{T_j} = e^{-(r-q)\Delta T_j}\,k_1^{T_{j+1}}$ for all $j \in \{1,\ldots,m-1\}$.
    
    If $\beta_j$ is non-increasing in $j$ and the anchor-point calendar condition $\gamma_{L,j+1} \geq e^{-q\Delta T_j}\,\gamma_{L,j}$ holds for all $j$, then the power-law left-tail construction is calendar arbitrage-free under forward moneyness alignment.
\end{lemma}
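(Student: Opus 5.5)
The plan is to check the calendar inequality $C(k,T_{j+1})\ge e^{-q\Delta T_j}\,C(\phi k,T_j)$ directly on $[0,k_1^{T_{j+1}}]$, with $\phi:=e^{-(r-q)\Delta T_j}$ and $\Delta T_j:=T_{j+1}-T_j$. The closed form~\eqref{eq:left_pow_compact} should reduce it to a one-variable comparison of two power functions.

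\emph{Step 1: the aligned strike stays in the left segment.} By the alignment hypothesis $k_1^{T_j}=\phi\,k_1^{T_{j+1}}$, the map $k\mapsto \phi k$ sends $[0,k_1^{T_{j+1}}]$ onto $[0,k_1^{T_j}]$. So both sides of the inequality are given by~\eqref{eq:left_pow_compact}, and no body piece of $T_j$ is involved. Introduce the common normalized variable
\[
x:=\frac{k}{k_1^{T_{j+1}}}=\frac{\phi k}{k_1^{T_j}}\in[0,1].
\]

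\emph{Step 2: the affine parts cancel.} Substitute~\eqref{eq:left_pow_compact} at $T_j$ into the right-hand side. The constant term becomes $e^{-q\Delta T_j}e^{-qT_j}=e^{-qT_{j+1}}$. The linear term becomes $e^{-q\Delta T_j}e^{-rT_j}\phi k=e^{-rT_{j+1}}k$, since $e^{-q\Delta T_j}e^{-(r-q)\Delta T_j}=e^{-r\Delta T_j}$. These agree exactly with the affine part $e^{-qT_{j+1}}-e^{-rT_{j+1}}k$ of $C(k,T_{j+1})$; this is the same mechanism as Assumption~\ref{assump:input-calendar}(i). The calendar inequality therefore reduces to
\[
\gamma_{L,j+1}\,x^{\beta_{j+1}+2}\;\ge\; e^{-q\Delta T_j}\,\gamma_{L,j}\,x^{\beta_j+2},\qquad x\in[0,1].
\]

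\emph{Step 3: the power comparison.} Since $\beta_{j+1}\le\beta_j$ and $x\in[0,1]$, we have $x^{\beta_{j+1}+2}\ge x^{\beta_j+2}\ge 0$; both exponents exceed $1$ because $\beta>-1$. By Lemma~\ref{lem:left_data_pos}, $\gamma_{L,j+1}\ge 0$. Hence
\[
\gamma_{L,j+1}x^{\beta_{j+1}+2}\ge \gamma_{L,j+1}x^{\beta_j+2}\ge e^{-q\Delta T_j}\gamma_{L,j}x^{\beta_j+2},
\]
where the last step uses the anchor-point condition. Chaining over $j$ gives the claim for all consecutive pairs.

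The main obstacle is the bookkeeping rather than the inequality itself. One must verify the exact cancellation of the affine parts, which needs the factor $e^{-q\Delta T_j}$ together with the alignment scale $\phi$. One must also treat the degenerate left segments, where $\gamma_L=\delta_L=0$ and the segment is the line $\ell$. For these, read the power term as zero: if $T_{j+1}$ is degenerate, the anchor condition forces $\gamma_{L,j}=0$ as well (because $\gamma_{L,j}\ge0$), giving equality. If only $T_j$ is degenerate, the right-hand power term vanishes and the inequality is immediate.
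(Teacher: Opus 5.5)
Your proposal is correct and follows essentially the same route as the paper. Both use the alignment $k_1^{T_j}=\phi\,k_1^{T_{j+1}}$ to pass to the common variable $v=k/k_1^{T_{j+1}}\in[0,1]$, reduce to comparing $\gamma_{L,j+1}v^{\beta_{j+1}+2}$ with $e^{-q\Delta T_j}\gamma_{L,j}v^{\beta_j+2}$, and conclude from $v^{\beta_j-\beta_{j+1}}\le 1$ plus the anchor condition. The only differences are minor: you verify explicitly that the affine parts cancel (the paper assumes this when it writes $H$), you invoke $\gamma_{L,j+1}\ge 0$ where the paper implicitly uses $\gamma_{L,j}\ge 0$, and you add a degenerate-case remark that the paper handles separately in Lemma~\ref{lemma:base-case}.
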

Proof can be found in Appendix~\ref{sec:app_lhs_pow_calendar}.

The anchor-point calendar condition is ensured by the interior construction at the observed strike~$k_1$.

The preceding lemma covers the non-degenerate left tail, where the power-law
curvature in~\eqref{eq:left_pow} is used. It remains to discuss the boundary
case in which the left segment carries no curvature. In that case the
power-law formula is not invoked, and the segment is fixed by the boundary
values at the origin. The following lemma shows that this replacement is also
calendar-free.

\begin{lemma}[Degenerate left segment under calendar alignment]\label{lemma:base-case}
Suppose the left segment at $T_j$, $j \ge 2$, is degenerate, so that
$C''$ vanishes on $(0,k_1)$. Equivalently, the left-segment residuals at $T_j$ vanish, $\gamma_L=\delta_L=0$, and the
non-degenerate hypothesis of Proposition~\ref{prop:left_pow} fails. Then the
left region is the linear segment
$C(k,T_j)=e^{-qT_j}-e^{-rT_j}k$ on $[0,k_1]$. It is butterfly-free, matches
the endpoint price and slope, and satisfies the calendar inequality on
$[0,k_1]$.
\end{lemma}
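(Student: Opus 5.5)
The plan is to show that the linear segment $\ell_j(k):=e^{-qT_j}-e^{-rT_j}k$ coincides with the calendar floor $L(k):=e^{-q\Delta T_j}C(\phi_j k,T_{j-1})$ on $[0,k_1]$. Calendar freedom then holds with equality there. The first two claims are routine. Since $C''\equiv 0$ on $(0,k_1)$ and the origin data $C(0)=e^{-qT_j}$, $C'(0)=-e^{-rT_j}$ are imposed, integrating twice gives $C=\ell_j$ on $[0,k_1]$. Degeneracy means $\gamma_L=\delta_L=0$, i.e.\ $\ell_j(k_1)=C_{k_1}$ and $\ell_j'(k_1)=C'_{k_1}$, so the segment matches the endpoint price and slope at $k_1$. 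Zero curvature is non-negative, so the segment is butterfly-free.

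For the calendar inequality, I first get a universal lower bound on the floor. The constructed curve $C(\cdot,T_{j-1})$ is convex with $C(0,T_{j-1})=e^{-qT_{j-1}}$ and $C'(0,T_{j-1})=-e^{-rT_{j-1}}$, by Theorem~\ref{thm:single-maturity} and the earlier multi-maturity steps. It therefore lies above its tangent line at the origin:
\[
C(x,T_{j-1})\ge e^{-qT_{j-1}}-e^{-rT_{j-1}}x .
\]
Substitute $x=\phi_j k$ and multiply by $e^{-q\Delta T_j}$, using $e^{-q\Delta T_j}e^{-rT_{j-1}}\phi_j=e^{-rT_j}$. This gives $L(k)\ge \ell_j(k)$ for all $k\ge 0$. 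So the degenerate segment is the smallest admissible curve, and it is calendar-free only if $L\le \ell_j$, i.e.\ only if $L=\ell_j$ on $[0,k_1]$.

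The key step, and the main obstacle, is to force this equality. I would use the calendar inequality at the anchor $k_1$. This is either the strict input slack of Lemma~\ref{lemma:cross-maturity}, when $\phi_j k_1$ lies in the observed range of $T_{j-1}$, or the anchor-point calendar condition that the paper says is secured by the interior construction at $k_1$. Either way it gives $C_{k_1}\ge L(k_1)$, and combining with the bound above yields the sandwich
\[
\ell_j(k_1)=C_{k_1}\ge L(k_1)\ge \ell_j(k_1).
\]
Hence $L(k_1)=\ell_j(k_1)$. In the strict case of Lemma~\ref{lemma:cross-maturity} the first inequality is strict, which is a contradiction. So under Assumption~\ref{assump:input-calendar} that configuration cannot occur, and only the non-strict anchor case remains.

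In the remaining case, $C(\cdot,T_{j-1})$ touches its origin tangent line at $x=\phi_j k_1$. A convex function that agrees with a supporting line at two points, here $0$ and $\phi_j k_1$, equals that line on the whole interval between them. Therefore $C(x,T_{j-1})=e^{-qT_{j-1}}-e^{-rT_{j-1}}x$ on $[0,\phi_j k_1]$, and so $L(k)=\ell_j(k)=C(k,T_j)$ for every $k\in[0,k_1]$. This gives the calendar inequality on $[0,k_1]$, with equality.

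The delicate bookkeeping is confirming that the anchor-point calendar inequality at $k_1$ is genuinely available in each configuration of $\phi_j k_1$ relative to the strike range of $T_{j-1}$. This needs a check whether $\phi_j k_1$ falls in the earlier left tail or in the earlier body.
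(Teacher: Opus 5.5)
Your proposal is correct and follows essentially the paper's argument: the paper writes your sandwich as $0=\gamma_{L,j}\ge e^{-q\Delta T_j}\gamma_{L,j-1}\ge 0$ (under forward alignment of the first strikes), then uses the tangent/chord squeeze to make the $T_{j-1}$ left segment, and hence $L$, linear and equal to $C(\cdot,T_j)$ on $[0,k_1]$. Like you, it takes the anchor inequality at $k_1$ as supplied by the interior construction without further checking, and your remark that the strict slack of Lemma~\ref{lemma:cross-maturity} would make the degenerate case impossible is a correct side observation.
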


Proof can be found in Appendix~\ref{app:degenerate_left_tail}.

\subsection{Global Slope Selection for Calendar-Free Tails}\label{sec:global_slope}

To ensure the calendar conditions on both tails, the boundary slopes $C'(k_1^{T_j})$ and $C'(k_n^{T_j})$, $j = 1,\ldots,m$, must be selected so that $\alpha_j$ is non-increasing (right tail, Lemma~\ref{lemma:powerlaw_calendar}) and $\beta_j$ is non-increasing (left tail, Lemma~\ref{lemma:lhs_pow_calendar}).  We first compute per-maturity boundary-slope estimates $\widehat{C'}(k_1^{T_j})$ via Algorithm~\ref{alg:c-k-boundary} and $\widehat{C'}(k_{n_j}^{T_j})$ via backward three-point Lagrange differentiation on the window at the largest observed strikes.  We then solve a global optimization for the tail-feasible slopes closest to these estimates.

\paragraph{Right tail.}  The calendar condition requires $\alpha_j$ to be non-increasing, where $\alpha_j - 2 = |C'(k^{T_j}_n)| \cdot k^{T_j}_n / C(k^{T_j}_n, T_j)$.  This yields:
\begin{align}
    &\min_{C'(k^{T_j}_n)} \sum_{j=1}^m \left( C'(k^{T_j}_n) - \widehat{C'}(k^{T_j}_n) \right)^2 \label{eq:global_pow}\\
    &\text{s.t.} \quad \frac{|C'(k^{T_j}_n)| \cdot k^{T_j}_n}{C(k^{T_j}_n, T_j)} \geq \frac{|C'(k^{T_{j+1}}_n)| \cdot k^{T_{j+1}}_n}{C(k^{T_{j+1}}_n, T_{j+1})}, \quad j = 1,\ldots,m-1. \nonumber
\end{align}

\paragraph{Left tail.}  The free parameter is $s_j := C'(k_1^{T_j})$, which determines $\beta_j$ via~\eqref{eq:left_pow_beta}.  Since $\beta_j = k_1^{T_j}(s_j + e^{-rT_j})/\gamma_{L,j} - 2$ is increasing in $s_j$, the non-increasing condition on~$\beta_j$ is a linear inequality in~$s_j$:
\begin{align}
    &\min_{s_j} \sum_{j=1}^m \left(s_j - \widehat{C'}(k_1^{T_j}) \right)^2 \label{eq:global_lhs_pow}\\
    &\text{s.t.} \quad \frac{k_1^{T_j}(s_j + e^{-rT_j})}{\gamma_{L,j}} \geq \frac{k_1^{T_{j+1}}(s_{j+1} + e^{-rT_{j+1}})}{\gamma_{L,j+1}}, \quad j = 1,\ldots,m-1. \nonumber
\end{align}

Both tail-slope programs are finite-dimensional convex quadratic programs with
linear inequality constraints. They are solved independently because the
right-tail slopes \(C'(k_n^{T_j})\) and the left-tail slopes \(C'(k_1^{T_j})\)
enter separate tail formulas.

The feasible set of each program is non-empty by direct construction, as the
following examples show. For the right tail, since \(C'(k_n^{T_j})<0\) we have
\(\alpha_j-2=-C'(k_n^{T_j})\,k_n^{T_j}/C(k_n^{T_j},T_j)\); choosing any small
\(a>0\) and setting \(C'(k_n^{T_j})=-a\,C(k_n^{T_j},T_j)/k_n^{T_j}\) gives
\(\alpha_j=2+a\) for every \(j\), so the right-tail monotonicity constraints
hold with equality. For the left tail, since
\(\beta_j=k_1^{T_j}(C'(k_1^{T_j})+e^{-rT_j})/\gamma_{L,j}-2\), choosing any
\(b>-1\) and setting \(C'(k_1^{T_j})=-e^{-rT_j}+(b+2)\,\gamma_{L,j}/k_1^{T_j}\)
gives \(\beta_j=b\) for every \(j\), so the left-tail monotonicity constraints
also hold with equality.

The selected boundary slopes are then used as endpoint slopes in the adjacent
body intervals. The body construction enforces non-negative curvature on the
observed strike range, while Propositions~\ref{prop:powerlaw} and
\ref{prop:left_pow} enforce non-negative curvature on the two tails. Thus the
assembled curve remains convex across the body-tail junctions.

Algorithm~\ref{alg:global} summarizes the procedure.

\begin{algorithm}[ht]
    \caption{Global slope selection for calendar-free tails}\label{alg:global}
    \begin{algorithmic}[1]
        \Require Maturities $T_1 < T_2 < \dots < T_m$
        \For{$j = 1$ to $m$}
        \State Compute boundary-slope estimates $\widehat{C'}(k_1^{T_j})$ via Algorithm~\ref{alg:c-k-boundary} and $\widehat{C'}(k_{n_j}^{T_j})$ via backward three-point Lagrange on $\{k_{n_j-2}^{T_j}, k_{n_j-1}^{T_j}, k_{n_j}^{T_j}\}$
        \EndFor
        \State Solve~\eqref{eq:global_pow} for right-tail slopes $\{C'(k^{T_j}_n)\}_{j=1}^m$
        \State Solve~\eqref{eq:global_lhs_pow} for left-tail slopes $\{C'(k_1^{T_j})\}_{j=1}^m$
        \State \Return $\{C'(k_1^{T_j})\}_{j=1}^m$ and $\{C'(k^{T_j}_n)\}_{j=1}^m$
    \end{algorithmic}
\end{algorithm}

\subsection{Density-Anchored Right Tail Extension}\label{sec:density_anchored_tail}

The two-anchor right-tail construction introduced in Section~\ref{sec:powerlaw_tail} and carried to the multi-maturity setting in Section~\ref{sec:right_tail_calendar} matches the boundary data $(C(k_n^{T_j},T_j), C'(k_n^{T_j}))$ using two free parameters of the power-law construction. No degree of freedom remains to match the interior density value $f(k_n^{T_j-}) = e^{rT_j}\,h_3^{T_j}$ from the last interior interval.  The constructed density can therefore be discontinuous at the junction strike $k_n^{T_j}$, where the interior construction meets the tail.  This subsection introduces an optional extension that adds a third anchor, the interior density at the junction strike, together with a per-maturity selection rule: some maturities may use the augmented family while others retain the two-anchor form, all under a single calendar-arbitrage condition.  As in Remark~\ref{rmk:step-density}, this extension is an optional smoothness refinement; every guarantee of the construction holds without it.

\paragraph{Unified shifted power-law family.}
For each maturity $T_j$, consider the family
\begin{align}\label{eq:shifted_pl_family}
    f(x) = a_j\,(x - k_n^{T_j} + \mu_j)^{-\alpha_j}, \qquad x \in [k_n^{T_j}, \infty),
\end{align}
with $\alpha_j > 2$, $\mu_j > 0$, $a_j > 0$.  Two specializations are given:
\begin{itemize}
    \item \textbf{Two-anchor power-law tail (PL$_2$)}: $\mu_j = k_n^{T_j}$ is fixed; the remaining two parameters $(a_j, \alpha_j)$ are determined by mass and price matching.  This recovers exactly the construction of Section~\ref{sec:powerlaw_tail} and Proposition~\ref{prop:powerlaw}.
    \item \textbf{Three-anchor power-law tail (PL$_3$)}: $\mu_j$ is free; the three parameters $(a_j, \alpha_j, \mu_j)$ are determined by mass, price, and a junction density match.
\end{itemize}

A choice of PL$_2$ or PL$_3$ at each maturity defines a sequence we call a mixed-tail configuration.

The three-anchor case uses the density produced by the last interior interval
as an additional boundary condition. If the last interior curvature at
$k_n^{T_j}$ is $h_3^{T_j}$, then the left-limit density at the junction is
$e^{rT_j}h_3^{T_j}$. We therefore set
$D_0^{T_j}:=e^{rT_j}h_3^{T_j}$. The other two quantities are the tail mass
$\lambda_j:=-C'(k_n^{T_j})e^{rT_j}$ and the undiscounted boundary call value
$\Pi_j:=e^{rT_j}C(k_n^{T_j},T_j)$. The three-anchor tail is then determined
by matching tail mass, boundary call value, and junction density:

\begin{align}
    \int_{k_n^{T_j}}^{\infty} f(x)\,dx &= \lambda_j, \label{eq:pl3_mass}\\
    e^{-rT_j}\!\int_{k_n^{T_j}}^{\infty} (x - k_n^{T_j})\,f(x)\,dx &= C(k_n^{T_j}, T_j), \label{eq:pl3_price}\\
    f(k_n^{T_j+}) = a_j\,\mu_j^{-\alpha_j} &= D_0^{T_j}. \label{eq:pl3_density}
\end{align}

The system~\eqref{eq:pl3_mass}--\eqref{eq:pl3_density} has a unique solution with $\alpha_j>2$, $\mu_j>0$, and $a_j>0$ if and only if this inequality holds:
\begin{align}\label{eq:pl3_admissibility}
    D_0^{T_j}\,\Pi_j > \lambda_j^2.
\end{align}

This condition has a direct tail interpretation. Since $\Pi_j/\lambda_j$ is the average excess distance of the residual right-tail mass beyond $k_n^{T_j}$, condition~\eqref{eq:pl3_admissibility} is equivalent to $D_0^{T_j}>\lambda_j/(\Pi_j/\lambda_j)$. Thus the junction density must be large enough relative to the residual tail mass per unit average excess distance. It rules out the case where the interior density at the last strike is too small to support a positive, decreasing shifted power-law tail with the prescribed mass and boundary call value.

When~\eqref{eq:pl3_admissibility} holds, the unique solution
$(\alpha_j,\mu_j,a_j)$ of
\eqref{eq:pl3_mass}--\eqref{eq:pl3_density} is
\begin{align}\label{eq:pl3_closed_form}
    \alpha_j
    = \frac{2D_0^{T_j}\Pi_j-\lambda_j^2}{D_0^{T_j}\Pi_j-\lambda_j^2},
    \qquad
    \mu_j = \frac{(\alpha_j-1)\lambda_j}{D_0^{T_j}},
    \qquad
    a_j = D_0^{T_j}\mu_j^{\alpha_j}.
\end{align}

The corresponding closed-form call price for $K \geq k_n^{T_j}$ is
\begin{align}\label{eq:pl3_call_price}
    C(K, T_j) = e^{-rT_j}\,\frac{a_j\,(K - k_n^{T_j} + \mu_j)^{2-\alpha_j}}{(\alpha_j - 1)(\alpha_j - 2)}.
\end{align}

\begin{remark}[PL$_2$ as a special case of PL$_3$]\label{rmk:pl3_includes_pl2}
    Substituting $\mu_j = k_n^{T_j}$ into~\eqref{eq:pl3_closed_form} forces $D_0^{T_j} = \lambda_j(\alpha_j - 1)/k_n^{T_j}$, after which the system~\eqref{eq:pl3_mass}--\eqref{eq:pl3_density} reduces to the two-anchor matching of Section~\ref{sec:powerlaw_tail} and~\eqref{eq:alpha_match} is recovered.  Thus the two-anchor tail is the constrained special case $\mu_j = k_n^{T_j}$ of the three-anchor family.
\end{remark}

The recovered parameters define a candidate PL$_3$ tail whenever
\eqref{eq:pl3_admissibility} holds. The proposition below verifies that this
tail preserves the core properties of the two-anchor construction: price
matching, slope matching, tail mass, positivity, monotonicity, and finite first
moment. Its additional benefit is density matching at the junction strike.

\begin{proposition}[Density-anchored power-law tail properties]\label{prop:powerlaw_density}
  Let $k_n^{T_j} > 0$, $C(k_n^{T_j}, T_j) > 0$,
    $C'(k_n^{T_j}) \in (-e^{-rT_j}, 0)$, so $\lambda_j \in (0, 1)$.
    Suppose $D_0^{T_j} > 0$ and~\eqref{eq:pl3_admissibility} holds. Then
    the density~\eqref{eq:shifted_pl_family} with parameters~\eqref{eq:pl3_closed_form} satisfies:
    \begin{enumerate}
        \item \textbf{Price matching:} $e^{-rT_j}\!\int_{k_n^{T_j}}^{\infty}(x - k_n^{T_j})\,f(x)\,dx = C(k_n^{T_j}, T_j)$.
        \item \textbf{Derivative matching:} $-e^{-rT_j}\!\int_{k_n^{T_j}}^{\infty} f(x)\,dx = C'(k_n^{T_j})$.
        \item \textbf{Density matching:} $f(k_n^{T_j+}) = D_0^{T_j}$, i.e.\ $C''$ is continuous across $k_n^{T_j}$.
        \item \textbf{Tail mass:} $\int_{k_n^{T_j}}^{\infty} f(x)\,dx = \lambda_j \in (0, 1)$.
        \item \textbf{Non-negativity and monotonicity:} $f(x) > 0$ and $f'(x) < 0$ for all $x > k_n^{T_j}$.
        \item \textbf{Finite first moment:} $\int_{k_n^{T_j}}^{\infty} x\,f(x)\,dx < \infty$, since $\alpha_j > 2$.
    \end{enumerate}
\end{proposition}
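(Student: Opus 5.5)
The plan is to verify everything by direct integration of the shifted power law. First I check that the parameters \eqref{eq:pl3_closed_form} are well defined under the hypotheses. Write $X:=D_0^{T_j}\Pi_j$ and $Y:=\lambda_j^2$. The admissibility condition \eqref{eq:pl3_admissibility} is $X>Y>0$, so
\[
\alpha_j=\frac{2X-Y}{X-Y}=2+\frac{Y}{X-Y}>2.
\]
Since $\alpha_j>2$, $\lambda_j>0$ and $D_0^{T_j}>0$, we get $\mu_j=(\alpha_j-1)\lambda_j/D_0^{T_j}>0$ and $a_j=D_0^{T_j}\mu_j^{\alpha_j}>0$. With $\alpha_j>2$ established, all the improper integrals below converge.

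Next I compute the two moments, using the substitution $y=x-k_n^{T_j}+\mu_j$:
\[
\int_{k_n^{T_j}}^\infty f\,dx=\frac{a_j\mu_j^{1-\alpha_j}}{\alpha_j-1},\qquad
\int_{k_n^{T_j}}^\infty (x-k_n^{T_j})f\,dx=\frac{a_j\mu_j^{2-\alpha_j}}{(\alpha_j-1)(\alpha_j-2)}.
\]
The second integral is obtained by writing $x-k_n^{T_j}=y-\mu_j$ and integrating the two pieces separately.

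Then I substitute $a_j\mu_j^{-\alpha_j}=D_0^{T_j}$, which gives Property 3 (density matching) immediately from $f(k_n^{T_j+})=a_j\mu_j^{-\alpha_j}$.

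\begin{itemize}
\item \textbf{Mass.} The first integral becomes $D_0^{T_j}\mu_j/(\alpha_j-1)=\lambda_j$ by the definition of $\mu_j$. This is Property 4, and Property 2 follows because $\lambda_j=-C'(k_n^{T_j})e^{rT_j}$.
\item \textbf{Price.} The second integral becomes
\[
\frac{D_0^{T_j}\mu_j^2}{(\alpha_j-1)(\alpha_j-2)}=\frac{(\alpha_j-1)\lambda_j^2}{D_0^{T_j}(\alpha_j-2)}.
\]
From the formula for $\alpha_j$ we have $\alpha_j-1=X/(X-Y)$ and $\alpha_j-2=Y/(X-Y)$, so $(\alpha_j-1)/(\alpha_j-2)=X/Y=D_0^{T_j}\Pi_j/\lambda_j^2$. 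The integral therefore equals $\Pi_j$. Multiplying by $e^{-rT_j}$ gives $C(k_n^{T_j},T_j)$, which is Property 1.
\end{itemize}

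This algebraic identity $(\alpha_j-1)/(\alpha_j-2)=X/Y$ is the only step with real content. It is the step I would check most carefully, because it is exactly where the closed form \eqref{eq:pl3_closed_form} is derived from the system. The same calculation, read in reverse, shows that the system forces these values, which gives uniqueness if desired.

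The remaining properties are immediate.
\begin{itemize}
\item \textbf{Property 5 (positivity and monotonicity).} $f>0$ because $a_j>0$ and the base $x-k_n^{T_j}+\mu_j\ge\mu_j>0$. Also $f'(x)=-\alpha_j a_j(x-k_n^{T_j}+\mu_j)^{-\alpha_j-1}<0$.
\item \textbf{Property 6 (finite first moment).} Write $x=(x-k_n^{T_j})+k_n^{T_j}$. Then $\int x f$ equals $\Pi_j+k_n^{T_j}\lambda_j$, which is finite. Equivalently, the integrand decays like $x^{1-\alpha_j}$ with $\alpha_j>2$.
\item \textbf{Tail mass range.} The hypothesis $C'(k_n^{T_j})\in(-e^{-rT_j},0)$ gives $\lambda_j\in(0,1)$ directly.
\end{itemize}

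As a consistency check, integrating the tail twice reproduces \eqref{eq:pl3_call_price}.
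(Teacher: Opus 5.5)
Your proposal is correct and follows the paper's proof essentially step for step. Like the paper, you derive $\alpha_j>2$, $\mu_j>0$, $a_j>0$ from admissibility, evaluate the mass and excess-payoff integrals via the shift $y=x-k_n^{T_j}+\mu_j$, and close price matching with the identity $(\alpha_j-1)/(\alpha_j-2)=D_0^{T_j}\Pi_j/\lambda_j^2$. The paper also proves a converse, namely that no solution with $\alpha_j>2$ exists when $D_0^{T_j}\Pi_j\le\lambda_j^2$. That converse is not part of the stated proposition, so nothing is missing.
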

Proof can be found in Appendix~\ref{sec:app_powerlaw_density}.

The density-anchored construction above is a per-maturity refinement. Calendar
freedom, however, is a pairwise condition across maturities. When all maturities
use PL$_2$, Lemma~\ref{lemma:powerlaw_calendar} reduces the right-tail calendar
check to the ordering of the tail indices. Once PL$_2$ and PL$_3$ coexist,
the shift parameter $\mu_j$ also enters the comparison. We therefore state the
next lemma in the unified shifted power-law form. For PL$_2$, $\mu_j=k_n^{T_j}$;
for PL$_3$, $\mu_j$ is given by~\eqref{eq:pl3_closed_form}.

\begin{lemma}[Mixed-tail serial calendar]\label{lemma:mixed_tail_calendar}
    Let $T_j < T_{j+1}$ with $\Delta T_j := T_{j+1} - T_j > 0$ and $\gamma_j := e^{(r-q)\Delta T_j}$, and let the right tails at the two maturities both be in the unified shifted-PL family~\eqref{eq:shifted_pl_family} with parameters $(a_j, \alpha_j, \mu_j)$ and $(a_{j+1}, \alpha_{j+1}, \mu_{j+1})$, $\alpha_j, \alpha_{j+1} > 2$.  Assume:
    \begin{center}
    \begin{tabular}{@{}p{0.50\textwidth}p{0.42\textwidth}@{}}
        \textnormal{(i)} forward moneyness alignment $k_n^{T_j} = \gamma_j^{-1}\,k_n^{T_{j+1}}$; &
        \textnormal{(ii)} $\alpha_{j+1} \leq \alpha_j$; \\[2pt]
        \textnormal{(iii)} $(\alpha_j - 2)\,\mu_{j+1} \geq (\alpha_{j+1} - 2)\,\gamma_j\,\mu_j$; &
        \textnormal{(iv)} $C(k_n^{T_{j+1}}, T_{j+1}) \geq e^{-q\Delta T_j}\,C(k_n^{T_j}, T_j)$. \\
    \end{tabular}
    \end{center}
    Then the pointwise calendar inequality
    \begin{align}
        C(K, T_{j+1}) \geq e^{-q\Delta T_j}\,C(K^{T_j}, T_j), \qquad K^{T_j} := e^{-(r-q)\Delta T_j}\,K,
    \end{align}
    holds for all $K \geq k_n^{T_{j+1}}$.
\end{lemma}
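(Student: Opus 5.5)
The plan is to compare the two tails through their ratio, as a function of the distance beyond the anchor strike, and show this ratio starts at least $1$ and never decreases. First I will put both tails in anchored form. Integrating~\eqref{eq:shifted_pl_family} twice, as in~\eqref{eq:pl3_call_price}, gives for each maturity $T_i$, $i\in\{j,j+1\}$,
\[
C(K,T_i)=C(k_n^{T_i},T_i)\left(\frac{K-k_n^{T_i}+\mu_i}{\mu_i}\right)^{2-\alpha_i},\qquad K\ge k_n^{T_i}.
\]
This holds for both PL$_2$ and PL$_3$ tails, since PL$_2$ is the case $\mu_i=k_n^{T_i}$ (Remark~\ref{rmk:pl3_includes_pl2}).

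Next I will change variables to $x:=K-k_n^{T_{j+1}}\ge 0$. By forward moneyness alignment (i),
\[
K^{T_j}-k_n^{T_j}=\gamma_j^{-1}\bigl(K-k_n^{T_{j+1}}\bigr)=x/\gamma_j\ge 0 .
\]
So $K^{T_j}$ lies in the right-tail region of $T_j$, and the anchored formula applies there as well. Define
\[
R(x):=\frac{C(K,T_{j+1})}{e^{-q\Delta T_j}C(K^{T_j},T_j)}
=R(0)\,\frac{\bigl(1+x/\mu_{j+1}\bigr)^{-(\alpha_{j+1}-2)}}{\bigl(1+x/(\gamma_j\mu_j)\bigr)^{-(\alpha_j-2)}}.
\]
Assumption (iv) is exactly $R(0)\ge 1$. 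It therefore suffices to show that $\log R$ is non-decreasing on $[0,\infty)$.

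The key step is the sign of the derivative. Differentiating gives
\[
\frac{d}{dx}\log R(x)=\frac{\alpha_j-2}{\gamma_j\mu_j+x}-\frac{\alpha_{j+1}-2}{\mu_{j+1}+x}.
\]
Both denominators are positive, so this derivative is non-negative if and only if
\[
(\alpha_j-2)(\mu_{j+1}+x)\ \ge\ (\alpha_{j+1}-2)(\gamma_j\mu_j+x).
\]
This is affine in $x$. At $x=0$ it reduces to assumption (iii). Its slope in $x$ is $\alpha_j-\alpha_{j+1}$, which is non-negative by (ii). Hence the inequality holds for every $x\ge 0$. It follows that $R(x)\ge R(0)\ge 1$ for all $x\ge 0$, which is the claimed pointwise calendar inequality on $K\ge k_n^{T_{j+1}}$.

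I expect the only delicate point to be the bookkeeping in the first step. One must check that the forward-aligned strike shifts the $T_j$ tail argument by exactly $x/\gamma_j$, so that the shift parameter $\gamma_j\mu_j$ appears in the comparison. This is what makes (iii) the right condition, rather than a condition on $\mu_j$ alone. One must also confirm that the anchored form holds uniformly across the mixed PL$_2$/PL$_3$ configuration. After that, the monotonicity argument is elementary.
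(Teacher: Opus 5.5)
Your proposal is correct and follows essentially the same route as the paper's proof: take logarithms, show the log-derivative is non-negative via the same affine-in-$s$ inequality (slope term from (ii), intercept from (iii)), and anchor at $s=0$ with (iv). The only difference is bookkeeping: you normalize each tail by its anchor price, so (iv) appears directly as $R(0)\ge 1$, whereas the paper carries the raw $a_j,\mu_j,\gamma_j$ factors and checks explicitly that the $s=0$ inequality is exactly (iv).
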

Proof can be found in Appendix~\ref{sec:app_mixed_tail_calendar}.

The lemma serves as a calendar filter for optional PL$_3$ upgrades.
The global slope program has already chosen the right-boundary slopes so that
the all-PL$_2$ tail sequence satisfies Lemma~\ref{lemma:powerlaw_calendar}.
If a maturity is replaced by PL$_3$, the shift parameter $\mu_j$ and the
recovered tail index change the pairwise tail comparison. Lemma~\ref{lemma:mixed_tail_calendar}
gives sufficient conditions under which this replacement remains calendar-free
with the adjacent maturity. The anchor condition~\textup{(iv)} is inherited
from the constructed interior curve, while conditions~\textup{(ii)}--\textup{(iii)}
are checked for the affected adjacent pair.

For a fixed adjacent pair, condition~\textup{(iii)} becomes a scalar check once
the tail choices at $T_j$ and $T_{j+1}$ are known. This gives the following checks:
\begin{center}
\begin{tabular}{@{}ll@{}}
    \hline
    Tail choice at $(T_j, T_{j+1})$ & Reduced condition~\textup{(iii)} \\
    \hline
    (PL$_2$, PL$_2$) & collapses to condition~\textup{(ii)} and recovers Lemma~\ref{lemma:powerlaw_calendar} \\
    (PL$_2$, PL$_3$) & $\mu_{j+1} \geq \dfrac{\alpha_{j+1} - 2}{\alpha_j - 2}\,k_n^{T_{j+1}}$ \\
    (PL$_3$, PL$_2$) & $\mu_j \leq \dfrac{\alpha_j - 2}{\alpha_{j+1} - 2}\,k_n^{T_j}$ \\
    (PL$_3$, PL$_3$) & $(\alpha_j - 2)\,\mu_{j+1} \geq (\alpha_{j+1} - 2)\,\gamma_j\,\mu_j$ \\
    \hline
\end{tabular}
\end{center}

The selection step is therefore conservative. We first keep PL$_2$ as the
baseline at every maturity. A PL$_3$ candidate is committed only if
\eqref{eq:pl3_admissibility} holds and the mixed-tail calendar checks pass for
the adjacent pair affected by the change. Otherwise the maturity remains PL$_2$.
This preserves right-tail calendar freedom because the baseline already
satisfies Lemma~\ref{lemma:powerlaw_calendar}.  Algorithm~\ref{alg:right_tail_select}
summarizes this upgrading procedure.

\begin{algorithm}[ht]
    \caption{Right-tail upgrading procedure}\label{alg:right_tail_select}
    \begin{algorithmic}[1]
        \Require Body outputs $\{h_3^{T_j}\}_{j=1}^m$; finalized boundary data
        $\{C(k_n^{T_j},T_j), C'(k_n^{T_j})\}_{j=1}^m$; optional cap
        $\alpha_{\max}>2$.
        \State Initialize every maturity with the PL$_2$ tail from~\eqref{eq:alpha_match}.
        \For{$j=1$ to $m$}
            \State Set $D_0^{T_j}:=e^{rT_j}h_3^{T_j}$,
            $\lambda_j:=-C'(k_n^{T_j})e^{rT_j}$, and
            $\Pi_j:=e^{rT_j}C(k_n^{T_j},T_j)$.
            \If{$D_0^{T_j}\Pi_j>\lambda_j^2$}
                \State Compute the PL$_3$ candidate
                $(\alpha_j^{(3)},\mu_j^{(3)},a_j^{(3)})$ from~\eqref{eq:pl3_closed_form}.
                \If{$\alpha_j^{(3)}\le \alpha_{\max}$}
                    \State Tentatively replace the PL$_2$ tail at $T_j$ by this PL$_3$ tail.
                    \If{conditions~\textup{(ii)}--\textup{(iii)} of
                    Lemma~\ref{lemma:mixed_tail_calendar} hold for every adjacent pair affected by this replacement}
                        \State Commit the PL$_3$ tail at $T_j$.
                    \Else
                        \State Keep the PL$_2$ tail at $T_j$.
                    \EndIf
                \EndIf
            \EndIf
        \EndFor
        \State \Return the selected right-tail family and parameters at all maturities.
    \end{algorithmic}
\end{algorithm}

\subsection{Assembling the Multi-Maturity Construction}\label{sec:assemble}

The full construction combines three ingredients developed above. First, a
pre-processing stage estimates the per-strike slopes of
Section~\ref{sec:Ai-Lagrange} and selects the tail boundary slopes through the
global program of Section~\ref{sec:global_slope}. Second, the interior of each
maturity is built interval by interval: the two-piece construction is retained
when its first-piece curvature already meets the calendar floor
$h_1^{\rm cal}$, and the three-piece optimization~\eqref{eq:three-piece-opt}
is solved otherwise. Third, the tails are attached. The left region uses the
power-law left tail. The right tail uses PL$_2$ as the baseline and may be
upgraded to the density-anchored PL$_3$ tail through
Algorithm~\ref{alg:right_tail_select}.

The algorithm for the multi-maturity case applies these steps maturity by maturity, and is summarized in Algorithm \ref{alg:multi-mat-construction}. The outer
loop is sequential, because the calendar floor at $T_j$ is inherited from the
constructed curve at $T_{j-1}$. Conditional on this floor, the interior
intervals at a given maturity are independent, so the inner loop over intervals
can run in parallel.

\begin{algorithm}
    \caption{Multi-maturity Marginal Construction}\label{alg:multi-mat-construction}
    \begin{algorithmic}[1]
        \Require Maturities $T_1 < T_2 < \cdots < T_m$ with input call prices $\{C^{T_j}_{k_l}\}$ at strike grids $0 = k_0^{T_j} < k_1^{T_j} < \cdots < k_{n_j}^{T_j}$.
        \State Pre-process.  Compute per-strike slope estimates $\{\widehat{C'}(k_l^{T_j})\}$ via Section~\ref{sec:Ai-Lagrange} and select tail-feasible boundary slopes via Section~\ref{sec:global_slope}.
        \State Construct $T_1$ via the two-piece solution~\eqref{eq:two-piece-opt} on each interior interval and build the tails.
        \For{$j = 2$ to $m$}
        \State Construct the left region $[0, k_1^{T_j}]$ as a power-law left tail (Proposition~\ref{prop:left_pow}); on a degenerate left segment, take the linear segment given by Lemma~\ref{lemma:base-case}.
        \For{each interior interval $[k_l^{T_j}, k_{l+1}^{T_j}]$, $l = 1, \ldots, n_j - 1$}
        \State Compute the per-interval data $(\Delta k_l, A_l, P_l, \rho)$ and the cross-maturity slack $\delta_{\min, l}$.
        \If{the natural two-piece curvature satisfies $h_1^{\textrm{two-piece}} \ge h_1^{\rm cal}$}
        \State Retain the two-piece solution.
        \Else
        \State Solve the per-interval three-piece optimization~\eqref{eq:three-piece-opt}.
        \EndIf
        \EndFor
        \State Construct the right tail at $T_j$ via Section~\ref{sec:powerlaw_tail}.
        \EndFor
        \State \Return $\{C(\cdot, T_j)\}_{j=1}^{m}$.
    \end{algorithmic}
\end{algorithm}

\section{Numerical Experiments}\label{sec:numerical}

This section reports three experiments. Section~\ref{sec:ssvi_single} uses a single-maturity SSVI Power-Law benchmark to test reconstruction from finitely many observed quotes. It checks exact quote matching at the quoted strikes, off-grid price accuracy, and the effect of the right-tail extension from Section~\ref{sec:density_anchored_tail}. Section~\ref{sec:ssvi_multi} applies the same construction to a multi-maturity SSVI benchmark and checks exact fit, butterfly arbitrage-freeness, calendar consistency, and runtime. Section~\ref{sec:sp500_numerical} uses end-of-day S\&P~500 quotes, with SANOS \citep{sanos2025} serving as an independent upstream smoother that supplies the arbitrage-free input grid.

Across these experiments, the construction itself is unchanged. The only difference is how the input grid is obtained: in the SSVI studies it is sampled from a closed-form synthetic market surface, while in the S\&P~500 study it is generated from market quotes by an upstream smoother. This setup tests whether the proposed procedure can turn finite arbitrage-free price grids into marginal laws that preserve exact quote consistency, maintain the required no-arbitrage properties, and remain close in shape to the corresponding benchmark or upstream reference.

\subsection{Single Maturity---SSVI Power-Law Convergence}\label{sec:ssvi_single}

We begin with a single maturity so that the per-interval body construction and the closed-form tails can be examined without cross-maturity coupling. The benchmark is the SSVI Power-Law parametrization of \citet{gatheral2014arbitrage} with parameters $\eta = 2$, $\gamma = 0.2$, $\rho = 0.5$, $\theta_t = 0.4\,t$, spot price $S_0 = 1$, risk-free rate $r = 0.03$, dividend yield $q = 0.01$, and maturity $T = 1$. The practitioner observes $n$ market quotes uniformly spaced in moneyness on $[0.5,1.5]$, all generated by the SSVI surface and therefore arbitrage-free by construction. We feed these $n$ quotes into the single-maturity construction and compare the reconstructed call-price curve against the SSVI benchmark only at unobserved strikes. Specifically, for each value of $n$, we evaluate the reconstruction at $1000$ evenly spaced off-grid moneyness points in $[0.5,1.5]$, excluding the observed quotes used as inputs.

At each evaluation point $k$, the relative price error is $\epsilon_{\rm rel}(k)=|C_{\rm rec}(k)-C_{\rm SSVI}(k)|/C_{\rm SSVI}(k)$. We summarize the errors by the average relative error $\epsilon_{\rm avg}$ and the maximum relative error $\epsilon_{\max}$ over these off-grid evaluation points.

Figure~\ref{fig:ssvi_relerr} reports $\epsilon_{\rm avg}$ and $\epsilon_{\max}$ as the number of observed quotes increases. Even with only $10$ observed quotes, $\epsilon_{\rm avg}$ is already below $5\times 10^{-5}$, and by $20$ quotes $\epsilon_{\max}$ is below $3\times 10^{-5}$. Since the observed quotes are excluded from the evaluation set, the plotted errors measure only off-grid reconstruction accuracy, while the input quotes are matched exactly by construction.

Figure~\ref{fig:ssvi_density_price_trio} gives a direct density-and-price comparison with the SSVI benchmark, and shows that the benchmark shape is already recovered accurately on a moderate quote grid and very closely by $n=50$. Table~\ref{tab:ssvi_runtime} reports the per-fit runtime as a function of $n$ at a single maturity. The growth is approximately linear in $n$, consistent with the $O(n)$ closed-form arithmetic of Section~\ref{sec:single}: a single-maturity fit takes about $2$~ms at $n=50$ and remains under $8$~ms even at $n=200$.

\begin{figure}[ht]
    \centering
    \begin{minipage}[c]{0.58\linewidth}
        \centering
        \includegraphics[width=\linewidth]{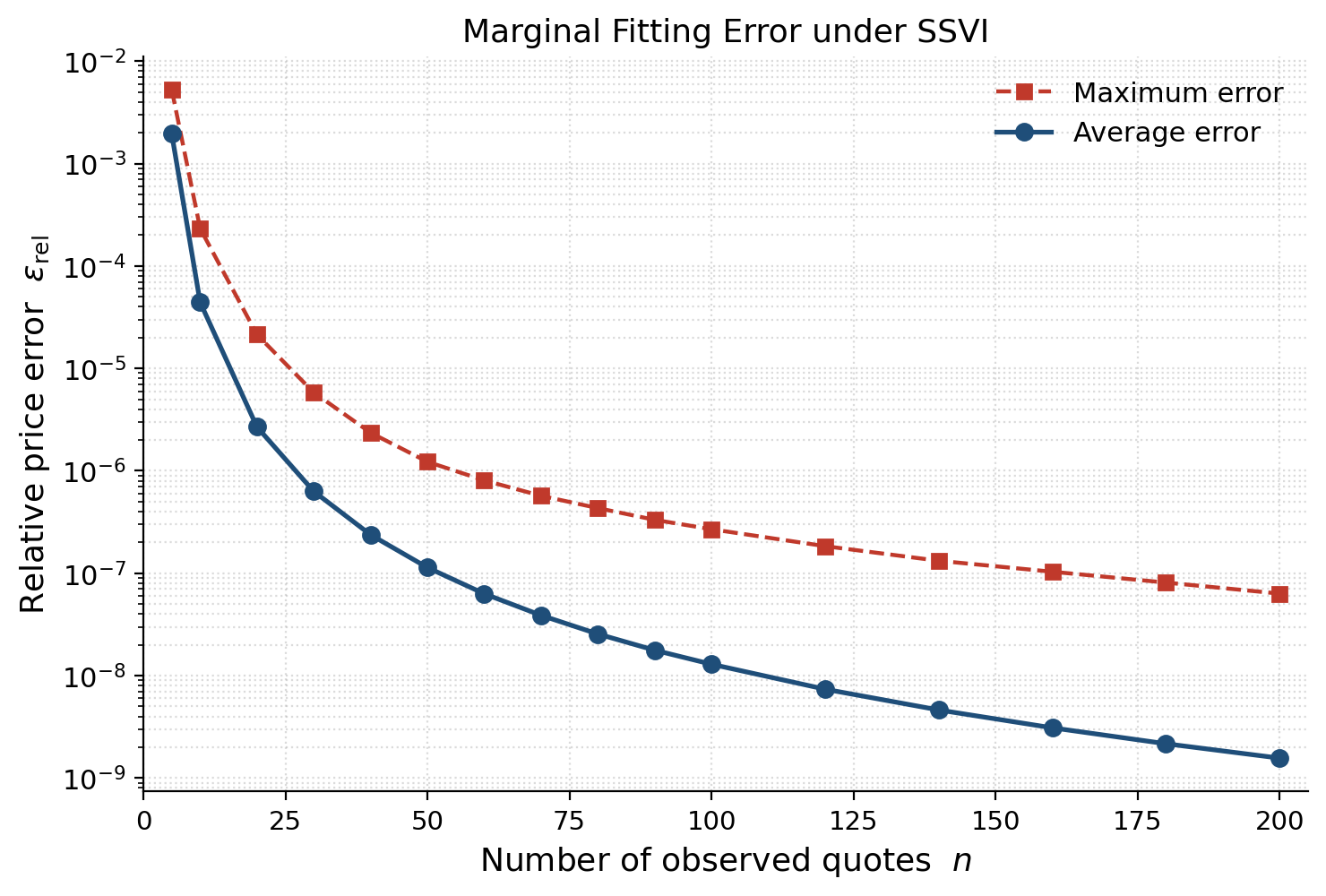}
\caption{Single-maturity SSVI benchmark accuracy. The figure reports the average relative price error $\epsilon_{\rm avg}$ and the maximum relative price error $\epsilon_{\max}$ of the reconstructed call price as the number of observed quotes $n$ increases. Errors are evaluated at $1000$ off-grid moneyness points in $[0.5,1.5]$, excluding the observed quotes used as inputs.}

        \label{fig:ssvi_relerr}
    \end{minipage}\hfill
    \begin{minipage}[c]{0.38\linewidth}
        \centering
        \begin{tabular}{cc}
            \toprule
            $n$ strikes & Per-fit runtime \\
            \midrule
            $5$    & $0.36$ ms \\
            $10$   & $0.51$ ms \\
            $20$   & $0.84$ ms \\
            $50$   & $1.88$ ms \\
            $200$  & $7.14$ ms \\
            \bottomrule
        \end{tabular}
\captionof{table}{Per-fit runtime for the single-maturity SSVI benchmark, averaged over $100$ timing runs for each value of $n$.}
        \label{tab:ssvi_runtime}
    \end{minipage}
\end{figure}

\begin{figure}[ht]
    \centering
    \includegraphics[width=.31\textwidth]{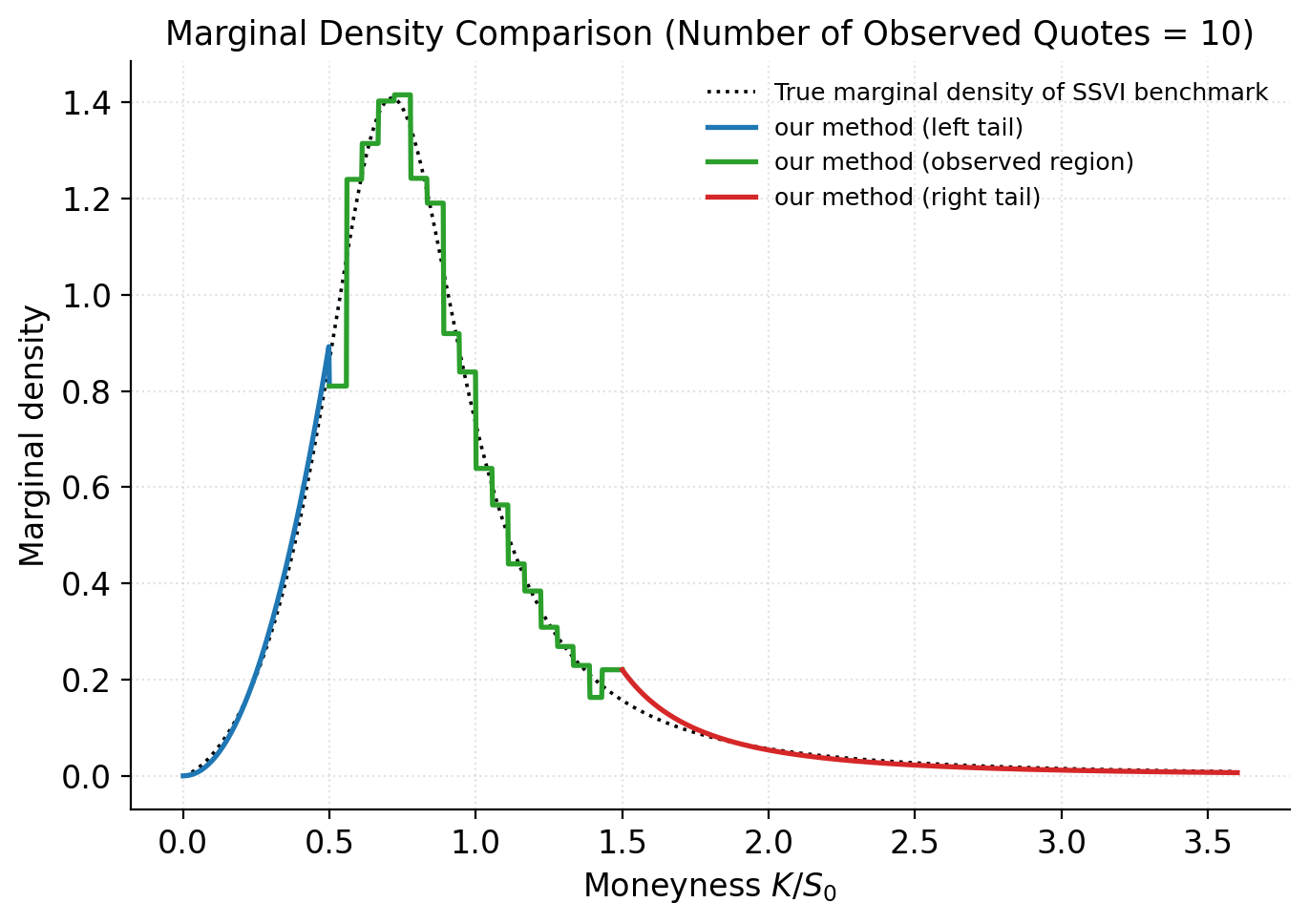}\hfill
    \includegraphics[width=.31\textwidth]{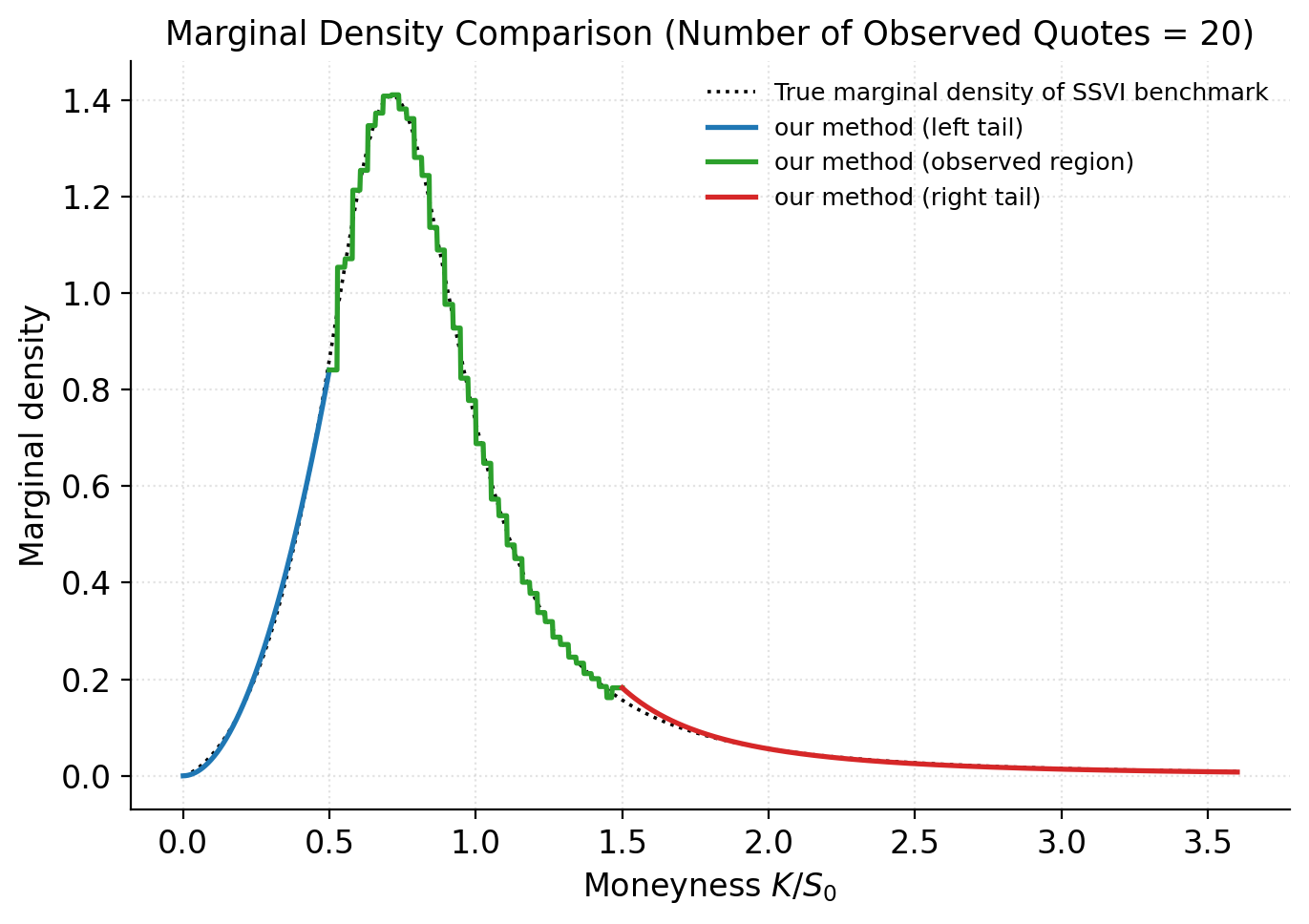}\hfill
    \includegraphics[width=.31\textwidth]{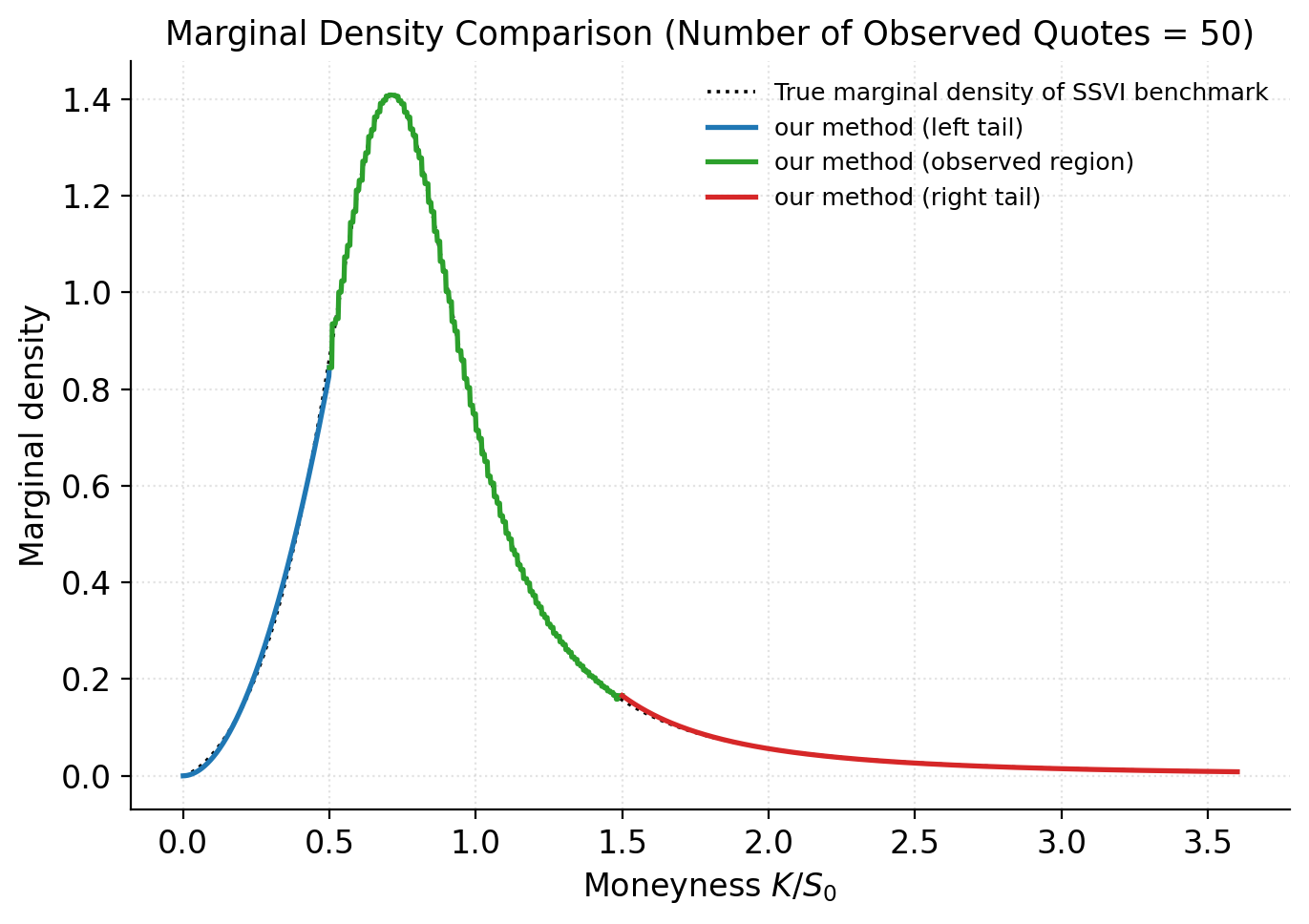}

    \vspace{0.4em}

    \includegraphics[width=.31\textwidth]{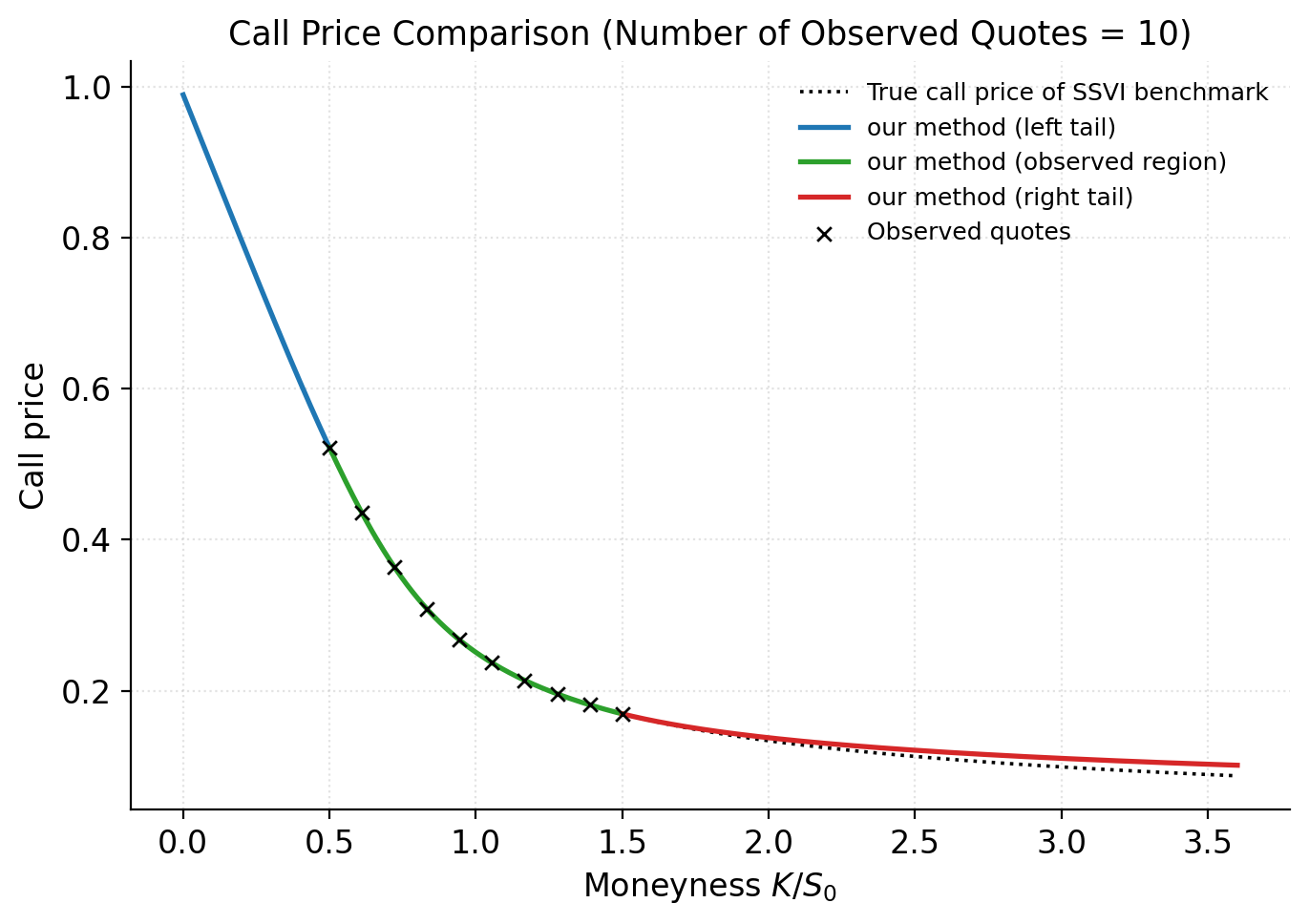}\hfill
    \includegraphics[width=.31\textwidth]{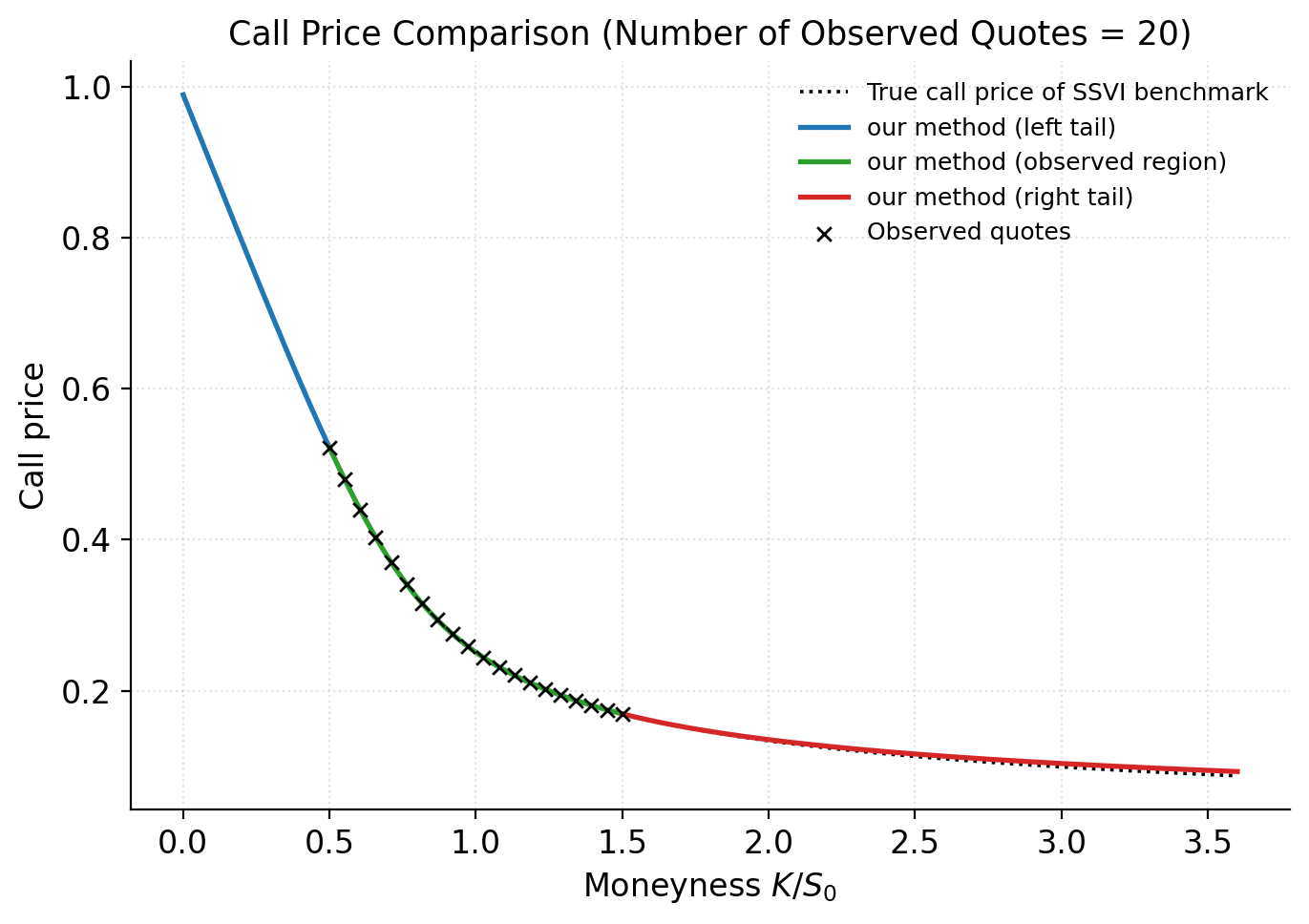}\hfill
    \includegraphics[width=.31\textwidth]{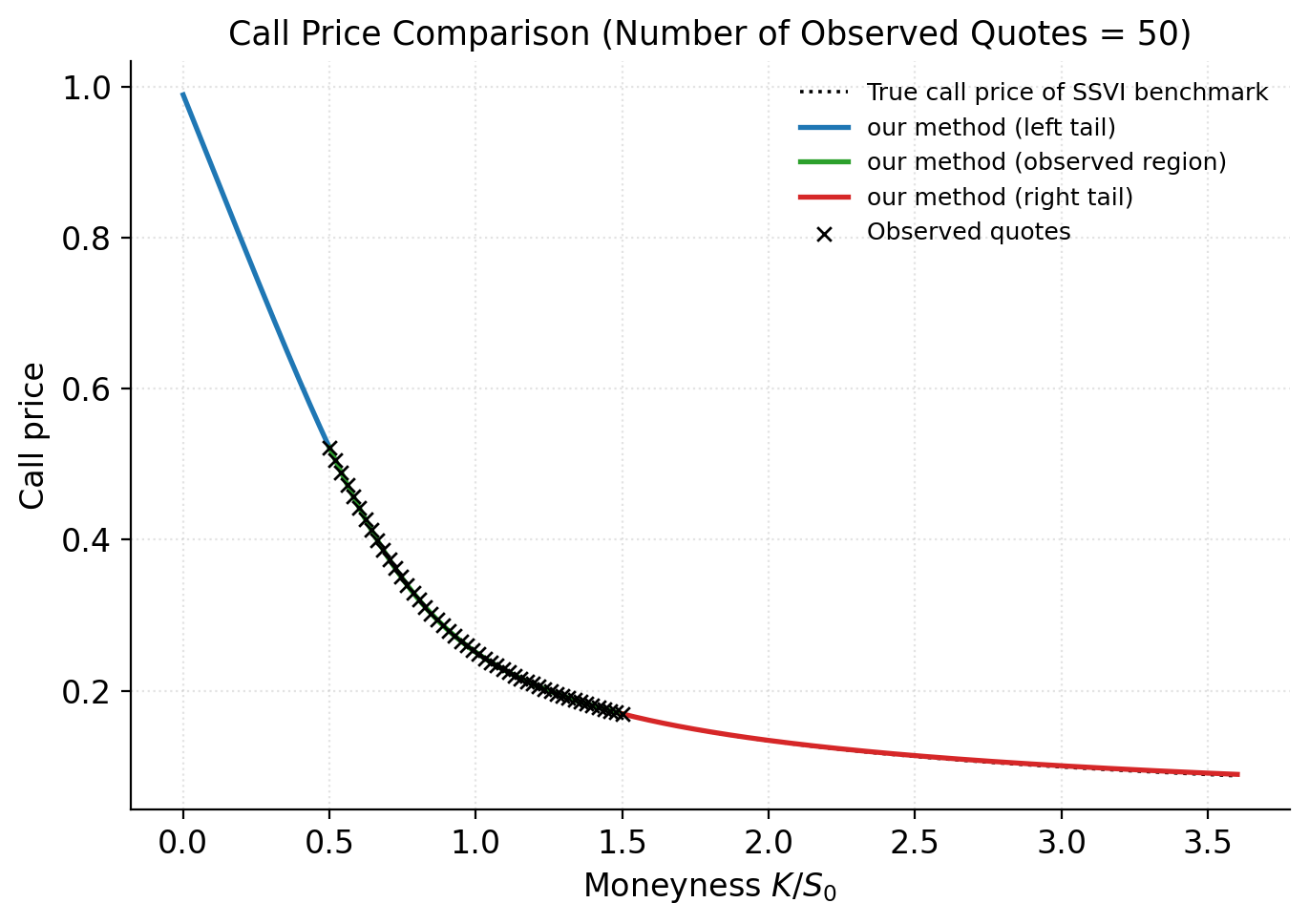}

    \caption{Single-maturity SSVI benchmark comparison. The top row shows the reconstructed marginal density, and the bottom row shows the reconstructed call price. Columns correspond to $n=10$, $20$, and $50$ observed strikes. Solid curves are the reconstruction, dotted curves are the SSVI benchmark, and crosses in the call-price panels mark the observed quotes matched exactly by construction.}
    \label{fig:ssvi_density_price_trio}
\end{figure}

\subsubsection{Performance of the Density-Anchored Right Tail}\label{sec:ssvi_tail_compare}

We next assess the performance of the density-anchored right-tail extension of Section~\ref{sec:density_anchored_tail} on the SSVI Power-Law benchmark with $T=1$, using $n=100$ observed strikes uniformly spaced in moneyness on $[0.5,1.5]$. PL$_2$ and PL$_3$ are built from the same quotes, interior construction, and boundary data at the largest observed strike $k_n$. The baseline PL$_2$ tail matches the boundary price and slope, while PL$_3$ additionally imposes $C''(k_n^+)=C''(k_n^-)$ when the scalar condition in~\eqref{eq:pl3_admissibility} holds. Therefore, the two reconstructions coincide on the observed strike range, and differ only in the right-tail extension.

For the SSVI input in this experiment, the PL$_3$ condition is satisfied. Relative to PL$_2$, the density-anchored tail improves the curvature-level fit at the junction: PL$_2$ leaves a roughly $22\%$ downward jump between $f(k_n^-)$ and $f(k_n^+)$, whereas PL$_3$ enforces continuity of the marginal density at $k_n$. Figure~\ref{fig:ssvi_tail_compare} shows this comparison in the full marginal-density plot and in a zoom around the junction strike.

This comparison illustrates the role of the PL$_3$ upgrade when the stronger local condition is satisfied. The condition is natural in non-extreme tail settings, where the last observed strike is not too deep in the right tail and the junction density is not too small relative to the residual tail mass. In such cases, PL$_3$ is preferred: it preserves the same boundary price, slope, residual mass, and no-arbitrage properties as PL$_2$, while giving a better curvature-level fit at the right-tail junction.

\begin{figure}[ht]
    \centering
    \includegraphics[width=0.48\linewidth]{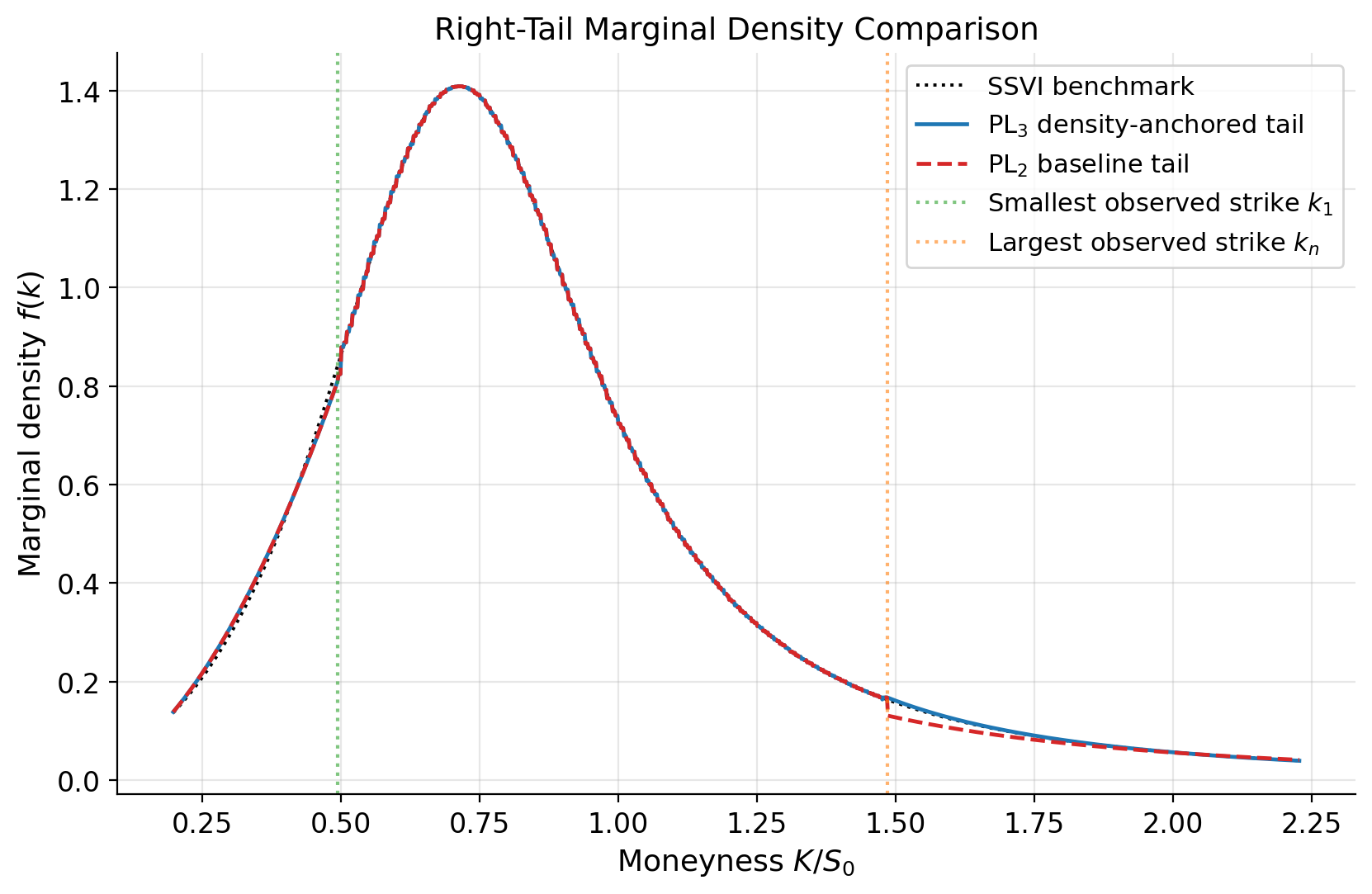}\hfill
    \includegraphics[width=0.48\linewidth]{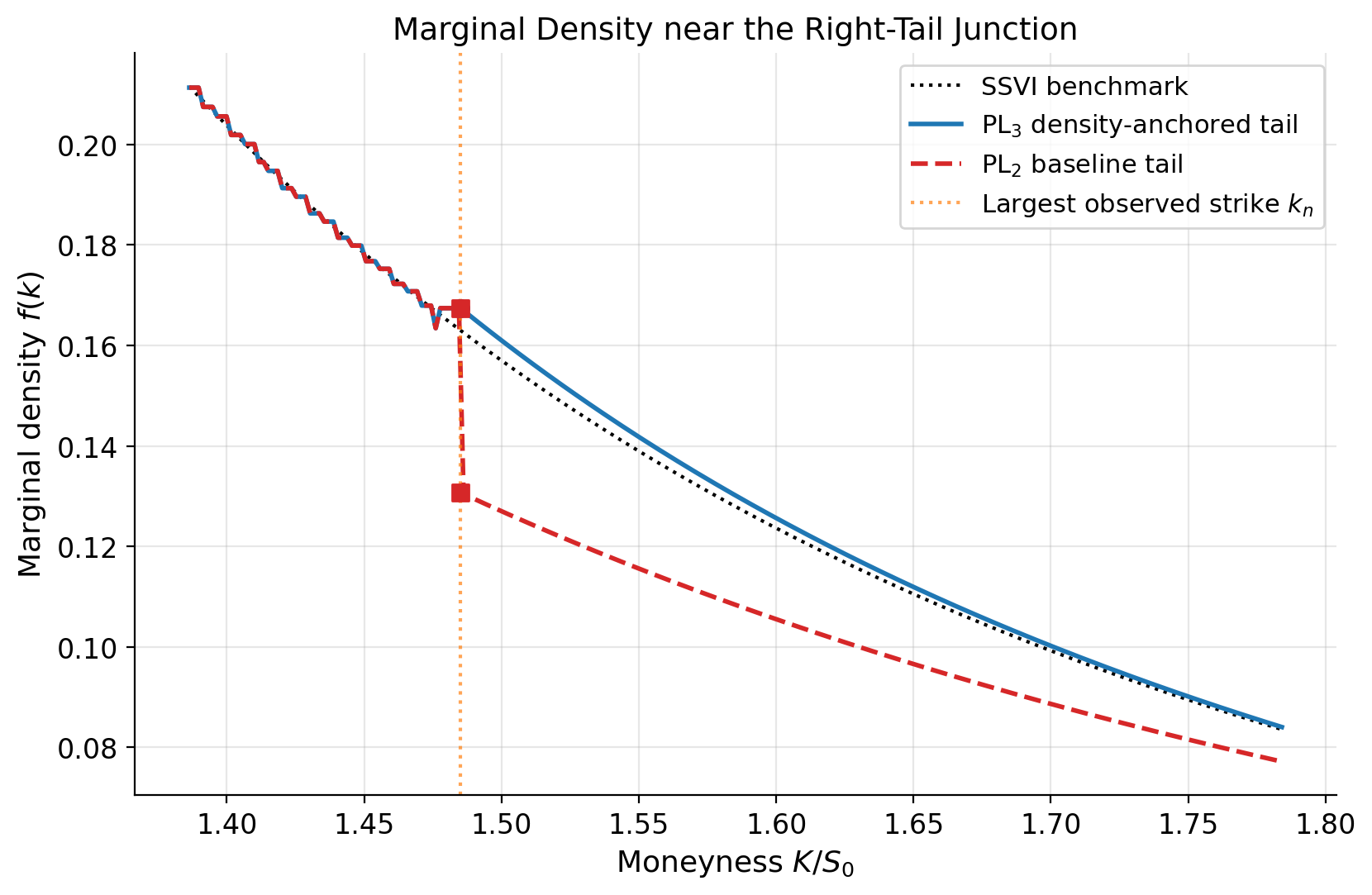}
    \caption{Performance of the density-anchored right tail in the single-maturity SSVI experiment. The left panel shows the full marginal-density comparison, and the right panel zooms in around the largest observed strike $k_n$. PL$_2$ and PL$_3$ use the same observed quotes, the same interior construction, and the same boundary data at $k_n$; they differ only in the right-tail extension. PL$_3$ removes the density jump left by PL$_2$ and gives a continuous curvature-level continuation at the junction.}
    \label{fig:ssvi_tail_compare}
\end{figure}

\subsection{Multiple Maturities---SSVI Power-Law Surface}\label{sec:ssvi_multi}

We next apply the construction to the multi-maturity SSVI Power-Law benchmark. The maturities are $T\in\{0.5,0.6,\ldots,1.5\}$, giving eleven slices with spacing $\Delta T=0.1$. Each maturity has $n=100$ observed strikes, and the strike grids are forward-aligned as in Section~\ref{sec:global_slope}. The construction matches every input quote exactly. The butterfly check passes on all eleven slices, and the observed-strike calendar condition passes on all $1000$ cross-maturity checks. The end-to-end runtime is $16.282$ seconds, corresponding to about $14.802$ ms per input strike.

Figure~\ref{fig:ssvi_multi_surface} shows the reconstructed implied-volatility surface and the reconstructed marginal-density surface across the eleven maturities. Both surfaces agree closely with the closed-form SSVI benchmark at the resolution of the plots.

We examine the calendar slack $C(k,T_{j+1})-L(k)$, where $L(k)=e^{-q\Delta T_j}C(e^{-(r-q)\Delta T_j}k,T_j)$ is the calendar lower bound inherited from $T_j$. Since the assembled call curves are closed-form on each piece, the slack can be evaluated directly across moneyness. Figure~\ref{fig:ssvi_calendar_slack} plots this slack for two representative adjacent maturity pairs, $(T_j,T_{j+1})=(0.9,1.0)$ and $(1.3,1.4)$, as examples of this guaranteed non-negative function.

\begin{figure}[ht]
    \centering
    \includegraphics[width=0.48\linewidth]{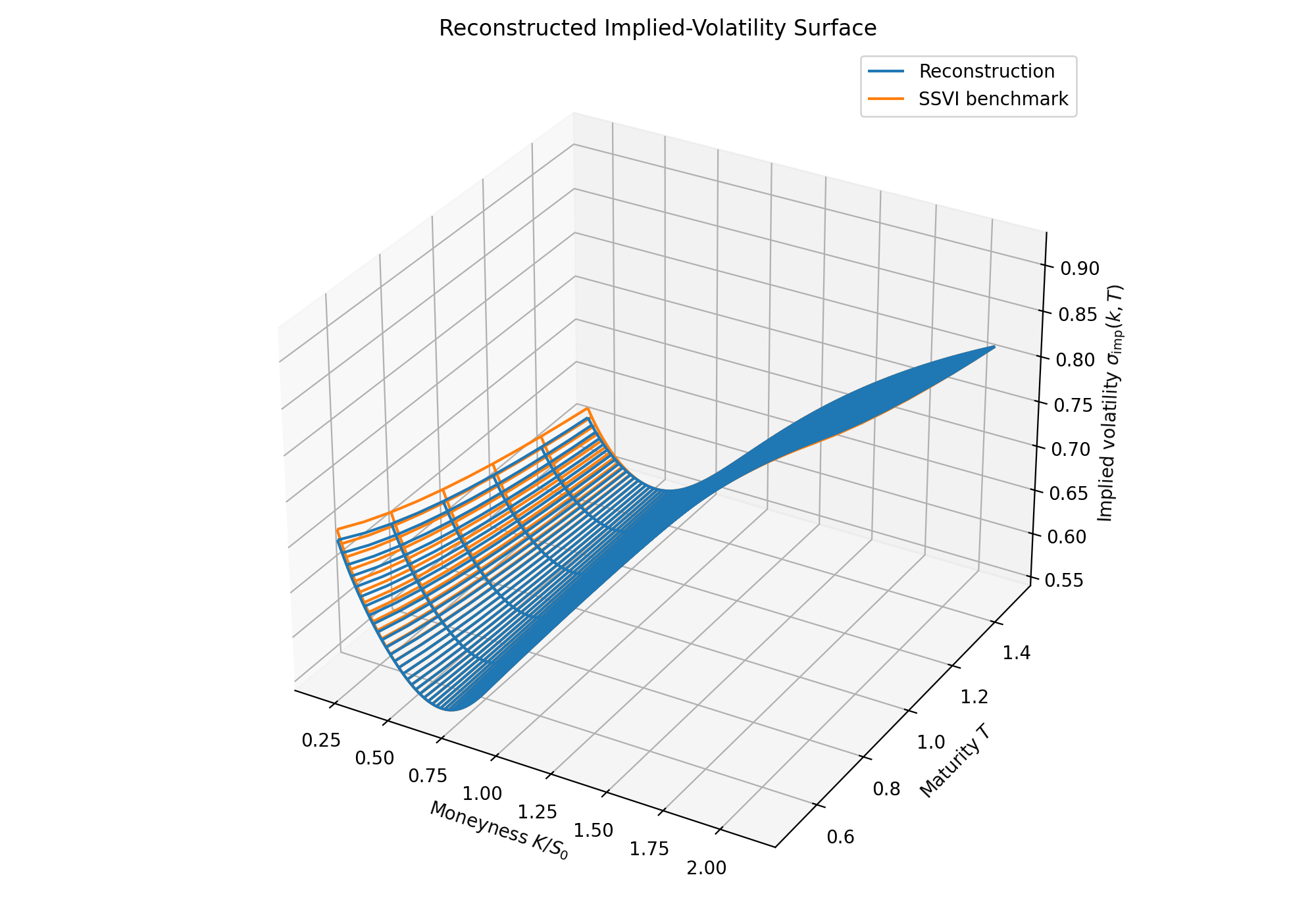}\hfill
    \includegraphics[width=0.48\linewidth]{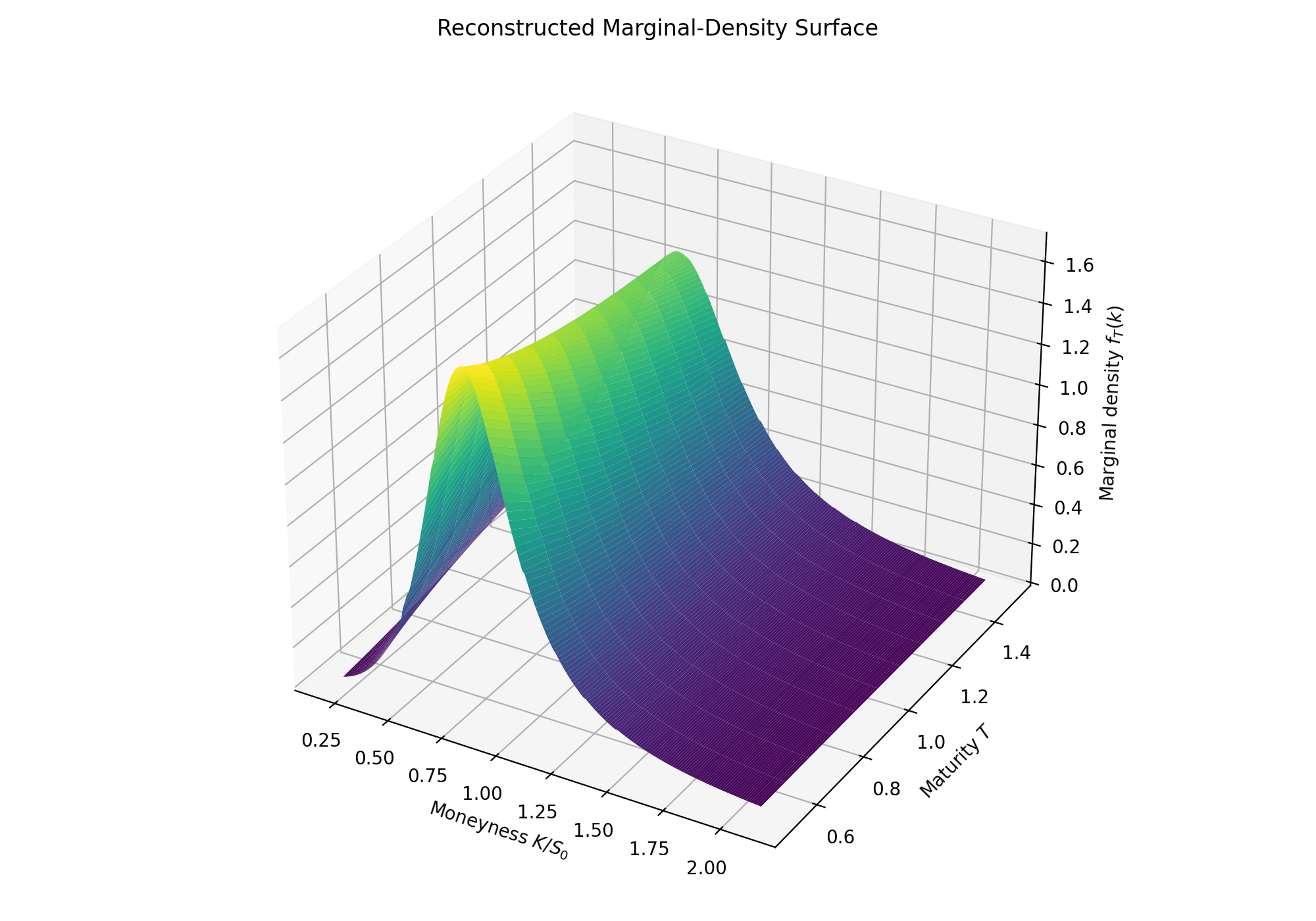}
    \caption{Multi-maturity SSVI benchmark comparison. The left panel shows the reconstructed implied-volatility surface, and the right panel shows the reconstructed marginal-density surface across the eleven maturities.}
    \label{fig:ssvi_multi_surface}
\end{figure}

\begin{figure}[ht]
    \centering
    \includegraphics[width=0.48\linewidth]{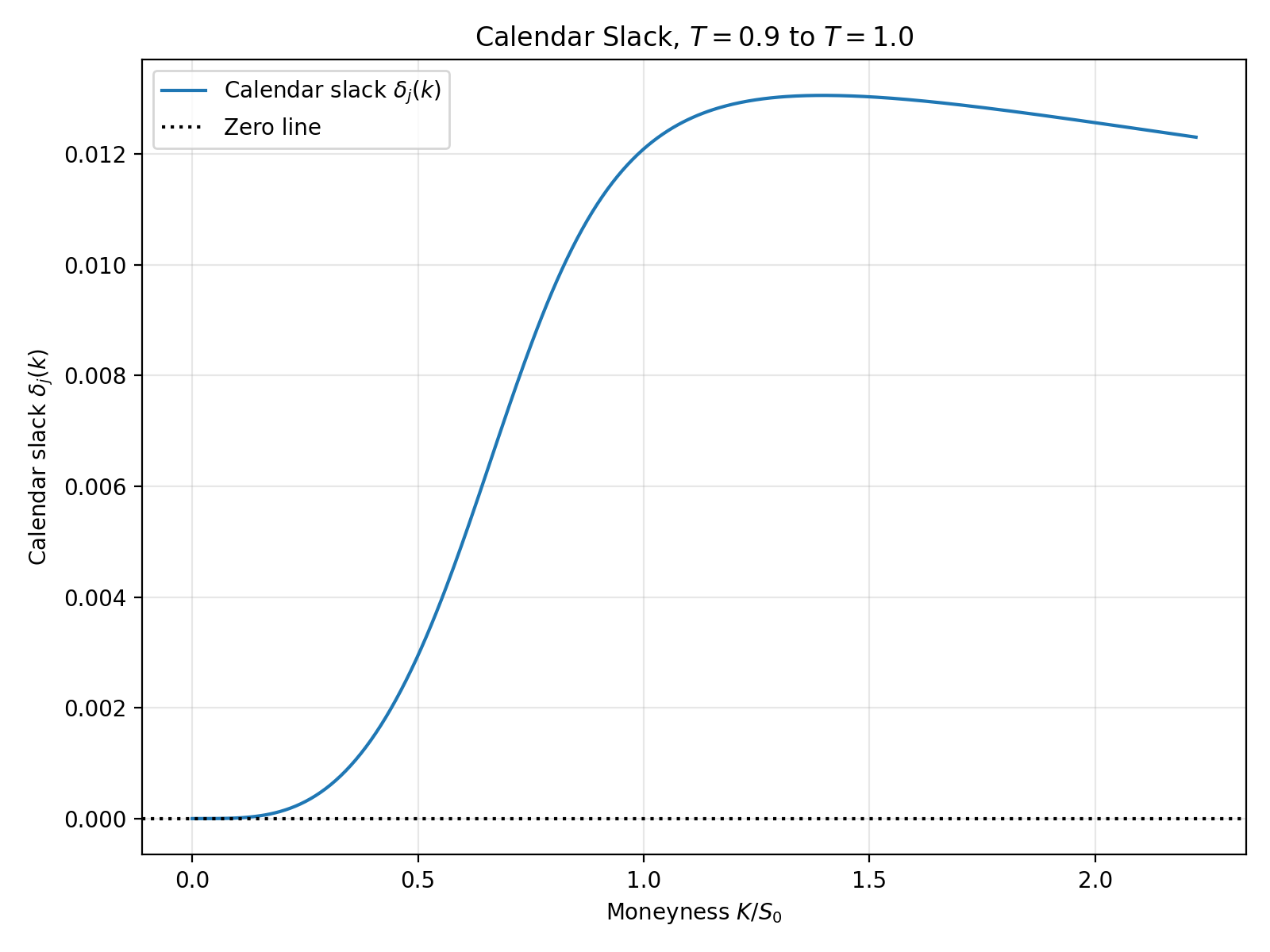}\hfill
    \includegraphics[width=0.48\linewidth]{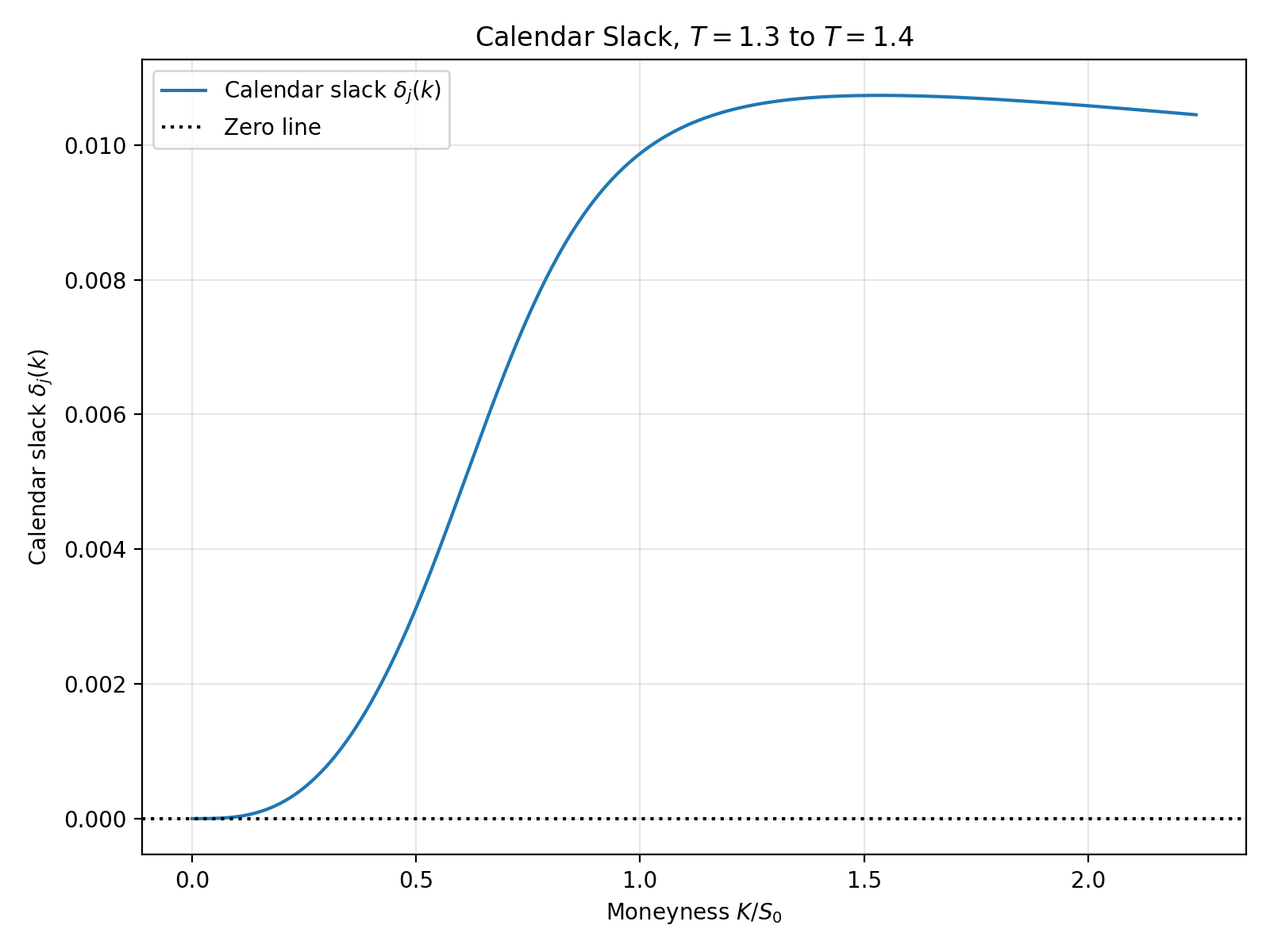}
\caption{Calendar-slack diagnostics for the multi-maturity SSVI experiment. The left panel shows the calendar slack for $(T_j,T_{j+1})=(0.9,1.0)$, and the right panel shows it for $(T_j,T_{j+1})=(1.3,1.4)$. In both panels, the plotted slack remains non-negative, showing pointwise calendar consistency.}
    \label{fig:ssvi_calendar_slack}
\end{figure}

\subsection{S\&P~500 Market Case via Upstream Smoothing}\label{sec:sp500_numerical}

We test the construction on end-of-day European call quotes on the S\&P~500 index on 2022-02-22, with spot $S_0=4304.76$ and quotes from OptionMetrics. The raw market quotes contain local butterfly violations caused by bid--ask noise and a long right tail of sub-tick option values. We therefore use SANOS \citep{sanos2025} as an upstream de-arbitrage step. SANOS supplies a forward-aligned static-arbitrage-free discrete call-price grid. Only this discrete grid is passed to our construction; the SANOS density curves are used below only as an external shape reference.

This setup tests the role of our method as a marginal-construction layer after an upstream smoother. Starting from the de-arbitraged discrete grid, our construction produces full marginal laws on $[0,\infty)$, with explicit tails, unit-mass control, exact matching of the input prices, and pointwise no-arbitrage checks across maturities. We use six maturities, $\{87,143,206,297,360,388\}$ days. At each maturity, we discard strikes whose discounted time value is below half a tick, since these far-tail quotes carry little information about the marginal shape and are highly sensitive to quote discreteness. Table~\ref{tab:spx_summary} reports the resulting input size at each maturity.

\begin{table}[ht]
    \centering
    \begin{tabular}{lcccccc}
        \toprule
        Maturity (days) & $87$ & $143$ & $206$ & $297$ & $360$ & $388$ \\
        \midrule
        Input strikes & $133$ & $138$ & $142$ & $148$ & $150$ & $150$ \\
        \bottomrule
    \end{tabular}
\caption{S\&P~500 2022-02-22 market experiment. The table reports the number of input strikes passed to our construction after the half-tick time-value filter.}
    \label{tab:spx_summary}
\end{table}

The reconstructed marginals satisfy the unit-mass condition up to machine precision, as expected from the boundary slope conditions and the closed-form tail completion. A per-segment calendar-slack diagnostic detects no violation across the five adjacent maturity pairs. The end-to-end runtime is $12.741$ seconds. With $861$ input strikes in total, this corresponds to $14.798$ ms per input strike, essentially the same per-strike cost as in the multi-maturity SSVI experiment.

Figure~\ref{fig:spx_density_comparison} gives two views of the market reconstruction. The left panel compares our reconstructed marginal densities with the SANOS reference densities at three representative maturities chosen to display different density shapes. The close agreement shows that, starting only from the de-arbitraged discrete price grid, our construction recovers marginal-density shapes that are highly consistent with the upstream reference.

The right panel shows our reconstructed marginal densities across all six maturities. It displays the term structure of the market-implied marginal densities produced by our construction. In particular, the reconstruction preserves the bimodal market-implied shape and tracks its evolution across maturities.

\begin{figure}[ht]
    \centering
    \includegraphics[width=0.46\linewidth]{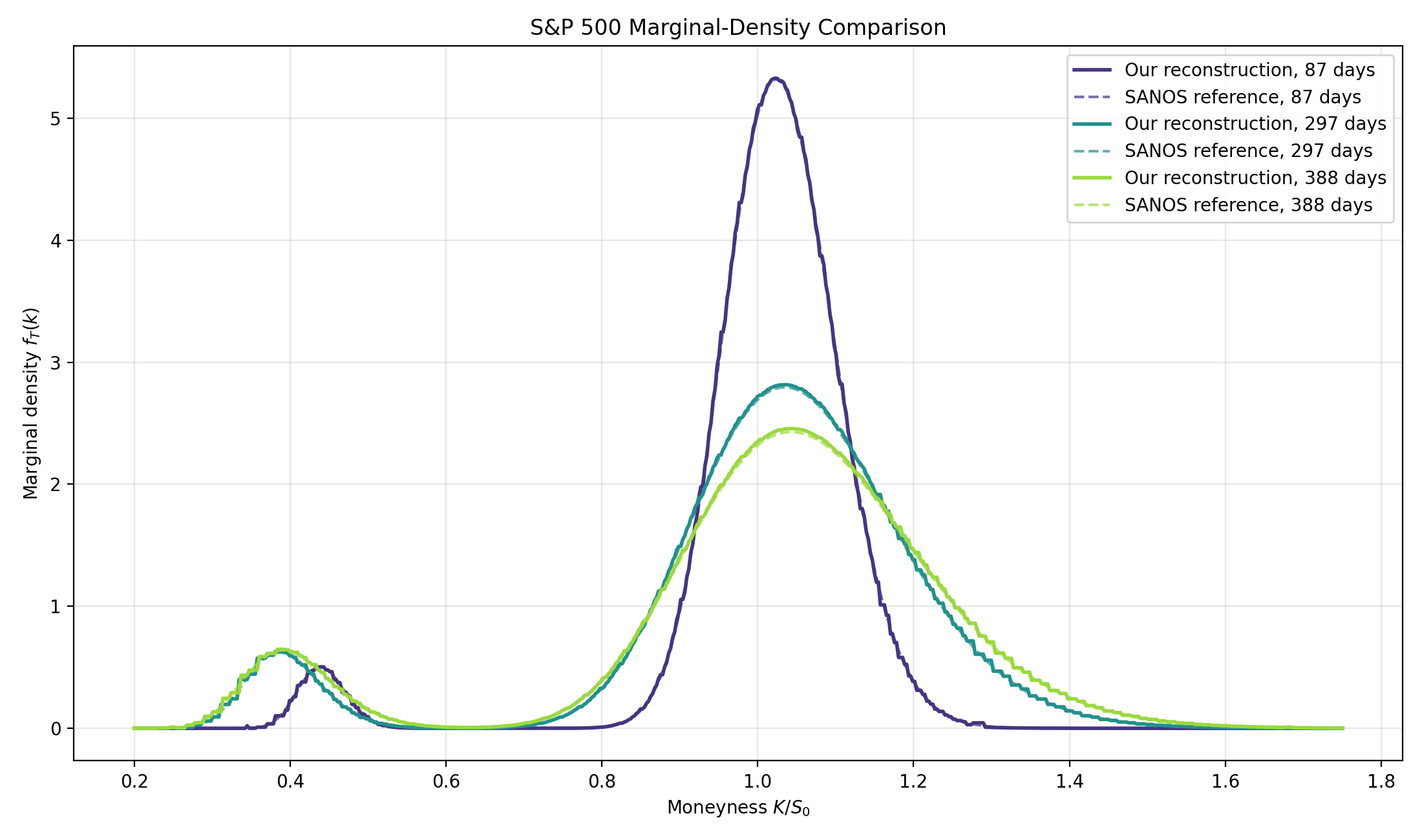}\hfill
    \includegraphics[width=0.54\linewidth]{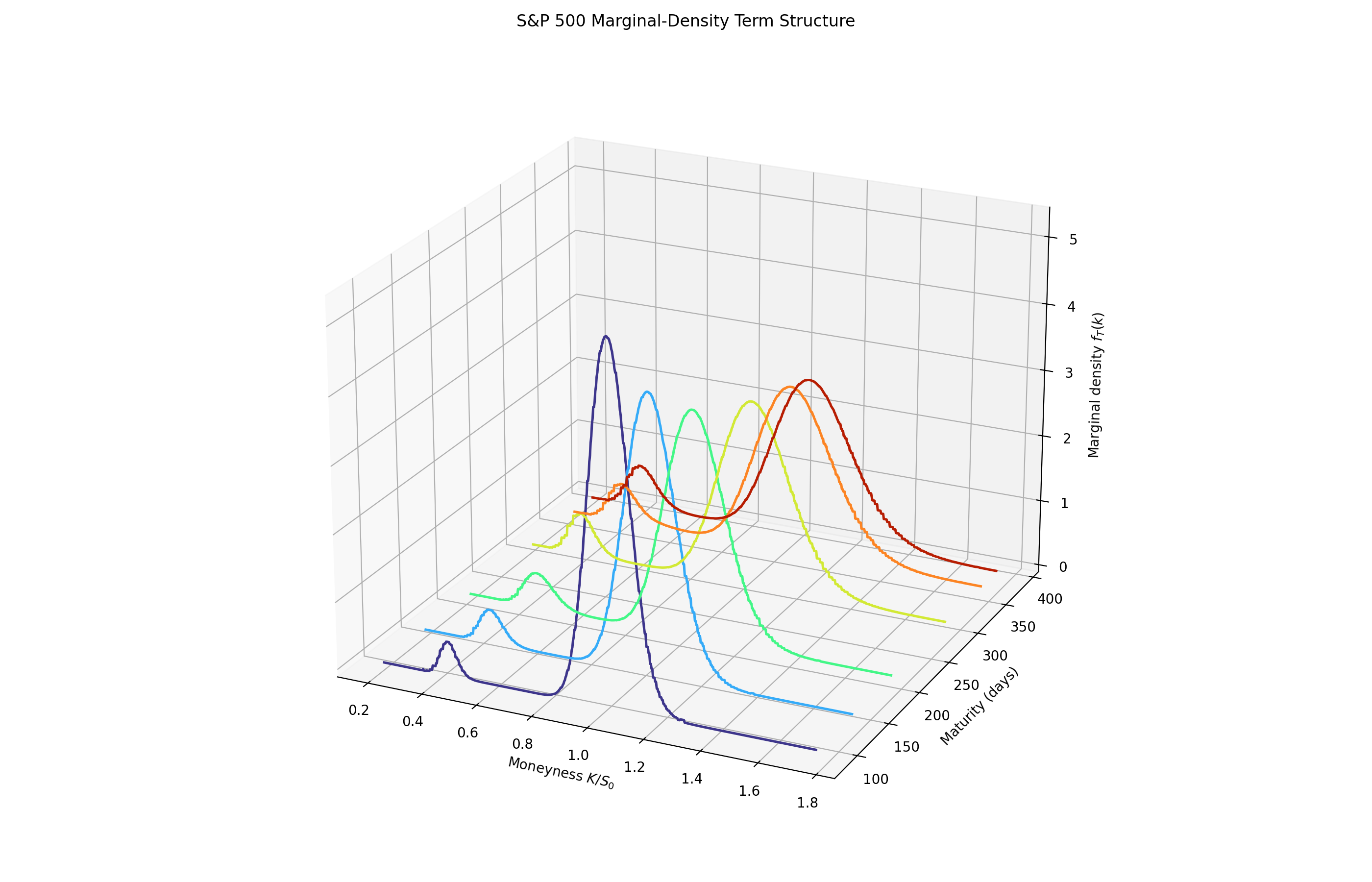}
\caption{S\&P~500 market experiment. The left panel compares our reconstructed marginal densities with the SANOS reference densities at three representative maturities. The right panel shows our reconstructed marginal densities across all six maturities in $(K/S_0,T)$ space, illustrating the term structure of the market-implied marginal densities.}
    \label{fig:spx_density_comparison}
\end{figure}

\section{Conclusion}\label{sec:conclusion}

This paper shows that arbitrage-free call-price grids provide a direct channel to option-implied risk-neutral marginal laws. The construction proposed here uses this channel directly. Starting from discrete arbitrage-free call prices, it constructs full marginal laws on $[0,\infty)$ without requiring an intermediate step like implied-volatility parametrization. The resulting marginals match the input quotes exactly, have unit mass, and satisfy butterfly and calendar no-arbitrage at the distribution level.

The construction uses the slope and curvature structure of call prices directly. At each maturity, selected slopes determine the interval masses, and non-negative piecewise-constant curvature gives closed-form interval curves that match the input prices. Power-law tails complete the unquoted regions while matching the boundary data and residual mass. Across maturities, the construction proceeds sequentially: the earlier curve defines the calendar lower bound, the three-piece form is used only when the two-piece construction is not enough, and tail consistency is enforced through low-dimensional slope selection.

The numerical experiments confirm the intended role of the method as a price-to-marginal layer. The SSVI tests verify exact repricing, off-grid accuracy, marginal-density reconstruction, calendar consistency, and fast runtime. The S\&P~500 market experiment shows that, once an upstream smoother supplies a discrete arbitrage-free input grid, the same construction produces full marginal laws with unit mass up to machine precision, no detected calendar violation across the selected maturities, and marginal-density shapes consistent with the upstream reference.

\newpage
\section{Appendix}

We collect all proofs here, organized in the order the corresponding results appear in the main text.

\subsection{Proof of Theorem~\ref{thm:two-piece}: Two-Piece Interval Construction}\label{sec:app_lemma3}

Fix an interval $[k_i, k_{i+1}]$ with butterfly-free data $(C_{k_i}, C'_{k_i}, C_{k_{i+1}}, C'_{k_{i+1}})$, and write $A_i = C'_{k_{i+1}} - C'_{k_i} \geq 0$, $\Delta k_i = k_{i+1} - k_i$, and $\Delta C = C_{k_{i+1}} - C_{k_i} - C'_{k_i}\Delta k_i$.  By convexity $\Delta C \geq 0$ (the call lies above its tangent), and by the chord bound $\Delta C \leq A_i\Delta k_i$.

If $A_i = 0$ the data are affine on $[k_i, k_{i+1}]$, and the single linear segment $(\beta_i, h_i) = (1, 0)$ matches them; both parts hold trivially.  Assume $A_i > 0$ for the remainder, so that $\Delta C > 0$ and $r_i := \Delta C/(A_i\Delta k_i) \in (0, 1]$.

With the second-piece curvature $c_i = \frac{A_i - h_i\beta_i\Delta k_i}{(1-\beta_i)\Delta k_i}$, the endpoint price condition $C(k_{i+1}) = C_{k_{i+1}}$ is linear in $h_i$ and gives
$$
h_i(\beta_i) = \frac{2\Delta C - A_i(1-\beta_i)\Delta k_i}{\beta_i(\Delta k_i)^2}.
$$

\paragraph{Part (a): the feasible set is non-empty.}
The sign of $h_i(\beta_i)$ is set by its numerator, so $h_i \geq 0 \iff \beta_i \geq 1 - 2r_i$.  Substituting $h_i(\beta_i)$ into $c_i$ gives $A_i - h_i\beta_i\Delta k_i = A_i(2 - 2r_i - \beta_i)$, hence $c_i \geq 0 \iff \beta_i \leq 2 - 2r_i$.  The two constraints confine $\beta_i$ to $[\beta_{\mathrm{LB}}, \beta_{\mathrm{UB}}]$ with
$$
\beta_{\mathrm{LB}} = \max(0,\, 1 - 2r_i), \qquad \beta_{\mathrm{UB}} = \min(1,\, 2 - 2r_i).
$$
For $r_i \in (0, 1)$ this range is non-empty, since $1 - 2r_i < 2 - 2r_i$ and $\beta_{\mathrm{LB}} < 1$; for $r_i = 1$ the one-piece point $\beta_i = 1$, where the second piece has zero width and $h_i = 2\Delta C/(\Delta k_i)^2 \geq 0$, is feasible.  This proves part~(a).

\paragraph{Part (b): explicit closed-form solution.}
For $\beta_i \in (0,1)$ a direct computation gives $c_i - h_i = \frac{A_i(1 - 2r_i)}{\beta_i(1-\beta_i)\Delta k_i}$, so the curvature-jump objective satisfies $F(\beta_i) = (c_i - h_i)^2 \propto [\beta_i(1-\beta_i)]^{-2}$ and is minimized at $\beta_i = \tfrac{1}{2}$.  The optimal feasible choice is therefore
\begin{enumerate}[label=(\roman*)]
    \item $r_i \in [\tfrac{1}{4}, \tfrac{3}{4}]$, so that $\tfrac{1}{2} \in [\beta_{\mathrm{LB}}, \beta_{\mathrm{UB}}]$: set $\beta_i = \tfrac{1}{2}$; when $r_i = \tfrac{1}{2}$ this gives $h_i = c_i = A_i/\Delta k_i$ (uniform curvature, $F = 0$).
    \item $r_i \in (0, \tfrac{1}{4})$, so that $\beta_{\mathrm{LB}} > \tfrac{1}{2}$: set $\beta_i = \beta_{\mathrm{LB}} = 1 - 2r_i$, giving $h_i = 0$ and $c_i = A_i/(2r_i\Delta k_i) > 0$.
    \item $r_i \in (\tfrac{3}{4}, 1)$, so that $\beta_{\mathrm{UB}} < \tfrac{1}{2}$: set $\beta_i = \beta_{\mathrm{UB}} = 2 - 2r_i$, giving $c_i = 0$ and $h_i > 0$.
    \item $r_i = 1$: set $\beta_i = 1$, the one-piece solution $h_i = 2A_i/\Delta k_i > 0$, with $F = 0$ by convention.
\end{enumerate}
In each case $h_i$ follows from the price-matching formula above, and both curvatures are non-negative by the bracket of part~(a), so $C''(k) \geq 0$ on $[k_i, k_{i+1}]$.  Every quantity involves only elementary arithmetic, so the per-interval solution is obtained in $O(1)$ operations.  \qed

\subsection{Proof of Proposition~\ref{prop:powerlaw}: Power-Law Tail Properties}\label{sec:app_powerlaw}

Throughout this proof, fix $k_n > 0$ and let
\[
f(x) = \lambda\,\frac{\alpha-1}{k_n}\left(\frac{k_n}{x}\right)^{\!\alpha},
\qquad x \geq k_n,
\]
with $\lambda := -C'(k_n)\,e^{rT}\in (0,1]$ defined from the observed slope and
$\alpha>2$ to be determined.  All integrals below converge precisely because
$\alpha>2$, and the following elementary identity is used repeatedly: for
any $p>1$,
\begin{equation}\label{eq:tail-integral}
    \int_{k_n}^{\infty} x^{-p}\,dx
    = \left[-\frac{x^{1-p}}{p-1}\right]_{k_n}^{\infty}
    = \frac{k_n^{\,1-p}}{p-1},
\end{equation}
where the upper-limit term vanishes since $1-p<0$.

\begin{proof}[Proof of Proposition~\ref{prop:powerlaw}]
    We verify Properties 1--6 in an order that exposes their logical dependence.
    
    \smallskip
    \paragraph{Property 4 (Non-negativity and strict monotonicity).}
    Writing $f(x) = \lambda(\alpha-1)\,k_n^{\alpha-1}\,x^{-\alpha}$, all factors
    ($\lambda$, $\alpha-1$, $k_n^{\alpha-1}$, $x^{-\alpha}$) are strictly
    positive on $x>k_n>0$, so $f(x)>0$.  Differentiating in $x$,
    \[
    f'(x) = -\alpha\,\lambda(\alpha-1)\,k_n^{\alpha-1}\,x^{-\alpha-1} < 0,
    \]
    hence $f$ is strictly decreasing on $(k_n,\infty)$.
    
    \smallskip
    \paragraph{Property 3 (Tail mass).}
    Pulling the constants out of the integral and applying~\eqref{eq:tail-integral}
    with $p = \alpha > 1$,
    \begin{equation}\label{eq:tail-mass-derivation}
        \int_{k_n}^{\infty} f(x)\,dx
        = \lambda\,\frac{\alpha-1}{k_n}\,k_n^{\alpha}
        \int_{k_n}^{\infty} x^{-\alpha}\,dx
        = \lambda\,\frac{\alpha-1}{k_n}\,k_n^{\alpha}\cdot
        \frac{k_n^{\,1-\alpha}}{\alpha-1}
        = \lambda.
    \end{equation}
    By the no-arbitrage identity $-C'(k_n) = e^{-rT}\,\mathbb{Q}(S_T>k_n)$, the
    tail mass is $\lambda = -C'(k_n)\,e^{rT} = \mathbb{Q}(S_T>k_n)$, strictly
    below one since $k_n > 0$.
    
    \smallskip
    \paragraph{Property 2 (Derivative matching).}
    Multiplying~\eqref{eq:tail-mass-derivation} by $-e^{-rT}$ and using the
    defining identity $\lambda = -C'(k_n)\,e^{rT}$,
    \[
    -e^{-rT}\int_{k_n}^{\infty} f(x)\,dx
    = -e^{-rT}\lambda = C'(k_n).
    \]
    This is the special case at $K=k_n$ of the general Breeden--Litzenberger
    identity $C'(K) = -e^{-rT}\!\int_K^{\infty} f(x)\,dx$, obtained by
    differentiating $C(K) = e^{-rT}\!\int_K^{\infty}(x-K)\,f(x)\,dx$ under the
    integral sign.
    
    \smallskip
    \paragraph{Property 1 (Price matching).}
    Pulling the constants out as before and applying~\eqref{eq:tail-integral} with
    $p = \alpha-1 > 1$ (which requires $\alpha>2$),
    \begin{equation}\label{eq:first-moment-tail}
        \int_{k_n}^{\infty} x\,f(x)\,dx
        = \lambda\,\frac{\alpha-1}{k_n}\,k_n^{\alpha}\,
        \frac{k_n^{\,2-\alpha}}{\alpha-2}
        = \frac{\lambda(\alpha-1)\,k_n}{\alpha-2}.
    \end{equation}
    Subtracting $k_n$ times the tail-mass identity,
    \[
    \int_{k_n}^{\infty}(x-k_n)\,f(x)\,dx
    = \frac{\lambda(\alpha-1)\,k_n}{\alpha-2} - \lambda\,k_n
    = \frac{\lambda\,k_n}{\alpha-2}.
    \]
    Discounting and equating to $C_{k_n}$ yields
    \[
    e^{-rT}\,\frac{\lambda\,k_n}{\alpha-2} = C_{k_n}
    \quad\Longleftrightarrow\quad
    \alpha = 2 + \frac{\lambda\,k_n\,e^{-rT}}{C_{k_n}},
    \]
    which is precisely formula~\eqref{eq:alpha_match}.  Since
    $\lambda > 0$ and $k_n > 0$, the matched value satisfies $\alpha > 2$
    automatically; no additional constraint on the input data is required.
    
    \smallskip
    \paragraph{Property 5 (Finite first moment).}
    The tail contribution to the forward price is precisely the integral
    in~\eqref{eq:first-moment-tail}, which is finite by virtue of
    $\alpha-2 > 0$.
    
    \smallskip
    \paragraph{Property 6 (Asymptotic boundary slope).}
    Repeating the Property~3 derivation, with the lower limit $k_n$ replaced
    by an arbitrary $K \geq k_n$, gives
    \begin{equation}\label{eq:tail-survival}
        \int_K^{\infty} f(x)\,dx
        = \lambda\,\left(\frac{k_n}{K}\right)^{\!\alpha-1},
    \end{equation}
    hence by the Breeden--Litzenberger identity
    \[
    C'(K) = -e^{-rT}\,\lambda\,\left(\frac{k_n}{K}\right)^{\!\alpha-1}.
    \]
    Since $\alpha-1>1$, $(k_n/K)^{\alpha-1}\to 0$ as $K\to\infty$, so
    $\lim_{K\to\infty}C'(K)=0$.  This is the right-side boundary condition
    invoked in Section~\ref{sec:prelim} to certify that the constructed
    density integrates to one.
    
    \smallskip
    \paragraph{Closed-form call price beyond $k_n$.}
    For $K \geq k_n$, the same argument applied to $\int_K^{\infty}(x-K)f(x)\,dx$
    gives
    \[
    C(K) = e^{-rT}\,\frac{\lambda\,K}{\alpha-2}\left(\frac{k_n}{K}\right)^{\!\alpha-1}
    = e^{-rT}\,\frac{\lambda\,k_n}{\alpha-2}\left(\frac{k_n}{K}\right)^{\!\alpha-2}
    = C_{k_n}\left(\frac{k_n}{K}\right)^{\!\alpha-2},
    \]
    so the call price decays as a power of the moneyness ratio with the same
    exponent $\alpha-2$ that drives the Lee wing slope.
\end{proof}


\subsection{Proof of Lemma~\ref{lemma:powerlaw_calendar}: Power-Law Right-Tail Calendar Condition}\label{sec:app_powerlaw_calendar}

Under the power-law tail, the closed-form call price (see Appendix~\ref{sec:app_powerlaw}) is $C(k, T_j) = C(k_n^{T_j}, T_j)(k_n^{T_j}/k)^{\alpha_j - 2}$ for $k \geq k_n^{T_j}$.

With forward moneyness alignment $k_n^{T_j} = e^{-(r-q)\Delta T_j}k_n^{T_{j+1}}$, let $u := k/k_n^{T_{j+1}} \geq 1$. The forward-aligned strike at $T_j$ satisfies $k^{T_j}/k_n^{T_j} = k/k_n^{T_{j+1}} = u$ (since $k^{T_j} = e^{-(r-q)\Delta T_j}k$ and $k_n^{T_j} = e^{-(r-q)\Delta T_j}k_n^{T_{j+1}}$). The calendar condition becomes:
$$
C(k_n^{T_{j+1}}, T_{j+1})\,u^{-(\alpha_{j+1}-2)} \geq e^{-q\Delta T_j}\,C(k_n^{T_j}, T_j)\,u^{-(\alpha_j - 2)}.
$$
Rearranging:
$$
\frac{C(k_n^{T_{j+1}}, T_{j+1})}{e^{-q\Delta T_j}C(k_n^{T_j}, T_j)} \geq u^{(\alpha_{j+1}-2) - (\alpha_j-2)} = u^{\alpha_{j+1} - \alpha_j}.
$$
The left side is the anchor ratio, which is $\geq 1$ by the anchor-point calendar condition (ensured by the interior construction).  For this to hold for all $u \geq 1$:
\begin{itemize}
    \item If $\alpha_{j+1} > \alpha_j$, the right side $u^{\alpha_{j+1}-\alpha_j} \to \infty$ as $u \to \infty$, so the condition fails.
    \item If $\alpha_{j+1} \leq \alpha_j$, the right side $u^{\alpha_{j+1}-\alpha_j} \leq 1$ for all $u \geq 1$, so the condition holds whenever the anchor ratio $\geq 1$.
\end{itemize}
Therefore, if $\alpha_j$ is non-increasing in $j$ and the anchor-point calendar condition holds, the calendar condition is satisfied for all $u \geq 1$. \qed

\subsection{Proof of Proposition~\ref{prop:powerlaw_density}: Density-Anchored Power-Law Tail Properties}\label{sec:app_powerlaw_density}

Throughout this proof fix $T = T_j$, $k_n = k_n^{T_j}$, and drop the maturity index where unambiguous.  Set $\lambda = -C'(k_n)\,e^{rT}$ and $\Pi = e^{rT}\,C(k_n, T)$.  The family is
\[
f(x) = a\,(x - k_n + \mu)^{-\alpha}, \qquad x \geq k_n,
\]
with admissibility $D_0\,\Pi > \lambda^2$.  We use the elementary substitution $u := x - k_n + \mu$, so as $x$ ranges over $[k_n, \infty)$, $u$ ranges over $[\mu, \infty)$, and $du = dx$.  For any $p > 1$,
\begin{equation}\label{eq:shifted-tail-integral}
    \int_\mu^{\infty} u^{-p}\,du = \frac{\mu^{\,1-p}}{p - 1}.
\end{equation}

\begin{proof}[Proof of Proposition~\ref{prop:powerlaw_density}]
    We verify the six properties in dependency order and then establish the necessity of~\eqref{eq:pl3_admissibility} for solvability of~\eqref{eq:pl3_mass}--\eqref{eq:pl3_density}.
    
    \smallskip
    \paragraph{$\alpha > 2$, $\mu > 0$, $a > 0$ (existence under admissibility).}
    From the $\alpha$ formula in~\eqref{eq:pl3_closed_form},
    \[
    \alpha - 2 = \frac{2 D_0\,\Pi - \lambda^2 - 2(D_0\,\Pi - \lambda^2)}{D_0\,\Pi - \lambda^2} = \frac{\lambda^2}{D_0\,\Pi - \lambda^2}.
    \]
    The numerator $\lambda^2 > 0$, and the denominator $D_0\,\Pi - \lambda^2 > 0$ by admissibility~\eqref{eq:pl3_admissibility}; hence $\alpha - 2 > 0$.  Then $\mu = (\alpha - 1)\,\lambda/D_0 > 0$ and $a = D_0\,\mu^{\alpha} > 0$.
    
    \smallskip
    \paragraph{Property 5 (Non-negativity and monotonicity).}
    With $a > 0$, $\mu > 0$, $\alpha > 2$: $x - k_n + \mu \geq \mu > 0$, so $f(x) > 0$.  Differentiating, $f'(x) = -a\,\alpha\,(x - k_n + \mu)^{-\alpha-1} < 0$.
    
    \smallskip
    \paragraph{Property 4 (Tail mass).}
    By~\eqref{eq:shifted-tail-integral} with $p = \alpha > 1$,
    \[
    \int_{k_n}^{\infty} f(x)\,dx = a\,\int_\mu^{\infty} u^{-\alpha}\,du = \frac{a\,\mu^{1-\alpha}}{\alpha - 1}.
    \]
    Substituting $a = D_0\,\mu^\alpha$ and $\mu = (\alpha - 1)\,\lambda/D_0$:
    \[
    \frac{a\,\mu^{1-\alpha}}{\alpha - 1} = \frac{D_0\,\mu^\alpha\,\mu^{1-\alpha}}{\alpha - 1} = \frac{D_0\,\mu}{\alpha - 1} = \frac{D_0\,(\alpha - 1)\,\lambda/D_0}{\alpha - 1} = \lambda.
    \]
    The bound $\lambda \in (0, 1)$ follows from the hypothesis $C'(k_n) \in (-e^{-rT}, 0)$.
    
    \smallskip
    \paragraph{Property 2 (Derivative matching).}
    By Property 4 and the definition of $\lambda$:
    \[
    -e^{-rT}\!\int_{k_n}^{\infty} f(x)\,dx = -e^{-rT}\,\lambda = -e^{-rT}\,(-C'(k_n)\,e^{rT}) = C'(k_n).
    \]
    
    \smallskip
    \paragraph{Property 3 (Density matching).}
    $f(k_n^+) = a\,\mu^{-\alpha} = D_0\,\mu^\alpha\,\mu^{-\alpha} = D_0$.
    
    \smallskip
    \paragraph{Property 1 (Price matching).}
    Writing $x - k_n = u - \mu$ in the price integral and using~\eqref{eq:shifted-tail-integral} for $p = \alpha - 1 > 1$ and $p = \alpha > 1$:
    \[
    \int_{k_n}^{\infty}(x - k_n)\,f(x)\,dx = a\!\int_\mu^{\infty}(u - \mu)\,u^{-\alpha}\,du = a\,\left[\frac{\mu^{2-\alpha}}{\alpha - 2} - \mu\,\frac{\mu^{1-\alpha}}{\alpha - 1}\right] = \frac{a\,\mu^{2-\alpha}}{(\alpha-1)(\alpha-2)}.
    \]
    Substituting $a\,\mu^{2-\alpha} = D_0\,\mu^2 = D_0\,(\alpha - 1)^2\,\lambda^2/D_0^2 = (\alpha - 1)^2\,\lambda^2/D_0$ gives
    \[
    \int_{k_n}^{\infty}(x - k_n)\,f(x)\,dx = \frac{(\alpha - 1)\,\lambda^2}{D_0\,(\alpha - 2)}.
    \]
    From the $\alpha$ formula, $\alpha - 1 = D_0\,\Pi/(D_0\,\Pi - \lambda^2)$ and $\alpha - 2 = \lambda^2/(D_0\,\Pi - \lambda^2)$, so $(\alpha - 1)/(\alpha - 2) = D_0\,\Pi/\lambda^2$.  Multiplying by $e^{-rT}$ on both sides yields
    \[
    e^{-rT}\!\int_{k_n}^{\infty}(x - k_n)\,f(x)\,dx = e^{-rT}\,\frac{\lambda^2}{D_0} \cdot \frac{D_0\,\Pi}{\lambda^2} = e^{-rT}\,\Pi = C(k_n, T).
    \]
    
    \smallskip
    \paragraph{Property 6 (Finite first moment).}
    $\int x\,f\,dx = \int(x - k_n)\,f\,dx + k_n\,\lambda < \infty$ since both terms are finite by the preceding derivations.
    
    \smallskip
    \paragraph{Converse (necessity of admissibility).}
    Suppose $D_0\,\Pi \leq \lambda^2$ and that $(a, \alpha, \mu)$ with $\alpha > 2$, $\mu > 0$, $a > 0$ satisfy~\eqref{eq:pl3_mass}--\eqref{eq:pl3_density}.  The same algebra as in Property 1 gives $\Pi(\alpha - 2) = (\alpha - 1)\,\lambda^2/D_0$, equivalent to $\alpha\,(D_0\,\Pi - \lambda^2) = 2\,D_0\,\Pi - \lambda^2$.
    
    \smallskip\noindent\textbf{Case A:} $D_0\,\Pi = \lambda^2$.  The right-hand side $2 D_0\Pi - \lambda^2 = \lambda^2 \neq 0$ contradicts the left-hand side being zero, so no solution.
    
    \smallskip\noindent\textbf{Case B:} $D_0\,\Pi < \lambda^2$.  Then $D_0\,\Pi - \lambda^2 < 0$.  Solving for $\alpha$ gives $\alpha = (2 D_0\,\Pi - \lambda^2)/(D_0\,\Pi - \lambda^2)$, hence
    \[
    \alpha - 2 = \frac{\lambda^2}{D_0\,\Pi - \lambda^2} < 0,
    \]
    so $\alpha < 2$, contradicting $\alpha > 2$.
    
    In both cases the system has no solution with $\alpha > 2$, $\mu > 0$, $a > 0$, establishing the necessity.
\end{proof}

\subsection{Proof of Lemma~\ref{lemma:mixed_tail_calendar}: Mixed-Tail Serial Calendar Condition}\label{sec:app_mixed_tail_calendar}

Throughout this proof we use the closed-form call price~\eqref{eq:pl3_call_price} for both maturities (with PL$_2$ corresponding to the $\mu = k_n$ specialization).  Let $\gamma_j := e^{(r-q)\Delta T_j}$ and set $s := K - k_n^{T_{j+1}} \geq 0$.  By forward moneyness alignment $k_n^{T_j} = \gamma_j^{-1}\,k_n^{T_{j+1}}$:
\[
K^{T_j} - k_n^{T_j} = \gamma_j^{-1}\,(K - k_n^{T_{j+1}}) = \gamma_j^{-1}\,s.
\]
Hence the arguments of the closed-form call prices simplify to $s + \mu_{j+1}$ on the left and $\gamma_j^{-1}\,s + \mu_j$ on the right.

\begin{proof}[Proof of Lemma~\ref{lemma:mixed_tail_calendar}]
    Substituting~\eqref{eq:pl3_call_price} into the calendar inequality $C(K, T_{j+1}) \geq e^{-q\Delta T_j}\,C(K^{T_j}, T_j)$:
    \[
    e^{-rT_{j+1}}\,\frac{a_{j+1}\,(s + \mu_{j+1})^{2-\alpha_{j+1}}}{(\alpha_{j+1} - 1)(\alpha_{j+1} - 2)} \geq e^{-q\Delta T_j}\,e^{-rT_j}\,\frac{a_j\,(\gamma_j^{-1}\,s + \mu_j)^{2-\alpha_j}}{(\alpha_j - 1)(\alpha_j - 2)}.
    \]
    Using $\gamma_j^{-1}\,s + \mu_j = \gamma_j^{-1}\,(s + \gamma_j\,\mu_j)$, we have $(\gamma_j^{-1}\,s + \mu_j)^{2-\alpha_j} = \gamma_j^{\alpha_j - 2}\,(s + \gamma_j\,\mu_j)^{2-\alpha_j}$.  Also $e^{-q\Delta T_j}\,e^{-rT_j} = e^{-rT_{j+1}}\,\gamma_j$ since $T_{j+1} = T_j + \Delta T_j$.  After dividing both sides by $e^{-rT_{j+1}}/[(\alpha_{j+1} - 1)(\alpha_{j+1} - 2)]$, the inequality becomes
    \begin{equation}\label{eq:pf_mixed_tail_key}
        a_{j+1}\,(s + \mu_{j+1})^{2-\alpha_{j+1}} \geq \gamma_j^{\alpha_j - 1}\,a_j\,\frac{(\alpha_{j+1} - 1)(\alpha_{j+1} - 2)}{(\alpha_j - 1)(\alpha_j - 2)}\,(s + \gamma_j\,\mu_j)^{2-\alpha_j}.
    \end{equation}
    
    Take logarithms and set $h(s) := \log\bigl(\text{LHS of \eqref{eq:pf_mixed_tail_key}}\bigr) - \log\bigl(\text{RHS of \eqref{eq:pf_mixed_tail_key}}\bigr)$.  Using $(2 - \alpha) = -(\alpha - 2)$, the $s$-dependent part of $h(s)$ is
    \[
    g(s) := (\alpha_j - 2)\,\log(s + \gamma_j\,\mu_j) - (\alpha_{j+1} - 2)\,\log(s + \mu_{j+1}),
    \]
    i.e.\ $h(s) = g(s) + (\text{constant independent of } s)$.  Differentiating,
    \[
    g'(s) = \frac{\alpha_j - 2}{s + \gamma_j\,\mu_j} - \frac{\alpha_{j+1} - 2}{s + \mu_{j+1}}.
    \]
    The sign condition $g'(s) \geq 0$ for all $s \geq 0$ is equivalent to
    \[
    (\alpha_j - 2)(s + \mu_{j+1}) \geq (\alpha_{j+1} - 2)(s + \gamma_j\,\mu_j) \qquad \forall s \geq 0,
    \]
    which rearranges to the linear-in-$s$ statement
    \begin{equation}\label{eq:pf_mixed_tail_linear}
        \underbrace{[\alpha_j - \alpha_{j+1}]}_{\geq\, 0 \text{ by (ii)}}\,s + \underbrace{[(\alpha_j - 2)\,\mu_{j+1} - (\alpha_{j+1} - 2)\,\gamma_j\,\mu_j]}_{\geq\, 0 \text{ by (iii)}} \geq 0.
    \end{equation}
    Since both coefficients are non-negative under (ii) and (iii), the inequality holds for every $s \geq 0$; hence $g$, and therefore $h$, is non-decreasing on $[0, \infty)$.
    
    Next we verify that $h(0) \geq 0$ is equivalent to (iv).  At $s = 0$,
    \[
    \text{LHS of \eqref{eq:pf_mixed_tail_key}}\bigr|_{s=0} = a_{j+1}\,\mu_{j+1}^{2-\alpha_{j+1}},
    \]
    \[
    \text{RHS of \eqref{eq:pf_mixed_tail_key}}\bigr|_{s=0} = \gamma_j^{\alpha_j - 1}\,a_j\,\frac{(\alpha_{j+1} - 1)(\alpha_{j+1} - 2)}{(\alpha_j - 1)(\alpha_j - 2)}\,(\gamma_j\,\mu_j)^{2-\alpha_j}.
    \]
    The factor $\gamma_j^{\alpha_j - 1}\,(\gamma_j\,\mu_j)^{2-\alpha_j} = \gamma_j^{\alpha_j - 1}\,\gamma_j^{2-\alpha_j}\,\mu_j^{2-\alpha_j} = \gamma_j\,\mu_j^{2-\alpha_j}$.  Restoring the previously factored normalization $e^{-rT_{j+1}}/[(\alpha_{j+1} - 1)(\alpha_{j+1} - 2)]$ on both sides, the $s = 0$ inequality reads
    \[
    e^{-rT_{j+1}}\,\frac{a_{j+1}\,\mu_{j+1}^{2-\alpha_{j+1}}}{(\alpha_{j+1} - 1)(\alpha_{j+1} - 2)} \geq e^{-rT_{j+1}}\,\gamma_j\,\frac{a_j\,\mu_j^{2-\alpha_j}}{(\alpha_j - 1)(\alpha_j - 2)}.
    \]
    The left-hand side equals $C(k_n^{T_{j+1}}, T_{j+1})$ by~\eqref{eq:pl3_call_price}.  Using $e^{-rT_{j+1}}\,\gamma_j = e^{-rT_j - q\Delta T_j} = e^{-q\Delta T_j}\,e^{-rT_j}$ on the right-hand side yields $e^{-q\Delta T_j}\,C(k_n^{T_j}, T_j)$.  Thus $h(0) \geq 0$ is exactly (iv).
    
    Combining the monotonicity $g(s) \geq g(0)$ (equivalently $h(s) \geq h(0)$) with the anchor $h(0) \geq 0$, we conclude $h(s) \geq 0$ for all $s \geq 0$, which is precisely the pointwise calendar inequality.
\end{proof}

\begin{remark}[Recovery of Lemma~\ref{lemma:powerlaw_calendar}]\label{rmk:mixed_tail_recovers}
    In the (PL$_2$, PL$_2$) sub-case, $\mu_j = k_n^{T_j}$ and $\mu_{j+1} = \gamma_j\,k_n^{T_j} = \gamma_j\,\mu_j$.  Then (iii) reduces to $(\alpha_j - 2)\,\gamma_j\,k_n^{T_j} \geq (\alpha_{j+1} - 2)\,\gamma_j\,k_n^{T_j}$, i.e.\ $\alpha_{j+1} \leq \alpha_j$, which coincides with (ii).  Hence (ii) and (iv) alone are sufficient in the all-two-anchor case, recovering Lemma~\ref{lemma:powerlaw_calendar}.
\end{remark}

\subsection{Proof of Lemma~\ref{lem:left_data_pos} and Proposition~\ref{prop:left_pow}: Power-Law Left Tail Properties}\label{sec:app_left_pow}

Throughout this part, we work on the interval $[0,k_1]$ with $k_1>0$, and the
input data $C(0)=e^{-qT}$, $C'(0)=-e^{-rT}$, $C(k_1)=C_{k_1}$, $C'(k_1)=C'_{k_1}$
are assumed butterfly-free (i.e., consistent with a convex call price curve).
Define
\[
\gamma_L := C_{k_1} - e^{-qT} + k_1\,e^{-rT}, \qquad
\delta_L := C'_{k_1} + e^{-rT},
\]
so that $\gamma_L$ is the excess of $C(k_1)$ over the tangent line
$\ell(k) := e^{-qT} - e^{-rT} k$ at $k=0$, and $\delta_L = C'(k_1) - C'(0)$
is the slope increment over $[0,k_1]$.

\begin{proof}[Proof of Lemma~\ref{lem:left_data_pos}]
    Since $C'$ is non-decreasing (by convexity), $\delta_L = C'(k_1) - C'(0) \geq 0$,
    and $\delta_L = \int_0^{k_1} C''(s)\,ds > 0$ when $C''$ is not identically
    zero on $(0,k_1)$.  By the Taylor remainder identity for a $C^2$ convex function,
    \begin{equation}\label{eq:left_taylor}
        \gamma_L
        = C(k_1) - e^{-qT} + k_1 e^{-rT}
        = C(k_1) - C(0) - C'(0)\,k_1
        = \int_0^{k_1}(k_1 - t)\,C''(t)\,dt,
    \end{equation}
    which is strictly positive under the same hypothesis.  Likewise,
    $\delta_L k_1 = k_1\!\int_0^{k_1} C''(t)\,dt$, so
    \[
    \delta_L k_1 - \gamma_L
    = \int_0^{k_1} t\,C''(t)\,dt > 0,
    \]
    and dividing by $\gamma_L$ gives $k_1\delta_L/\gamma_L > 1$.
\end{proof}

\begin{proof}[Proof of Proposition~\ref{prop:left_pow}]
    Posit $C''(k) = a_L\,k^{\beta_L}$ on $[0,k_1]$ with $\beta_L>-1$ and $a_L>0$;
    the constants will be uniquely determined below.  Integrating once and twice
    under the boundary conditions $C'(0)=-e^{-rT}$ and $C(0)=e^{-qT}$ yields
    \begin{align}
        C'(k) &= -e^{-rT} + \frac{a_L}{\beta_L+1}\,k^{\beta_L+1},
        \label{eq:left_C_prime}\\
        C(k)  &= e^{-qT} - e^{-rT}k + \frac{a_L}{(\beta_L+1)(\beta_L+2)}\,k^{\beta_L+2}.
        \label{eq:left_C}
    \end{align}
    Since $\beta_L+1>0$ and $\beta_L+2>1$, the higher-order terms vanish at
    $k=0$, so the left-boundary identities $C'(0)=-e^{-rT}$ and $C(0)=e^{-qT}$
    are satisfied automatically; this is the first half of Property~1 and
    also Property~6.
    
    \smallskip
    \paragraph{Property 1 (Matching at $k_1$ and uniqueness).}
    Imposing $C'(k_1)=C'_{k_1}$ in~\eqref{eq:left_C_prime} gives
    \begin{equation}\label{eq:left_match_slope}
        \frac{a_L}{\beta_L+1}\,k_1^{\beta_L+1} = \delta_L,
    \end{equation}
    and imposing $C(k_1)=C_{k_1}$ in~\eqref{eq:left_C} gives
    \begin{equation}\label{eq:left_match_price}
        \frac{a_L}{(\beta_L+1)(\beta_L+2)}\,k_1^{\beta_L+2} = \gamma_L.
    \end{equation}
    Dividing~\eqref{eq:left_match_price} by~\eqref{eq:left_match_slope} eliminates
    $a_L$,
    \[
    \frac{k_1}{\beta_L+2} = \frac{\gamma_L}{\delta_L}
    \quad\Longleftrightarrow\quad
    \beta_L = \frac{k_1\,\delta_L}{\gamma_L} - 2,
    \]
    which is uniquely determined by $(\gamma_L,\delta_L,k_1)$ and satisfies
    $\beta_L>-1$ by Lemma~\ref{lem:left_data_pos}.  Substituting back
    into~\eqref{eq:left_match_slope} gives
    \[
    a_L = \frac{(\beta_L+1)\,\delta_L}{k_1^{\beta_L+1}},
    \]
    also uniquely determined, which establishes Property~4.
    
    \smallskip
    \paragraph{Property 2 (Non-negativity of the density).}
    By Lemma~\ref{lem:left_data_pos}, $\delta_L > 0$ and $\beta_L+1>0$, hence
    $a_L > 0$.  Consequently
    \[
    f(k) = e^{rT}\,C''(k) = e^{rT}\,a_L\,k^{\beta_L} \geq 0
    \]
    on $[0,k_1]$.  When $\beta_L \geq 0$, $f$ extends continuously to $k=0$
    with $f(0)=0$ if $\beta_L>0$ and $f(0)=e^{rT}a_L$ if $\beta_L=0$.  When
    $\beta_L\in(-1,0)$, $f(k)\to\infty$ as $k\to 0^+$, but the singularity is
    integrable since $\beta_L+1>0$.
    
    \smallskip
    \paragraph{Property 3 (Left tail mass).}
    Using~\eqref{eq:left_match_slope},
    \[
    \int_0^{k_1}\! f(k)\,dk
    = e^{rT}\int_0^{k_1}\! a_L\,k^{\beta_L}\,dk
    = e^{rT}\,\frac{a_L}{\beta_L+1}\,k_1^{\beta_L+1}
    = e^{rT}\,\delta_L.
    \]
    By the Breeden--Litzenberger identity $\mathbb{Q}(S_T\leq k) = 1 + e^{rT}\,C'(k)$,
    this equals $\mathbb{Q}(S_T\leq k_1) = 1 + e^{rT}C'(k_1) = e^{rT}(C'(k_1)+e^{-rT})
    = e^{rT}\delta_L$, so the left-tail mass equals the cumulative
    distribution at $k_1$.
    
    \smallskip
    \paragraph{Property 5 (Finite first moment).}
    Since $\beta_L+2 > 1$,
    \[
    \int_0^{k_1}\! k\,f(k)\,dk
    = e^{rT}\,a_L\!\int_0^{k_1}\! k^{\beta_L+1}\,dk
    = e^{rT}\,a_L\,\frac{k_1^{\beta_L+2}}{\beta_L+2} < \infty.
    \]
    
    \smallskip
    \paragraph{Property 6 (Boundary slope coherence).}
    The matched values $C(0)=e^{-qT}$ and $C'(0)=-e^{-rT}$ established above
    constitute the left-side boundary conditions used in Section~\ref{sec:prelim}
    to certify unit total mass; combined with $C'(\infty)=0$ from
    Proposition~\ref{prop:powerlaw}, Property~6, the global density
    $\mathbf{1}_{[0,k_1]}f + \mathbf{1}_{(k_1,k_n)}f_{\mathrm{interior}} + \mathbf{1}_{[k_n,\infty)}f_{\mathrm{tail}}$
    integrates to one.
    
    \smallskip
    \paragraph{Compact form of $C$.}
    Substituting $a_L = (\beta_L+1)\delta_L/k_1^{\beta_L+1}$
    into~\eqref{eq:left_C}, the higher-order term simplifies to
    \[
    \frac{a_L\,k^{\beta_L+2}}{(\beta_L+1)(\beta_L+2)}
    = \frac{\delta_L\,k^{\beta_L+2}}{(\beta_L+2)\,k_1^{\beta_L+1}}
    = \frac{\delta_L\,k_1}{\beta_L+2}\left(\frac{k}{k_1}\right)^{\!\beta_L+2}
    = \gamma_L\!\left(\frac{k}{k_1}\right)^{\!\beta_L+2},
    \]
    where the last equality uses $\gamma_L = \delta_L k_1/(\beta_L+2)$ from the
    proof of Property~1.  Hence
    \[
    C(k) = e^{-qT} - e^{-rT}\,k + \gamma_L\!\left(\frac{k}{k_1}\right)^{\!\beta_L+2},
    \qquad k\in[0,k_1],
    \]
    which is the closed form~\eqref{eq:left_pow_compact}.
\end{proof}

\subsection{Proof of Lemma~\ref{lemma:lhs_pow_calendar}: Power-Law Left-Tail Calendar Condition}\label{sec:app_lhs_pow_calendar}

Under the power-law left tail at maturity~$T_j$, $C(k, T_j) = e^{-qT_j} - e^{-rT_j}k + \gamma_{L,j}(k/k_1^{T_j})^{\beta_j+2}$ on $[0, k_1^{T_j}]$.  With the forward factor $\phi_{j+1} = e^{-(r-q)(T_{j+1}-T_j)}$:
\begin{align*}
    H(k) &= \gamma_{L,j+1}\!\left(\frac{k}{k_1^{T_{j+1}}}\right)^{\!\beta_{j+1}+2} - e^{-q\Delta T_j}\,\gamma_{L,j}\!\left(\frac{\phi_{j+1} k}{k_1^{T_j}}\right)^{\!\beta_j+2}.
\end{align*}

Using $k_1^{T_j} = \phi_{j+1}\,k_1^{T_{j+1}}$, the second term simplifies to $e^{-q\Delta T_j}\gamma_{L,j}\phi_{j+1}^{\beta_j+2}(k/k_1^{T_{j+1}})^{\beta_j+2}/\phi_{j+1}^{\beta_j+2} = e^{-q\Delta T_j}\gamma_{L,j}(k/k_1^{T_{j+1}})^{\beta_j+2}$.  Setting $v = k/k_1^{T_{j+1}} \in [0,1]$:
$$
H = \gamma_{L,j+1}\,v^{\beta_{j+1}+2} - e^{-q\Delta T_j}\,\gamma_{L,j}\,v^{\beta_j+2}.
$$

\textbf{Case $\beta_{j+1} > \beta_j$.}  For small $v > 0$, $v^{\beta_{j+1}+2} \ll v^{\beta_j+2}$ (since $\beta_{j+1}+2 > \beta_j+2$), so $H(k) < 0$: calendar arbitrage exists.

\textbf{Case $\beta_{j+1} \leq \beta_j$.}  Write $H = v^{\beta_{j+1}+2}[\gamma_{L,j+1} - e^{-q\Delta T_j}\gamma_{L,j}\,v^{\beta_j - \beta_{j+1}}]$.  For $v \in (0,1]$ and $\beta_j - \beta_{j+1} \geq 0$, the term $v^{\beta_j-\beta_{j+1}} \leq 1$, so $H \geq v^{\beta_{j+1}+2}[\gamma_{L,j+1} - e^{-q\Delta T_j}\gamma_{L,j}]$.  Hence $H \geq 0$ provided $\gamma_{L,j+1} \geq e^{-q\Delta T_j}\gamma_{L,j}$, which is precisely the calendar condition at the anchor point~$k_1^{T_{j+1}}$ (ensured by the interior construction).

Therefore calendar freedom requires $\beta_j$ non-increasing, together with the anchor-point calendar condition. \qed


\subsection{Proof of Lemma~\ref{lemma:cross-maturity}: Cross-Maturity Calendar Slack}\label{sec:app_cross_maturity}

Fix $j \ge 2$ and an observed strike $k$ of $T_j$ satisfying the
forward-alignment hypothesis of~\eqref{eq:S-input}, so that the
forward-aligned strike $\phi_j k$ lies in the interior body of
$T_{j-1}$.  Let $[k_a, k_b]$ be the observed-strike interval of
$T_{j-1}$ containing $\phi_j k$, and write the convex-combination
weight
\[
t := \frac{\phi_j k - k_a}{k_b - k_a} \in [0, 1],
\]
so that $\phi_j k = (1 - t)\,k_a + t\,k_b$.  Convexity of
$C(\cdot, T_{j-1})$ on $[k_a, k_b]$ together with the price-matching
values $C(k_a, T_{j-1}) = C^{T_{j-1}}_{k_a}$ and
$C(k_b, T_{j-1}) = C^{T_{j-1}}_{k_b}$ gives, by Jensen's inequality,
\[
C\!\bigl(\phi_j k,\, T_{j-1}\bigr)
\le (1-t)\,C^{T_{j-1}}_{k_a} + t\,C^{T_{j-1}}_{k_b}
= I_C^{T_{j-1}}\!\bigl(\phi_j k\bigr),
\]
where $I_C^{T_{j-1}}$ denotes the chord interpolation of the input
prices of $T_{j-1}$ on $[k_a, k_b]$.  Multiplying by the positive
constant $e^{-q\,\Delta T_j}$ and applying
condition~\eqref{eq:S-input} at the observed strike $k$ of $T_j$,
\[
L(k) = e^{-q\,\Delta T_j}\,C\!\bigl(\phi_j k, T_{j-1}\bigr)
\le e^{-q\,\Delta T_j}\,I_C^{T_{j-1}}\!\bigl(\phi_j k\bigr)
< C^{T_j}_{k}.
\]
Hence $C^{T_j}_k - L(k) > 0$ at every observed strike of $T_j$
satisfying the forward-alignment hypothesis.  Applying this to the
two endpoints of an interior interval $[k_l, k_r]$ of $T_j$ whose
endpoints both satisfy the forward-alignment hypothesis yields
$\delta_l > 0$ and $\delta_r > 0$, so
$\delta_{\min} = \min(\delta_l, \delta_r) > 0$.
\qed


\subsection{Proof of Lemma~\ref{lemma:three-piece-feasibility}: An Explicit Feasibility Construction}\label{sec:app_three_piece_feasibility}

We establish feasibility by explicitly constructing a feasible solution of the per-interval optimization~\eqref{eq:three-piece-opt}.

In this lemma we show that, for every valid per-interval input, one can always find a three-piece solution that simultaneously matches the endpoint prices and slopes and is both butterfly- and calendar-arbitrage-free.  The construction below provides one such solution and thereby demonstrates nonemptiness of the feasible set.

The aim is to show that the admissible set is non-empty, that is, that the constraints can be met in every case.  The solution returned in practice need not take the closed form constructed here: a strictly convex smoothness objective such as $\mathcal{F}(h_1, h_{\rm mid}, h_3, \beta_1, \beta_2) = (h_1 - h_{\rm mid})^2 + (h_{\rm mid} - h_3)^2$, which penalizes the middle-piece curvature approaching zero, yields a better-behaved minimizer that typically lies in the interior of the feasible set with all curvatures positive.

Define the parameters of the explicit feasible solution:
\begin{equation}\label{eq:corner-point}
    \begin{aligned}
        h_1^\star &:= h_1^{\rm cal}, \\
        \beta_1^\star &:= \frac{\rho\,A_l}{h_1^{\rm cal}\,\Delta k_l}, \\
        \beta_2^\star &:= \frac{\rho^{\,2}\,A_l}{(1 - \rho)\,h_1^{\rm cal}\,\Delta k_l}, \\
        h_{\rm mid}^\star &:= 0, \\
        h_3^\star &:= \frac{(1 - \rho)^2}{\rho^2}\,h_1^{\rm cal}.
    \end{aligned}
\end{equation}
We verify that these satisfy every constraint of~\eqref{eq:three-piece-opt}.

\textbf{Slope equality~\eqref{eq:three-slope}.}  The middle piece contributes zero.  The first-piece term equals
\[
h_1^\star\,\beta_1^\star\,\Delta k_l
= h_1^{\rm cal} \cdot \frac{\rho\,A_l}{h_1^{\rm cal}\,\Delta k_l} \cdot \Delta k_l
= \rho\,A_l.
\]
The third-piece term equals
\[
h_3^\star\,\beta_2^\star\,\Delta k_l
= \frac{(1-\rho)^2}{\rho^2}\,h_1^{\rm cal}
\cdot \frac{\rho^{\,2}\,A_l}{(1-\rho)\,h_1^{\rm cal}\,\Delta k_l}
\cdot \Delta k_l
= (1-\rho)\,A_l.
\]
The total is $\rho A_l + (1-\rho) A_l = A_l$, matching the right-hand side of~\eqref{eq:three-slope}.

\textbf{Price equality~\eqref{eq:three-h}.}  The middle piece contributes zero.  The first-piece term evaluates to
\[
h_1^\star\,\beta_1^\star\,\Bigl(1 - \tfrac{\beta_1^\star}{2}\Bigr)\,(\Delta k_l)^2
= \rho\,A_l\,\Delta k_l - \frac{\rho^{\,2}\,A_l^{\,2}}{2\,h_1^{\rm cal}}
\]
after substituting $\beta_1^\star = \rho A_l/(h_1^{\rm cal}\,\Delta k_l)$ and simplifying.  The third-piece term evaluates to
\[
h_3^\star\,\frac{(\beta_2^\star)^{\,2}}{2}\,(\Delta k_l)^2
= \frac{\rho^{\,2}\,A_l^{\,2}}{2\,h_1^{\rm cal}}
\]
by direct algebra.  Their sum is $\rho A_l \Delta k_l$.  Recalling $\rho := P_l/(A_l \Delta k_l)$, this equals $P_l$, matching the right-hand side of~\eqref{eq:three-h}.

\textbf{Butterfly freedom and calendar floor.}  Under the standing hypotheses ($A_l > 0$, $\rho \in (0,1)$, $\delta_{\min} > 0$), both arguments of the max in~\eqref{eq:h1-cal} are strictly positive, so $h_1^\star = h_1^{\rm cal} > 0$ and $h_3^\star = \tfrac{(1-\rho)^2}{\rho^2}\,h_1^{\rm cal} > 0$ since $\rho \in (0,1)$; the middle-piece curvature equals zero and trivially satisfies butterfly non-negativity.  The calendar floor $h_1^\star \ge h_1^{\rm cal}$ holds with equality.

\textbf{Width bounds.}  Both $\beta_1^\star, \beta_2^\star > 0$ by construction.  Their sum is
\[
\beta_1^\star + \beta_2^\star
= \frac{\rho\,A_l}{h_1^{\rm cal}\,\Delta k_l}
+ \frac{\rho^{\,2}\,A_l}{(1-\rho)\,h_1^{\rm cal}\,\Delta k_l}
= \frac{\rho\,A_l}{(1-\rho)\,h_1^{\rm cal}\,\Delta k_l}.
\]
Since $h_1^{\rm cal} \ge \rho A_l/((1-\rho)\,\Delta k_l)$ (the first argument of the max in~\eqref{eq:h1-cal}), this sum satisfies $\beta_1^\star + \beta_2^\star \le 1$, with strict inequality whenever the second argument $\rho^2 A_l^2/(2\,\delta_{\min})$ strictly dominates and equality (degenerate middle piece, $\beta_1^\star + \beta_2^\star = 1$) when both arguments of the max are equal.  In particular, $\beta_1^\star, \beta_2^\star \in (0, 1)$.

This verifies that~\eqref{eq:corner-point} satisfies all constraints of the per-interval optimization~\eqref{eq:three-piece-opt}, so the feasible set is non-empty.

The interval curve~\eqref{eq:corner-point} is exactly the one constructed in the proof of Proposition~\ref{prop:calendar-reduction}\textup{(b)}, evaluated at $h_1 = h_1^{\rm cal}$ (compare~\eqref{eq:corner-point} with~\eqref{eq:dip-params}), so by Proposition~\ref{prop:calendar-reduction}\textup{(c)} it is pointwise calendar-arbitrage-free, $C^\star(k) \geq L(k)$ on $[k_l, k_r]$.  Hence, for every per-interval input $(C_{k_l}, C_{k_r}, C'_{k_l}, C'_{k_r})$ producing $\rho \in (0,1)$, $A_l > 0$, $\delta_{\min} > 0$, and $\Delta k_l > 0$, the interval curve $C^\star$ is a feasible solution of~\eqref{eq:three-piece-opt} that is butterfly-free, matches the endpoint slopes and prices, respects the calendar floor and the width bounds, and is pointwise calendar-arbitrage-free.
\qed

\subsection{Proof of Proposition~\ref{prop:calendar-reduction}: Calendar Reduction}\label{sec:app_calendar_reduction}

\textbf{(a) Chord domination.}
Set $t := (k - k_l)/\Delta k_l \in [0,1]$. By convexity of $L$,
\[ L(k) = L\bigl((1-t)\,k_l + t\,k_r\bigr) \leq (1-t)\,L(k_l) + t\,L(k_r). \]
The chord interpolates linearly: $I_C(k) = (1-t)\,C_{k_l} + t\,C_{k_r}$. Subtracting the two,
\[ I_C(k) - L(k) \geq (1-t)\,\delta_l + t\,\delta_r \geq \delta_{\min}. \]
The argument uses only the convexity of $L$ and is independent of the construction class on $[k_l, k_r]$.

\textbf{(b) Gap bound under the geometric lower bound.}
Fix any first-piece curvature $h_1$ satisfying the width-feasibility bound~\eqref{eq:h1-geom}, and construct one explicit three-piece interval curve $C^\star$ as follows.  Write $u = k - k_l$ and $D(u) := I_C(k_l + u) - C^\star(k_l + u)$ for the gap below the chord.  Since $I_C$ has constant slope $(C_{k_r} - C_{k_l})/\Delta k_l = C'_{k_l} + \rho A_l$ (using $C_{k_r} - C_{k_l} = C'_{k_l}\,\Delta k_l + P_l$ and $P_l = \rho A_l\,\Delta k_l$),
\[
D'(u) = \rho A_l - \int_0^u (C^\star)''(k_l + s)\,ds,
\qquad D''(u) = -(C^\star)''(k_l + u) \le 0,
\]
so $D$ is concave with $D(0) = 0$ and $D'(0) = \rho A_l > 0$, and its maximum is attained where $\int_0^u (C^\star)'' = \rho A_l$.

Set $h_{\mathrm{mid}} = 0$, the smallest curvature allowed by butterfly freedom, so that $D' = 0$ on the middle piece and $D$ is constant there.  With $h_1$ given and $h_{\mathrm{mid}} = 0$ fixed, slope and price matching are two equations leaving one degree of freedom, which we use to place the right endpoint of piece~1, $a_1 = \beta_{l_1}\,\Delta k_l$, at $u^\star = \rho A_l/h_1$, so that $\beta_{l_1} = \rho A_l/(h_1\,\Delta k_l)$.  Slope matching then forces $h_3\,\beta_{l_2}\,\Delta k_l = (1-\rho)\,A_l$, and price matching, through the identity $P_l = \int_0^{\Delta k_l} (\Delta k_l - u)\,(C^\star)''(k_l + u)\,du$, forces $h_3\,\beta_{l_2}^2\,(\Delta k_l)^2 = \rho^2 A_l^2/h_1$.  Solving,
\begin{equation}\label{eq:dip-params}
    \beta_{l_1} = \frac{\rho\,A_l}{h_1\,\Delta k_l}, \qquad
    h_3 = \frac{(1-\rho)^2}{\rho^2}\,h_1, \qquad
    \beta_{l_2} = \frac{\rho^2\,A_l}{(1-\rho)\,h_1\,\Delta k_l},
\end{equation}
all positive, so butterfly freedom holds.  The outer widths satisfy $\beta_{l_1} + \beta_{l_2} = \rho A_l/((1-\rho)\,h_1\,\Delta k_l) \le 1$ exactly when $h_1 \ge \rho A_l/((1-\rho)\,\Delta k_l)$, the width-feasibility bound~\eqref{eq:h1-geom}; the middle width $1 - \beta_{l_1} - \beta_{l_2}$ is then nonnegative, vanishing only at equality, where the interval curve reduces to two pieces.

It remains to evaluate $D$, with $a_1 = \beta_{l_1}\,\Delta k_l$ and $a_2 = (1 - \beta_{l_2})\,\Delta k_l$.  On $[0, a_1]$, integrating $(C^\star)'' = h_1$ gives $D(u) = \rho A_l\,u - h_1 u^2/2$, with maximum $\rho^2 A_l^2/(2 h_1)$ at $u^\star = \rho A_l/h_1 = a_1$.  On $[a_1, a_2]$, $(C^\star)'' = 0$ gives $D' = 0$, so $D = \rho^2 A_l^2/(2 h_1)$.  On $[a_2, \Delta k_l]$, $(C^\star)'' = h_3$ gives $D(\Delta k_l) = 0$ by the price relation.  Hence
\[
\sup_{k \in [k_l,k_r]} \bigl[I_C(k) - C^\star(k)\bigr] = \frac{\rho^2\,A_l^2}{2\,h_1},
\qquad \rho := \frac{P_l}{A_l\,\Delta k_l},
\]
which proves part~\textup{(b)}.

\textbf{(c) Combination.}
For the interval curve $C^\star$ of part~(b), part~(a) gives
\[
C^\star(k) - L(k) = \bigl[I_C(k) - L(k)\bigr] - \bigl[I_C(k) - C^\star(k)\bigr]
\geq \delta_{\min} - \frac{\rho^2 A_l^2}{2\,h_1},
\]
which is non-negative iff $h_1 \geq \rho^2 A_l^2/(2\,\delta_{\min})$, the slack bound~\eqref{eq:h1-slack}.  Together with the width-feasibility bound~\eqref{eq:h1-geom} under which (b) was derived, this gives the calendar floor $h_1 \geq h_1^{\rm cal}$ of~\eqref{eq:h1-cal}.  In the three-piece construction of Section~\ref{sec:three-piece}, $h_1$ is the free local first-piece curvature, enforced as a single-interval bound constraint in the per-interval optimization~\eqref{eq:three-piece-opt}; under the floor $h_1 \geq h_1^{\rm cal}$, the interval curve $C^\star$ is calendar-arbitrage-free on $[k_l, k_r]$.  \qed

\subsection{Proof of Theorem~\ref{thm:unified}: Unified Interior Feasibility}\label{sec:app_unified}

    We proceed by single induction on the maturity index $j = 1, \ldots, m$.  Throughout, write $\Delta T_j := T_j - T_{j-1}$ and define the calendar lower bound $L(k) := e^{-q\Delta T_j}\,C(e^{-(r-q)\Delta T_j}k,\,T_{j-1})$, which is convex.
    
    \medskip
    \noindent\textbf{Outer base case ($j = 1$).}
    No prior maturity exists, so calendar freedom is vacuous.  Butterfly freedom on each interior interval $[k_i^{T_1}, k_{i+1}^{T_1}]$ is delivered by the two-piece body construction of Theorem~\ref{thm:two-piece}, and on the right tail $(k_{n_1}^{T_1}, \infty)$ by the power-law construction of Proposition~\ref{prop:powerlaw}.  The left region $[0, k_1^{T_1}]$ is handled either by the two-piece body or by the power-law left tail of Proposition~\ref{prop:left_pow}.  All pieces match prices and slopes exactly at observed strikes by construction.

    \medskip
    \noindent\textbf{Outer inductive step ($j \ge 2$).}
    Assume the construction at $T_1, \ldots, T_{j-1}$ is complete, butterfly-free, price-matching at every observed strike, and calendar-free.  We construct the surface at $T_j$ on three regions: the left region $[0, k_1^{T_j}]$, the interior intervals $[k_l^{T_j}, k_{l+1}^{T_j}]$ for $l = 1, \ldots, n_j - 1$, and the right tail $(k_{n_j}^{T_j}, \infty)$.

    Left region $[0, k_1^{T_j}]$.  Mirroring the right-tail treatment below, this region is constructed as a power-law left tail (Proposition~\ref{prop:left_pow}), with calendar freedom from Lemma~\ref{lemma:lhs_pow_calendar} and the boundary slope fixed by the left-tail program of Section~\ref{sec:global_slope}; it requires only that the left segment be non-degenerate, the hypothesis of Proposition~\ref{prop:left_pow} that $C''$ does not vanish identically on $(0, k_1)$.  When that hypothesis fails, that is, when the left segment is exactly linear, the left region is that linear segment itself, which Lemma~\ref{lemma:base-case} (Appendix~\ref{app:degenerate_left_tail}) shows to be butterfly-free and calendar-free, with $C(\cdot,T_j) = L$ on $[0,k_1]$.

    Interior intervals $[k_l^{T_j}, k_{l+1}^{T_j}]$ for $l = 1, \ldots, n_j - 1$.  By the input calendar condition~\eqref{eq:S-input} and Lemma~\ref{lemma:cross-maturity}, the cross-maturity slack
    $\delta_{\min, l} := \min(C_{k_l} - L(k_l),\, C_{k_{l+1}} - L(k_{l+1}))$
    is strictly positive at every observed strike of $T_j$ whose forward-aligned strike falls inside the interior body of $T_{j-1}$.  Combined with $\Delta k_l > 0$, $A_l > 0$, and $\rho_l := P_l/(A_l\,\Delta k_l) \in (0, 1)$ (the standing geometric hypotheses on the input data), Lemma~\ref{lemma:three-piece-feasibility} guarantees that the per-interval optimization~\eqref{eq:three-piece-opt} has a non-empty feasible set; the appendix proof~(Section~\ref{sec:app_three_piece_feasibility}) constructs an explicit feasible solution~\eqref{eq:corner-point} satisfying butterfly non-negativity, slope and price matching, the calendar floor $h_1^\star \ge h_1^{\rm cal}$, and the width bounds $\beta_1^\star + \beta_2^\star \le 1$.  This feasible solution is exactly the interval curve $C^\star$ at $h_1 = h_1^{\rm cal}$, so by Proposition~\ref{prop:calendar-reduction}(c) it is pointwise calendar-arbitrage-free, $C(k, T_j) \ge L(k)$ on $[k_l^{T_j}, k_{l+1}^{T_j}]$.  This argument applies independently to each interior interval $l$: the per-interval constructions decouple completely (no inheritance, no forward-looking constraint), so the inner loop over $l$ can run in parallel.

    Right tail $(k_{n_j}^{T_j}, \infty)$.  The power-law construction of Proposition~\ref{prop:powerlaw}, with the tail index $\alpha_j$ selected by the global slope-coherence quadratic program of Section~\ref{sec:global_slope} subject to $\alpha_j \le \alpha_{j-1}$, yields butterfly freedom and pointwise calendar freedom on $(k_{n_j}^{T_j}, \infty)$ by Lemma~\ref{lemma:powerlaw_calendar}.
    
    \medskip
    By induction on $j$, the construction extends to all maturities $T_1, \ldots, T_m$, and the resulting call surface is butterfly-free and calendar-free everywhere.

\subsection{Proof of Lemma~\ref{lemma:base-case}: Degenerate left segment under calendar alignment}\label{app:degenerate_left_tail}

The left region $[0,k_1]$ is built as the power-law left tail of Proposition~\ref{prop:left_pow} whenever the left segment is non-degenerate.  The only remaining case is the degenerate one, where the left segment has no curvature to fit and the power-law tail is replaced by a linear segment; we show this segment is calendar-arbitrage-free with no further construction.

\begin{proof}
Since $C'' \equiv 0$, the curve $C(\cdot,T_j)$ on $[0,k_1]$ is the straight line determined by the boundary data $C(0,T_j) = e^{-qT_j}$ and $C'(0,T_j) = -e^{-rT_j}$, namely $C(k,T_j) = e^{-qT_j} - e^{-rT_j}k$.  Butterfly freedom is immediate, since $C'' = 0 \ge 0$, and the endpoint price and slope are matched by construction.

It remains to check the calendar floor $L$.  Degeneracy gives $\gamma_{L,j} = 0$, so the anchor-point calendar condition $\gamma_{L,j} \ge e^{-q\Delta T_j}\gamma_{L,j-1}$ (the condition of Lemma~\ref{lemma:lhs_pow_calendar} for the pair $(T_j, T_{j-1})$, ensured at the observed strike $k_1$ by the interior construction) reads $0 \ge e^{-q\Delta T_j}\gamma_{L,j-1}$.  Since $\gamma_{L,j-1} \ge 0$ (a butterfly-free input is convex, so its call price lies above the intrinsic-value line at the anchor strike), this forces $\gamma_{L,j-1} = 0$.  A zero excess means the convex price also meets that line at $k_1$; lying above its $k = 0$ tangent and below its chord on $[0,k_1]$, both of which here coincide with the line, it equals the line throughout, so the previous maturity's left segment is linear and $L$ is itself linear on $[0,k_1]$.  Both $C(\cdot,T_j)$ and $L$ are straight lines, and they agree in value and slope at $k = 0$, where the boundary identities give $g(0) = g'(0) = 0$ for $g := C(\cdot,T_j) - L$.  A linear function with zero value and slope at a point vanishes identically, so $g \equiv 0$, that is $C(k,T_j) = L(k)$ on $[0,k_1]$, which satisfies the calendar constraint $C(\cdot,T_j) \ge L$.
\end{proof}

\bibliographystyle{unsrtnat}
\bibliography{references}

\section*{Disclaimer}
This paper was prepared for informational purposes with contributions from the Quantitative Trading \& Research team of JPMorgan Chase \& Co. This paper is not a product of the Research Department of JPMorgan Chase \& Co. or its affiliates. Neither JPMorgan Chase \& Co. nor any of its affiliates makes any explicit or implied representation or warranty and none of them accept any liability in connection with this paper, including, without limitation, with respect to the completeness, accuracy, or reliability of the information contained herein and the potential legal, compliance, tax, or accounting effects thereof. This document is not intended as investment research or investment advice, or as a recommendation, offer, or solicitation for the purchase or sale of any security, financial instrument, financial product or service, or to be used in any way for evaluating the merits of participating in any transaction.

\end{document}